\tikzset{ball/.style={circle, draw, fill=black,inner sep=0pt, minimum width=4pt}}
\pgfplotsset{compat = newest}
\tikzset{label/.style = {inner sep=1pt, fill=white}}
\tikzset{nd/.style={inner sep=1pt}}
\tikzset{>=Latex}
\tikzset{arc/.style = {->, semithick, >=Latex}}
\theoremstyle{plain}
\newtheorem{thm}{Theorem}[section]
\theoremstyle{definition}
\newtheorem{defn}[thm]{Definition}
\newtheorem{lem}[thm]{Lemma}
\newtheorem{rmk}{Remark}
\newtheorem{conj}[thm]{Conjecture}
\newtheorem{prop}[thm]{Proposition}
\newcommand{\real}{\mathbb{R}}
\newcommand{\exptu}{\mathbb{U}}
\newcommand{\arc}[3][]{ #2 \xlongrightarrow{#1} #3}
\newcommand{\edge}[3][]{#2 \xleftrightarrow[]{#1} #3}
\newcommand{\toprove}[1]{\todo[color=green]{#1}}
\title{Preference graphs: a combinatorial tool for game theory}
\author{Oliver Biggar and Iman Shames}
\date{}
\begin{document}

\begin{abstract}
    The \emph{preference graph} is a combinatorial representation of the structure of a normal-form game. Its nodes are the strategy profiles, with an arc between profiles if they differ in the strategy of a single player, where the orientation indicates the preferred choice for that player. We show that the preference graph is a surprisingly fundamental tool for studying normal-form games, which arises from natural axioms and which underlies many key game-theoretic concepts, including \emph{dominated strategies} and \emph{strict Nash equilibria}, as well as classes of games like \emph{potential games}, \emph{supermodular games} and \emph{weakly acyclic games}. The preference graph is especially related to \emph{game dynamics}, playing a significant role in the behaviour of \emph{fictitious play} and the \emph{replicator dynamic}. Overall, we aim to equip game theorists with the tools and understanding to apply the preference graph to new problems in game theory.
\end{abstract}

\maketitle

\section{Introduction}
In recent game theory research, one combinatorial object has been an increasing focus of attention. This object---the \emph{preference graph}\footnote{Also called the \emph{response graph} or \emph{better-response graph}. See Section~\ref{sec:nomenclature} for a discussion of naming.} of a normal-form game \citep{biggar_graph_2023,biggar_attractor_2024}---is a simple concept with highly non-trivial connections to the behaviour of dynamics in games. Conceptually, the preference graph is a \emph{combinatorial skeleton} of the game, which describes the underlying structure of the player's preferences~\citep{pangallo_best_2019}.

The preference graph, and its cousin the \emph{best-response graph}, have appeared increasingly often in the literature, playing key roles in a diverse collection of results in algorithmic game theory, machine learning and economics. In algorithmic game theory, the best-response graph was first used to define the Price of Sinking \citep{goemans_sink_2005,mirrokni_convergence_2004}, a non-Nash equilibrium-based alternative to the Price of Anarchy \citep{roughgarden2010algorithmic} which gives very different predictions in the context of game dynamics \citep{kleinberg_beyond_2011}. The preference and best-response graphs were later used in the complexity theory of succinct games, modelling the behaviour of internet routing protocols like BGP \citep{fabrikant2008complexity,mirrokni_complexity_2009}. The preference graph is also the basis of the \emph{harmonic decomposition of games} \citep{candogan_flows_2011}. More recently, the preference graph has been applied to machine learning, with \cite{omidshafiei_-rank_2019,omidshafiei_navigating_2020} using it as a foundation for exploring the space of games and computing the `evolutionary strength' of strategies. \cite{biggar_replicator_2023,biggar_attractor_2024} used the preference graph to analyse the attractors of the \emph{replicator dynamic} \citep{taylor_evolutionary_1978}, providing the first characterisation of the unique attractor of zero-sum games. 
Finally, the preference and best-response graphs have been the subject of three different recent proposals for reevaluating the foundations of game theory and game dynamics \citep{papadimitriou_game_2019,pangallo_best_2019,biggar_graph_2023}.



The definition of the preference graph is simple. The nodes of the graph are the strategy profiles of the game, with an arc between two profiles if they differ in the strategy of a single player. The direction of this arc indicates the preferred option for that player. Sinks of the preference graph are exactly the pure Nash equilibria (PNEs). Consequently, preference graphs have a natural associated solution concept: the \emph{sink strongly connected components}, called the \emph{sink equilibria} \citep{goemans_sink_2005,papadimitriou_game_2019,biggar_graph_2023}.

\begin{figure}[h]
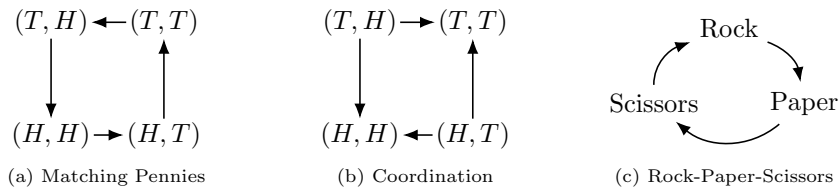

    \centering
    \begin{subfigure}{.3\textwidth}
        \centering
        \includestandalone{figs/MP_labels}
        \caption{Matching Pennies}
        \label{fig:MP 1}
    \end{subfigure}
    \quad
    \begin{subfigure}{.3\textwidth}
        \centering
        \includestandalone{figs/CO_labels}
        \caption{Coordination}
        \label{fig:CO 1}
    \end{subfigure}
    \quad
    \begin{subfigure}{.3\textwidth}
        \centering
        \includestandalone{figs/RPS}
        \caption{Rock-Paper-Scissors}
        \label{fig: RPS 1}
    \end{subfigure}
    \caption{Examples of preference graphs: the $2\times 2$ games Matching Pennies (\ref{fig:MP 1}) and Coordination (\ref{fig:CO 1}), and the symmetric zero-sum game Rock-Paper-Scissors (\ref{fig: RPS 1}). In symmetric zero-sum games the preference graph has a special form, where nodes are strategies and arcs represent the winning strategy in any pair. See Section~\ref{sec: symmetric zero-sum}.}
    \label{fig:examples}
\end{figure}

In Figure~\ref{fig:examples}, we show the preference graphs of some games: Coordination, a $2\times 2$ game where both players aim to match the choice of the other; Matching Pennies, a $2\times 2$ game where one player aims to match, the other to mismatch; and Rock-Paper-Scissors. What is striking about Figure~\ref{fig:examples} is how clearly each preference graph reflects the rules of each game. Preference graphs sidestep a key challenge of payoff matrices \citep{von2007theory}---instead of needing to `invent' numerical payoffs to model these games mathematically, the preference graph emerges directly from their descriptions. 

The preference graph is \emph{so} simple---``almost familiar," in the words of \citeauthor{papadimitriou_game_2019}---that it suggests a fundamental role in game theory, reflected by these varied modern applications. This is the topic of this paper: \textbf{what does the preference graph have to say about game theory?} We explore this idea in two parts: first, we lay out the philosophical motivations and basic properties of the preference graph (Section~\ref{sec: graphs}), and second, we investigate how the preference graph has been used in the past, by examining the literature and rephrasing classical results using the preference graph. We find that the preference graph plays a role---often only implicitly---in many important structural and dynamic results across the history of game theory (Sections~\ref{sec: structure} and \ref{sec: dynamics}). 
These goals---especially the second---are similar to those of a survey. Broadly speaking, our paper \emph{is} a survey, though it is unusual in two ways. First, we do present some new results (such as Lemmas~\ref{shapley uniqueness} and \ref{jordan uniquenesss}) or new proofs of known results, where these provide insight. Secondly, many of the papers we discuss do not explicitly mention the preference graph, or use graph-theoretic terminology at all~\citep[for instance][]{shapley_topics_1964,milgrom_rationalizability_1990,monderer_fictitious_1997,monderer_potential_1996,krishna_convergence_1998}. In these cases, our `survey' is counterfactual\footnote{As a consequence, our survey differs in an another way, which is that a `comprehensive' coverage of the topic is impossible. It is likely that the results we mention represent only a part of the full relationship between the preference graph and normal-form games.}, asking how these results \emph{could have} been stated or proved, had the preference graph been explicitly used. Overall, we hope to uncover the preference graph's place in game theory, allowing game theorists to apply it to the next generation of problems.


In Section~\ref{sec: graphs} we define the preference graph, and discuss its naming, size, sparsity and computational complexity. We then explain how the preference graph arises from a natural combination of two game theory axioms: strategic equivalence \citep{moulin1978strategically} and ordinal equivalence \citep{cruz_ordinal_2000}, in a manner reminiscent of the axiomatic development of Von Neumann-Morgenstern utility \citep{von2007theory} (Section~\ref{sec: axioms}). In doing so, we shed light on an important question: how does the preference graph compare to the very similar---and somewhat better-studied---\emph{best-response graph}? This is the subject of the study of \cite{pangallo_best_2019}, who demonstrate that the length and frequency of best-response cycles approximate the empirical behaviour of a broad class of dynamics. We find, somewhat counter-intuitively, that many dynamics have a stronger theoretical relationship with the preference graph than the best-response graph, including `best-response' dynamics like fictitious play (FP). 

In Section~\ref{sec: structure}, we examine how the preference graph influences game structure. Dominated strategies and pure Nash equilibria are both preference graph properties, which motivate several well-studied classes including \emph{dominance-solvable games} \citep{moulin_dominance_1984}, \emph{congestion games} \citep{rosenthal_class_1973}, \emph{potential games} \citep{monderer_potential_1996}, \emph{supermodular games} \citep{topkis1979equilibrium} and \emph{weakly acyclic games} \citep{young_evolution_1993}. These classes turn out to be united by a common thread: their key properties derive from the preference graph. Potential games, for instance, are a famous class of games which capture many economic examples, including congestion games. In introducing potential games, \cite{monderer_potential_1996} are particularly interested in a further generalisation called \emph{ordinal potential games}. This turns out to be expressed very neatly by the preference graph: a game is ordinal potential if and only if its preference graph is \emph{acyclic} \citep{biggar_graph_2023}. Note that this characterisation explains the existence of PNEs in potential games in a simple way: acyclic graphs have sinks, and sinks of the preference graph are PNEs. Similar stories exist for supermodular games \citep{milgrom_rationalizability_1990}, with the associated ordinal concept of \emph{quasi-supermodular games} being represented by the preference graph (Section~\ref{sec:supermodular}). Weakly acyclic and dominance-solvable games too are characterised by properties of their preference graphs, and their defining properties often admit graphical proofs (Section~\ref{sec: weakly acyclic} and \ref{sec:dominance}).

However, the field where the preference graph appears most prolifically is \emph{game dynamics}, which we explore in Section~\ref{sec: dynamics}. A particularly apt example concerns the history of \emph{fictitious play} (FP) \citep{brown1949some}, the best-studied game dynamic (Section~\ref{sec: FP}). In a now-famous paper, \cite{shapley_topics_1964} gave the first example of a game where FP did not converge to an equilibrium. Most surprisingly, Shapley defined this game not by a payoff matrix, but rather by a specific ordering of the strategies for each player---that is, a \emph{preference graph}---demonstrating that the non-convergence of FP in this game was a property of its graph. In Shapley's game, all paths in the preference graph lead to a single sink equilibrium, which is a cycle of length six (Figure~\ref{fig:shapley}). A number of follow-up works explore Shapley's game, understanding the significant constraints of the preference graph on the behaviour of FP~\citep{monderer_fictitious_1997,berger_browns_2007,berger_two_2007,krishna_convergence_1998}.
A similar game was introduced by \cite{jordan_three_1993}, where, again, all FP paths lead to a cycle of length six. This game is sometimes called \emph{Mismatching Pennies}~\citep{sandholm2010population, kleinberg_beyond_2011}. 
The non-convergence to Nash equilibria in Jordan's game is a key component of the work of \citeauthor{kleinberg_beyond_2011}, who show that the social welfare on this 6-cycle in Jordan's game can be arbitrarily higher than at the Nash equilibrium. Potential, supermodular, weakly acyclic and dominance-solvable games are also closely connected to dynamics, with potential games explicitly motivated by convergence of FP \citep{monderer_potential_1996,monderer_fictitious_1996}. \citeauthor{monderer_fictitious_1996} conjectured that \emph{ordinal} potential games also guarantee convergence of FP, a fact eventually proved by \cite{berger_browns_2007,berger_two_2007}, establishing that the two key properties of potential games---existence of PNEs and convergence of FP---derive from acyclicity of the preference graph.

In evolutionary game theory \citep{smith1973logic}, the best-studied dynamic is the \emph{replicator dynamic} \citep{taylor_evolutionary_1978}, which we discuss in Section~\ref{sec: replicator}. Despite its different origins, the replicator dynamic---like FP---turns out to also possess a close relationship with the preference graph \citep{papadimitriou_game_2019,biggar_replicator_2023,biggar_attractor_2024}. Any attractor of the replicator dynamic must contain a sink equilibrium, and a subgame is an attractor if and only if it is a sink equilibrium \citep{gaunersdorfer_fictitious_1995,biggar_replicator_2023}. In zero-sum games the unique attractor is characterised by the unique\footnote{The uniqueness of Nash equilibria in zero-sum games has a preference graph analogue: the sink equilibrium is unique in zero-sum games \citep{biggar_graph_2023}.} sink equilibrium~\citep{biggar_attractor_2024}.
This reflects non-trivial connections between equilibria and preference graphs in zero-sum games: the unique Nash equilibrium lies `inside' the unique sink equilibrium, and when the Nash equilibrium is fully mixed the preference graph is strongly connected \citep{biggar_attractor_2024} (Section~\ref{sec: zero-sum}). This structural relationship is mirrored in the behaviour of dynamics: \cite{papadimitriou2016nash} prove that the replicator dynamic is Poincar\'e recurrent in zero-sum games with interior equilibria, meaning that almost all mixed profiles lie on almost-periodic trajectories. The preference graph is a motivation for the proposal of \emph{sink chain components} as a solution concept for game dynamics \citep{papadimitriou_game_2019}, inspired by the Fundamental Theorem of Dynamical Systems \citep{conley_isolated_1978}. Specifically, \citeauthor{papadimitriou_game_2019} propose using a Markov chain on the preference graph to approximate the behaviour of the replicator dynamic. The idea of approximating dynamics by Markov chains on the preference or best-response graphs is a recurring theme in the literature \citep{young_evolution_1993,goemans_sink_2005,fabrikant2008complexity,omidshafiei_-rank_2019}, with roots as far back as \cite{cournot1838recherches}. We discuss this in Section~\ref{sec: markov}. Finally in Section~\ref{sec: other dynamics} we discuss how these findings generalise to broader classes of dynamics. In Section~\ref{sec: conclusions} we conclude and discuss a variety of interesting research questions inspired by the preference graph.

The scope of each theoretical field is defined by the mathematical tools we employ. Concepts like \emph{strong connectedness} and \emph{cycles} are not just ordinal or combinatorial; they are specifically \emph{graph-theoretic}. 
Graph theory, however, is not a standard part of the mathematical repertoire of game theory. The contribution of the preference graph is not just its results, but the graph theory terminology and proofs through which they are achieved. This paper contains many diagrammatic proofs, demonstrating how graph-theoretic reasoning can provide intuitive and succinct arguments (see for instance Theorems~\ref{dominance theorem} and Lemmas~\ref{shapley uniqueness} and \ref{jordan uniquenesss}). Finally, preference graphs categorise games by their qualitative properties. Often, the same stock of preference graphs---such as the $2\times 2$ and $2\times 3$ graphs (Figures~\ref{fig:2x2} and \ref{fig: 2x3}); Shapley's graph (Figure~\ref{fig:shapley}); Jordan's graph (Figure~\ref{fig:jordan}); and the Inner and Outer Diamond graphs (Figure~\ref{fig:inner diamond and BR})---serve as typical examples or counterexamples to hypotheses about games. Beyond the $2\times 2$ case, these graphs are not well-known, but we believe they are highly valuable tools for practising game theorists, and we hope to raise awareness of their importance.

\section{Preliminaries} \label{sec: preliminaries}
In this paper we study normal-form games with finite players and strategy sets \citep{myerson1997game}. This is defined by a collection of $N$ players, strategy sets $S_1,S_2,\dots,S_N$ and a utility function $u : \prod_{i=1}^{N}S_i \to \real^N$, typically represented by payoff matrices. We call a game with $|S_1|= m_1, |S_2| = m_2, \dots, |S_N| = m_N$ an $m_1\times m_2 \times \dots \times m_N$ game. For simplicity, the set $\prod_{i=1}^{N}S_i$ of tuples of strategies by $Z$. One such tuple (an element of $Z$) is a \emph{strategy profile}. We use the notation $p_{-i}$ to denote an assignment of strategies to all players other than $i$. 
This allows us to write a strategy profile $p$ as a combination of a strategy $s\in S_i$ for player $i$ and the remaining strategies $p_{-i}$, which we denote by $(s ; p_{-i})$. A \emph{subgame} is the game resulting from restricting each player to a subset of their strategies.

One strategy (strictly) \emph{dominates} another if it gives a (strictly) higher payoff for every choice of strategies for the other players. One of the simplest concepts of rationality in games is that dominated strategies should not be played~\citep{myerson1997game}. Deleting dominated strategies from the game can lead to new dominated strategies; iterating this process results in a subgame with no such strategies. If only one profile survives this process, the game is called \emph{dominance-solvable}, with the resultant profile the \emph{dominance solution} (necessarily a PNE) \citep{myerson1997game,moulin_dominance_1984}.

A \emph{mixed strategy} is a distribution over a player's pure strategies, and a \emph{mixed profile} is an assignment of a mixed strategy to each player. Where disambiguation is needed, we will refer to strategies as \emph{pure strategies} and profiles as \emph{pure profiles} to distinguish them the mixed cases. For a mixed profile $x$, we write $x^i$ for the distribution over player $i$'s strategies, and $x^i_s$ for the $s$-entry of player $i$'s distribution, where $s\in S_i$. The set of mixed profiles on a game is given by $\prod_{i=1}^N \Delta_{|S_i|}$ where $\Delta_{|S_i|}$ are the simplices in $\real^{|S_i|}$. We denote $\prod_{i=1}^N \Delta_{|S_i|}$ simply by $X$, and call $X$ the \emph{strategy space} of the game. This is the mixed analogue of $Z$. 

The utility function can be naturally extended to mixed profiles, by taking the expectation over strategies. We denote this by $\exptu : X \to \real^N$, defined by
\begin{equation} \label{def: expected utility}
\exptu_i(x) = \sum_{s_1\in S_1} \sum_{s_2\in S_2} 
    \dots\sum_{s_N\in S_N} \left(\prod_{j = 1}^N x^j_{s_j}\right) u_i(s_1,s_2,\dots,s_N)
\end{equation}
A \emph{Nash equilibrium} of a game is a mixed profile $x$ where no player can increase their expected payoff by a unilateral deviation of strategy. More precisely, a Nash equilibrium is a point where for each player $i$ and strategy $s\in S_i$,
\begin{equation} \label{def: nash equilibrium}
\exptu_i(x) \geq \exptu_i(s;x_{-i})
\end{equation}
A Nash equilibrium is \emph{pure} (a PNE) if $x$ is a pure profile.

A (directed) \emph{graph} is a pair $G = (N,A)$, where $N$ is a finite set of \emph{nodes} and $A\subseteq N\times N$ is a finite set of \emph{arcs}. We denote an arc $(x,y)\in A$ by $\arc{x}{y}$. 
A \emph{path} is a sequence $v_1,v_2,\dots,v_n$ of distinct nodes connected by arcs $\arc{v_i}{v_{i+1}}$. If there is also an arc $\arc{v_n}{v_1}$, we call this a \emph{cycle}. A graph with no cycles is called \emph{acyclic}. If there is a path from a node $v$ to a node $w$ we say \emph{$w$ is reachable from $v$}. Reachability defines a preorder on the nodes of a graph, called the \emph{reachability order}. Two nodes are equivalent under this preorder if both are reachable from each other. The equivalence classes of this relation are called the \emph{strongly connected components}. The minimal elements of this order we call the \emph{sink components}. 

 

\section{Preference graphs} \label{sec: graphs}

The preference graph is a graph defined on the strategy profiles of the game. Two profiles are \emph{$i$-comparable} if they differ in the strategy of player $i$ only, and are \emph{comparable} if they are $i$-comparable for some $i$. 

\begin{defn}[Preference graph]
    The \emph{preference graph} is the graph whose nodes are strategy profiles, with an arc between $i$-comparable profiles in the direction of the higher payoff for player $i$.
\end{defn}

\begin{figure}
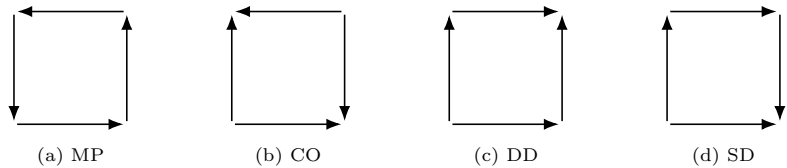

    \centering
    \begin{subfigure}{.2\textwidth}
        \centering
        \includestandalone{figs/MP}
        \caption{MP}
        \label{fig:MP}
    \end{subfigure}
    \quad
    \begin{subfigure}{.2\textwidth}
        \centering
        \includestandalone{figs/CO}
        \caption{CO}
        \label{fig:CO}
    \end{subfigure}
    \quad
    \begin{subfigure}{.2\textwidth}
        \centering
        \includestandalone{figs/DD}
        \caption{DD}
        \label{fig:DD}
    \end{subfigure}
    \quad
    \begin{subfigure}{.2\textwidth}
        \centering
        \includestandalone{figs/SD}
        \caption{SD}
        \label{fig:SD}
    \end{subfigure}
    \caption{The preference graphs of $2\times 2$ games \citep{biggar_graph_2023}: Matching Pennies (MP,~\ref{fig:MP}), Coordination (CO,~\ref{fig:CO}), Single-Dominance (SD,~\ref{fig:SD}), Double-Dominance (DD,~\ref{fig:DD}).}
    \label{fig:2x2}
\end{figure}

The arcs of the preference graph indicate the \emph{preferred} profile of the (unique) player for whom the two profiles are comparable. Figure~\ref{fig:2x2} depicts the preference graphs of the $2\times 2$ games, of which there are four up to isomorphism. These simple games demonstrate the structure of the preference graph, and form the building blocks for more-complex examples. Sinks of the preference graphs are pure Nash equilibria (PNEs). Three of these games have at least one PNE. Of these three, two are dominance-solvable (Single-Dominance, Figure~\ref{fig:SD} and Double-Dominance, Figure~\ref{fig:DD}). The third (Coordination, Figure~\ref{fig:CO}) has two PNEs, demonstrating the phenomenon of equilibrium selection. The final graph (Matching Pennies, Figure~\ref{fig:MP}) is a 4-cycle, possessing no PNEs, and instead has a fully-mixed equilibrium. 

These graphs give us a qualitative representation of the preference structure of the game. However, sometimes cardinal payoff information may also be needed. There is a canonical way to add this data to the preference graph, which results in the \emph{weighted preference graph} \citep{biggar_graph_2023}.

\begin{defn}[Weighted preference graph] \label{def: weighted preference}
    The \emph{weighted preference graph} is a graph whose nodes are strategy profiles and where there is an arc between $i$-comparable profiles, weighted by the \emph{difference in utility} between those profiles for player $i$.
\end{defn}

If player $i$ receives identical payoff in two $i$-comparable payoffs $p$ and $q$, then there are a pair of arcs $\arc{p}{q}$ and $\arc{q}{p}$ in the preference graph, which we often represent by an  undirected edge $\edge{p}{q}$. This is equivalent to a zero-weighted arc in the weighted preference graph. We call a game `\emph{strict}'~\citep{biggar_graph_2023} if the payoff difference between $i$-comparable profiles is never zero. Games are `generically'~\citep{fudenberg1991game} strict\footnote{Interestingly, while many different `genericity' assumptions about games appear in the literature, several are equivalent to strictness in this sense, including \cite[Section 2.3]{shapley_topics_1964}, \cite[Section 3]{monderer_fictitious_1997} and \cite[Def. 4]{berger_two_2007}.}. For simplicity of presentation, the examples in this paper will usually focus on strict games.

Large preference games can have many arcs, and so can be somewhat cumbersome to draw. In these cases, the \emph{reduced preference graph} can be useful.
\begin{defn} \label{def: reduced preference}
    The \emph{reduced preference graph} of the game is the subgraph of the preference graph where there is an arc $\arc{p}{q}$ between $i$-comparable profiles $p$ and $q$ only if there is no $i$-comparable profile $r$ with arcs $\arc{p}{r}$ and $\arc{r}{q}$.
\end{defn}

The reduced preference graph is constructed by taking the transitive reduction of a player's preferences while holding other players' strategies fixed, hence the name. It has the same reachability order as the preference graph---so the sink equilibria are the same---but much fewer arcs. See Figure~\ref{fig:berger counter}. This has important consequences for computing sink equilibria (see Section~\ref{sec: properties}).

\subsection{Axioms} \label{sec: axioms}

The question of representation of games has its roots in the work of \cite{von2007theory}, who cemented the utility-based approach through a now-famous axiomatic framework. They proved that if players can compare the value of outcomes and \emph{mixtures} of outcomes, then their valuations can be represented by real numbers up to an affine transformation, a concept now called \emph{affine equivalence} or \emph{Von Neumann-Morgenstern equivalence}. However, these assumptions are quite strong, and even \citeauthor{von2007theory} questioned whether a more lenient representation was possible, which was more robust to the payoff assumptions of the model~\citep[Ch. 3]{von2007theory}.

Different approaches have been taken in this direction in game theory, with probably the best-known line of research being \emph{ordinal games} \citep{cruz_ordinal_2000,durieu2008ordinal}. Ordinal games propose that only the \emph{ordering} over outcomes should matter, and defines games up to equivalence of these orderings.  This viewpoint appears very similar to the motivations for preference graphs. There turns out to be a simple and intuitive relationship between ordinal games and preference graphs: \emph{preference equivalence}, the defining equivalence of the preference graphs, derives from a combination of ordinal equivalence and another important game-theoretic equivalence: \emph{strategic equivalence}.

Originating in \cite{moulin1978strategically}, strategic equivalence is a fundamental part of modern game theory. Many game-theoretic concepts---including Nash equilibria, correlated equilibria, fictitious play, the replicator dynamics, and so on---are `invariants' of strategic equivalence, meaning that it is the `strategic' structure of the game which gives rise to these phenomena. We call these \emph{strategic} properties of the game. In the same way, we call invariants of preference equivalence \textbf{preference properties} of the game. Pure NEs, dominance of strategies and sink equilibria are all preference properties. Strategic equivalence differs from affine equivalence by considering only the \emph{relative payoff difference} between \emph{comparable} profiles, on the basis that these profiles are the only ones which can be unilaterally chosen between. This difference is analogous to that between preference graphs and ordinal games: ordinal games model the preferences over \emph{all profiles}, not only comparable ones. By presenting the definitions appropriately, we can make this relationship clear.
\begin{figure}
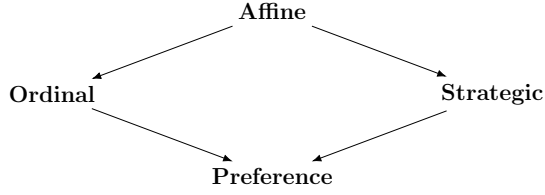

    \centering
    \includestandalone[width=0.6\linewidth]{figs/invariants}
    \caption{The relationship between affine, ordinal, strategic and preference equivalence, from \cite{biggarthesis}. The arrows denote that one is a sub-relation of the other.}
    \label{fig:equivalence classes}
\end{figure}
\begin{defn}
    Let $u$ and $v$ be games with the same set of players and strategies. Then $u$ and $v$ are\footnote{The definition of affine equivalence originates in \cite{von2007theory}; strategic equivalence in \cite{moulin1978strategically}; ordinal equivalence in \cite{durieu2008ordinal} and preference equivalence in \cite{biggar_graph_2023}.}:

\begin{tabular}{l l}
     \emph{Affine-equivalent}\footnote{Affine equivalence is typically defined by $u$ and $v$ differing by an affine function. This definition is equivalent, through the famous result of \cite{von2007theory}.} if & $u_p(z)\geq u_p(w) \Leftrightarrow v_p(z) \geq v_p(w)$\\ & for all \textbf{mixed profiles} $z,w \in \Delta$\\
    \emph{Strategic-equivalent} if & $u_p(x;z_{-p}) \geq u_p(y;z_{-p}) \Leftrightarrow v_p(x; z_{-p}) \geq v_p(y; z_{-p})$\\ & for all \textbf{mixed strategies} $x,y\in \Delta_p$ and $z_{-p}\in \Delta_{-p}$.\\
    \emph{Ordinal-equivalent} if & $u_p(a)\geq u_p(b) \Leftrightarrow v_p(a) \geq v_p(b)$ \\ & for all \textbf{pure profiles} $a,b \in Z$. \\
    \emph{Preference-equivalent} if & $u_p(s; r_{-p}) \geq u_p(t; r_{-p}) \Leftrightarrow v_p(s; r_{-p}) \geq v_p(t; r_{-p})$\\ & for all \textbf{pure strategies} $s,t\in S_p$ and $r_{-p}\in Z_{-p}$. 
\end{tabular}
for all players $p$.
\end{defn}
In other words, ordinal equivalence restricts affine equivalence by considering only pure profiles, while strategic equivalence restricts affine by considering only comparable profiles. Preference equivalence combines both. 
\begin{rmk}
    These equivalences are highly important for reasoning about different solution concepts which may be invariants of different classes. (Relative) social welfare, for instance, is an affine invariant, while Nash equilibria are a strategic invariant. These two concepts, somewhat counter-intuitively, are essentially independent---given any game and a specific profile $p$, there is a strategic-equivalent game (and hence with identical Nash equilibria) where that profile is socially optimal\footnote{The same construction works also for Pareto-optimality, which is an ordinal invariant.} \citep{biggarthesis}. The famous Prisoner's Dilemma~\cite{myerson1997game} is an example of this phenomenon. The proof is as follows: choose two players $i$ and $j$, and add some large constant $M$ (larger than the social welfare in any profile) to their payoff $u_i$ ($u_j$) whenever the other players play $p_{-i}$ ($p_{-j}$). This is strategic-equivalent---the relative payoff for any player and unilateral deviation are unchanged. However, $p$ has social welfare at least $2M$, while all other profiles have social welfare less than $2M$.
\end{rmk}

The following theorem relates strategic and preference equivalence to the weighted and unweighted preference graphs.

\begin{thm}[\cite{biggar_graph_2023,candogan_flows_2011}]
    Let $u$ and $v$ be games with the same set of players and strategies. Then $u$ and $v$ are preference-equivalent if and only if their preference graphs are equal, and strategic-equivalent if and only if their weighted preference graphs are equal up to rescaling by a positive constant for each player.
\end{thm}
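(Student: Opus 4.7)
The plan is to prove each equivalence separately, with the preference-graph statement essentially following from unpacking definitions, and the weighted statement relying on the classical linear-affine characterisation of strategic equivalence plus linearity of expected utility.

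For the first equivalence, I would observe that the arcs of the preference graph between $i$-comparable profiles $(s;r_{-i})$ and $(t;r_{-i})$ are determined entirely by the inequality $u_i(s;r_{-i}) \geq u_i(t;r_{-i})$ (with equality producing arcs in both directions, i.e.\ an undirected edge). The defining condition of preference equivalence is precisely that this inequality holds for $u$ iff it holds for $v$, for every player $p$, every pair $s,t \in S_p$ and every $r_{-p}\in Z_{-p}$. Thus preference equivalence holds iff the set of arcs agrees, i.e.\ the preference graphs are equal. Both directions are immediate.

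For the second equivalence, the easy direction is the reverse: if the weighted preference graphs agree up to a positive per-player constant $c_p$, then for every player $p$ and every $i$-comparable pair of pure profiles $(s;r_{-p})$, $(t;r_{-p})$ we have
\begin{equation*}
    v_p(s;r_{-p}) - v_p(t;r_{-p}) = c_p\bigl(u_p(s;r_{-p}) - u_p(t;r_{-p})\bigr).
\end{equation*}
By linearity of the expected-utility extension~\eqref{def: expected utility}, differences along mixed $i$-comparable pairs are convex combinations of these pure differences in both $s$--$t$ and in $r_{-p}$. Hence $\exptu^v_p(x;z_{-p}) - \exptu^v_p(y;z_{-p}) = c_p(\exptu^u_p(x;z_{-p}) - \exptu^u_p(y;z_{-p}))$ for all mixed $x,y,z_{-p}$. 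Since $c_p>0$, the induced orderings on mixed strategies agree, so $u$ and $v$ are strategic-equivalent.

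The forward direction of the weighted statement is the main obstacle: strategic equivalence is an ordering condition, and we must extract a uniform positive multiplicative constant from it. For this I would invoke (or re-derive) the Moulin characterisation \citep{moulin1978strategically}: $u$ and $v$ are strategic-equivalent iff for each player $p$ there exist $c_p>0$ and a function $f_p(z_{-p})$ with $v_p(z) = c_p u_p(z) + f_p(z_{-p})$. The standard argument fixes any $z_{-p}$, applies the von Neumann--Morgenstern representation theorem to the restriction of $v_p(\cdot;z_{-p})$ relative to $u_p(\cdot;z_{-p})$ on $\Delta_p$ (affine equivalence on a simplex gives $v_p(\cdot;z_{-p}) = c_p(z_{-p})u_p(\cdot;z_{-p}) + f_p(z_{-p})$), and then shows $c_p(z_{-p})$ is independent of $z_{-p}$ by comparing two choices $z_{-p}, z_{-p}'$ and using that the ordering of a nontrivial pair of mixed strategies in $\Delta_p$ must be preserved along the mixture between $z_{-p}$ and $z_{-p}'$. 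Once this form is in hand, the weighted preference-graph weights of $v$ are exactly $c_p$ times those of $u$, completing the proof. The subtlety to be careful about is the degenerate case in which player $p$'s payoff is constant on some slice---this needs a separate check (any $c_p>0$ works on that slice, so the constant may be inherited from a non-degenerate slice), and requires that the players are non-trivial (each has a pair of strategies distinguished by some $r_{-p}$) to fix $c_p$ uniquely.
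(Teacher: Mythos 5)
Your proposal is correct and follows the same route as the sources the paper cites for this theorem (the paper itself states it without proof): the unweighted half is definition-unpacking, and the weighted half reduces to Moulin's characterisation $v_p = c_p u_p + f_p(z_{-p})$ together with multilinearity of $\exptu$. One small wording issue: the mixed payoff differences are \emph{linear} combinations of the pure arc weights with coefficients summing to zero (after anchoring at a reference strategy), not convex combinations, but this does not affect the argument since only linearity and the positivity of $c_p$ are used.
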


In other words, the weighted preference graph is a natural representation of strategic equivalence and the preference graph is its `ordinalisation'. 
The fact that the weighted preference graph is a representation of games up to strategic equivalence is quite useful, and is the basis of techniques like the \emph{harmonic decomposition of games} \citep{candogan_flows_2011,hwang_strategic_2020}. 

There is another important equivalence in game theory, which is rarely explicitly discussed \citep{shapley_topics_1964}: two games are equivalent if they are the same up to reordering of players and strategies. Unlike payoff matrices, the graph representations of preference and strategic games automatically capture this equivalence!

\begin{thm}[\cite{biggar_graph_2023}] \label{reconstruction theorem}
    Given a graph $G$ with unlabelled nodes, we can---in linear time---determine if $G$ is the preference graph of a game. If it is, we obtain a labelling of the nodes by strategy profiles which is unique up to renaming of strategies and players.
\end{thm}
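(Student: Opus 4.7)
The plan is to reduce the problem to recognising the underlying undirected structure as a Hamming graph (a Cartesian product of complete graphs), and then checking that the orientations are consistent with a payoff ordering on each line. First I would form the underlying undirected graph $\overline{G}$ by replacing each arc, or each pair of opposing arcs, by a single edge. If $G$ is the preference graph of some game on $S_1\times\dots\times S_N$, then $\overline{G}$ must be the Hamming graph $H(m_1,\dots,m_N)$, since two profiles are $i$-comparable iff they differ in exactly one coordinate, and there are $m_i-1$ such $i$-comparable profiles for every fixed $p_{-i}$.

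Next I would invoke the known linear-time algorithms for recognising Cartesian products of graphs (and in particular Hamming graphs), which both decide whether $\overline{G}$ is of this form and output the unique prime factorisation $\overline{G}=K_{m_1}\,\square\,\dots\,\square\,K_{m_N}$. This factorisation partitions the edges of $\overline{G}$ into colour classes indexed by the factors; each colour class decomposes into vertex-disjoint cliques, which I would interpret as the $i$-lines of the putative preference graph (the sets of profiles agreeing on $p_{-i}$). Conversely, the factorisation assigns to each node an $N$-tuple of coordinates, uniquely determined up to permuting the factors (renaming players) and relabelling the vertices of each $K_{m_i}$ (renaming strategies), which is exactly the freedom claimed in the statement.

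Once the colour classes and labelling are in hand, I would verify the orientations. For each $i$-line, the arcs of $G$ must form a transitive tournament, possibly with two-way arcs standing for ties: this is precisely the condition that the restriction of $u_i$ to the line be a real-valued function on $S_i$, well-defined up to monotone transformation. Transitivity can be checked in time proportional to the number of arcs in each line, so the total cost is linear in $|G|$. If every line passes the test, arbitrary payoffs $u_i$ realising the orderings can be assigned (for example by numbering each line's strategies by rank), giving a game whose preference graph is $G$; if any line fails, $G$ is not a preference graph.

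The main obstacle is the first step: linear-time Hamming-graph recognition and factorisation is a nontrivial result, so I would cite the relevant algorithmic literature rather than reprove it, and concentrate the self-contained argument on why the Hamming structure plus line-transitivity is \emph{necessary and sufficient}, and why the resulting profile labelling is unique up to the stated symmetries (which follows from the uniqueness of prime factorisation of connected graphs under the Cartesian product together with the fact that each $K_{m_i}$'s vertices are interchangeable).
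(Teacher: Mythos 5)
Your proposal is correct. Note that the paper does not actually prove this theorem here --- it cites \cite{biggar_graph_2023} and only records the underlying characterisation as Lemma~\ref{lem: preference graphs as graphs} (a product labelling under which every line, obtained by fixing $s_{-i}$, is a total order). Your argument establishes exactly that characterisation, so it is consistent with the paper; the difference is in how the product structure and its uniqueness are obtained. The cited source recovers the lines elementarily: any clique of the underlying undirected graph lies in a single line (three profiles pairwise differing in one coordinate must all differ in the \emph{same} coordinate), so the lines are the maximal cliques, and two edges at a common vertex lie in the same line iff they span a triangle; this yields the coordinatisation and its uniqueness directly. You instead invoke linear-time Hamming-graph recognition and Sabidussi--Vizing unique prime factorisation under the Cartesian product, which delivers the running time and the ``unique up to renaming players and strategies'' clause as black boxes --- a heavier hammer, but a sound one, and the two routes are morally the same since the triangle test is what powers Hamming recognition anyway. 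One step you should tighten: verifying that an orientation of $K_k$ is a total preorder is not linear in the number of arcs if done by naive transitivity checking. Use the score-sequence characterisation instead: compute for each vertex the number of vertices it strictly beats, then make a single pass over the arcs confirming that every strict arc goes from a vertex of larger score to one of smaller score and every tie joins vertices of equal score. With that, the whole procedure is linear in the size of $G$ as claimed.
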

Consequently, we can rephrase our previous equivalence in terms of graph isomorphism: two games are preference equivalent \emph{up to renaming of players and strategies} if and only if their preference graphs are \emph{isomorphic}. The same holds for strategic equivalence, \emph{mutatis mutandis}. It is this fact which allows us to draw preference graphs without labelling the nodes by strategy profiles; it is the graph structure---up to isomorphism---which matters. Similarly, while we generally use a `rectangular' style in figures to mimic the matrix structure (Figures~\ref{fig:2x2}, \ref{fig:shapley-square}, \ref{fig: 2x3}, \ref{fig:jordan-square}), we are free to present them in whatever way best displays their structure (see Figures~\ref{fig:shapley-cycle} and \ref{fig:jordan-cycle}). Put differently, the preference graph is truly a graph, not only an ordinal representation of payoff matrices.

The graph structure opens up new research avenues for modelling preferences. An important challenge in game theory occurs when we do not even know a player's preferences over all pairs of alternatives\footnote{One simple approach for modelling indifference of player $i$ between options $a$ and $b$ is to assume \emph{indifference}, that is assuming that $u_i(a) = u_i(b)$. The danger of this approach is that it is assuming far too much; we only know that $i$ prefers \emph{either} $a$ or $b$ ($u_i(a)\geq u_i(b)$ or $u_i(a)\leq u_i(b)$), but we have stated the much stronger fact that $i$ prefers \emph{both} $a$ and $b$ ($u_i(a)\geq u_i(b)$ and $u_i(a)\leq u_i(b)$). The model would actually be more general if we were to choose arbitrarily between $u_i(a)\geq u_i(b)$ and $u_i(a)\leq u_i(b)$.}. Utility functions, by their nature, cannot represent \emph{partially-ordered} preferences of players, and must assume a total order. The preference graph, however, has no such restriction---by omitting arcs, we can easily model such preferences. This could be an important direction for future work (Section~\ref{sec: conclusions}).


\subsection{Sink equilibria}

Pure Nash equilibria (PNEs) are preference properties of a game, represented in the preference graph as sinks \citep{biggar_graph_2023}. This observation motivates the \emph{sink equilibria}.

\begin{defn}
    The \emph{sink equilibria} are the strongly connected components of the preference graph which have no outgoing arcs\footnote{The name `sink equilibria' has been used to describe the sink strongly connected components of both the preference graph and the best-response graph, with the original usage \citep{goemans_sink_2005} applying to the best-response graph. See Section~\ref{sec: better and best} and \ref{sec:nomenclature}.}.
\end{defn}

From the preference graph perspective, sink equilibria are a natural candidate for the solution concept of a game. It is therefore critical to understand how they compare to \emph{Nash equilibria}, the \emph{de facto} standard solution concept for normal-form games \citep{myerson1997game}.

Like (mixed) Nash equilibria, sink equilibria are a generalisation of PNEs and are guaranteed to exist in all games. The proof is simple: every directed graph has sink connected components (by finiteness). Thus while mixed NEs exist by allowing mixed strategies \citep{myerson1997game}, sink equilibria exist by allowing the solution to be a \emph{set} of pure profiles\footnote{Set-wise solution concepts are well-known in game theory, \emph{e.g.} \emph{correlated equilibrium} \citep{aumann1987correlated} and CURB sets \citep{basu_strategy_1991}.}, defined by strict preferences.

At first glance, a set-wise solution concept appears to make less sharp game-theoretic predictions than the Nash equilibrium. However---even putting aside the question of whether the Nash equilibrium is predictive~\citep{milgrom_adaptive_1991,gaunersdorfer_fictitious_1995,hofbauer2011survival,galla_complex_2013}---this intuition is incorrect. Figure~\ref{fig:shapley} shows a well-known $3\times 3$ game where the unique Nash equilibrium is interior, and there is a unique sink equilibrium (a six-cycle on the boundary). While we give a payoff matrix example, these defining properties hold for any game with this preference graph\footnote{This game is called \emph{Shapley's game} \citep{shapley_topics_1964}, and we will discuss it in Section~\ref{sec: FP}.}. If all players were to play the interior equilibrium in this game, then the outcome could be \emph{any} of the nine possible profiles. This includes the three `source' profiles, at which both players wish to deviate. The sink equilibrium excludes these profiles, and so gives a sharper prediction of play than the Nash.

\begin{figure}
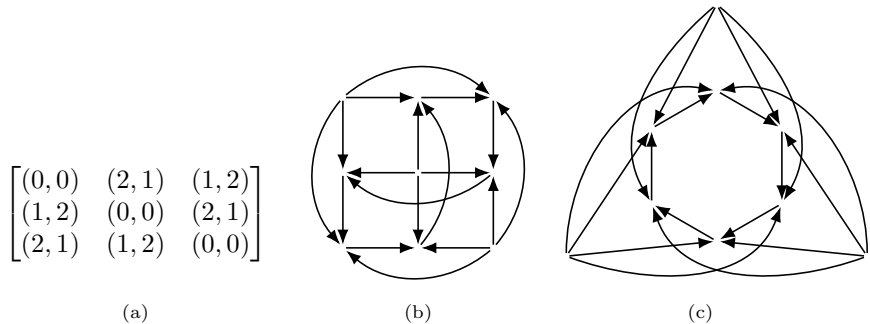

    \centering
    \begin{subfigure}{.3\textwidth}
        \centering
        \begin{equation*}
            \begin{bmatrix}
                (0,0) & (2,1) & (1,2) \\
                (1,2) & (0,0) & (2,1) \\
                (2,1) & (1,2) & (0,0)
            \end{bmatrix}
        \end{equation*}
        \caption{}
        \label{fig:shapley payoff}
    \end{subfigure}
    \begin{subfigure}{.3\textwidth}
        \centering
        \includestandalone{figs/shapley_square}
        \caption{}
        \label{fig:shapley-square}
    \end{subfigure}
    \begin{subfigure}{.3\textwidth}
        \centering
        \includestandalone{figs/shapley}
        \caption{}
        \label{fig:shapley-cycle}
    \end{subfigure}
    \caption{Two drawings of Shapley's graph~\citep{shapley_topics_1964}, and a typical payoff matrix representation \citep[\textit{e.g.},][]{krishna_convergence_1998}. Figure~\ref{fig:shapley-square} reflects the structure of the payoff matrix in Figure~\ref{fig:shapley payoff}, while Figure~\ref{fig:shapley-cycle} more clearly shows the connectivity. The graph has three sources which all have arcs into a single sink equilibrium, which is a 6-cycle. This graph appears as the \emph{6-cycle-source graph} in \cite{biggar_graph_2023}.}
    \label{fig:shapley}
\end{figure}

Making a prediction in a game should include some reasoning about how the players end up playing the predicted outcome---this is the goal of game dynamics. 
It is here that the sink and Nash equilibria differ most starkly. We will explore this in Section~\ref{sec: dynamics}, but here we will just observe that there is at least one simple dynamic motivating sink equilibrium play: a random walk on the preference graph always arrives at a sink equilibrium (see Section~\ref{sec: markov}). This dynamic corresponds to players unilaterally and myopically altering their strategy to one which is preferred against the current opponent strategies. For Nash equilibria, by contrast, not even simple dynamics can be made to converge to Nash equilibria in general games \citep{hart_uncoupled_2003,milionis_impossibility_2023}. This is unsurprising given that Nash equilibria are intractable to compute \citep{daskalakis_complexity_2009}, while sink equilibria can be computed efficiently by traversing the graph (Lemma~\ref{lem computing sink equilibria}).
However, convergence to the sink equilibrium appears in many cases to extend to more complex and well-studied dynamics, including fictitious play and the replicator, as we explore in Section~\ref{sec: dynamics}.

Nash equilibria are not the only solution concepts for games, and others, particularly the \emph{correlated}~\citep{aumann1987correlated} and \emph{coarse correlated} equilibrium~\citep{hannan1957approximation} have more appealing computability~\citep{papadimitriou_computing_2008} and dynamic properties~\citep{cesa2006prediction}. The general relationship between these solution concepts and the sink equilibrium is not obvious, and is an important problem for future work (see Section~\ref{sec: conclusions}). For now we will mention only that these solution concepts do not necessarily avoid the problem above, because they are a \emph{superset} of the set of Nash equilibria. For instance, some correlated equilibria in Shapley's game (Figure~\ref{fig:shapley}) contain all three of the source profiles.

\subsection{Symmetric zero-sum games} \label{sec: symmetric zero-sum}

Symmetric zero-sum games are win-lose games where the rules are symmetric for both players. A prototypical example is Rock-Paper-Scissors. Because of the symmetry in the rules, we often restrict our attention to the symmetric profiles, in this case (Rock, Rock), (Scissors, Scissors) and (Paper, Paper). When we restrict to this `symmetric' space, we end up with a different form for the preference graph \citep{biggar_attractor_2024}. Because each profile in consideration is a symmetric pair $(x,x)$ of strategies, we can express each profile $(x,x)$ by the strategy $x$. In this setup, a `profile' is a 1-tuple of strategies, that is, just a strategy. Two `profiles' are comparable if they differ in the strategy of one player, are because these profiles are 1-tuples, all profiles are comparable. Hence, in Rock-Paper-Scissors, the nodes are the strategies Rock, Paper and Scissors, with the arc representing the preferred option in each pair, resulting in Figure~\ref{fig: RPS 1}. Note that, while this form of the graph is quite different, the difference stems from the differences in the definition of `comparable profiles'.

Many dynamics, including the replicator, are restricted to symmetric profiles in a symmetric zero-sum game \citep{smith1973logic}. In such games, this symmetrised graph is the appropriate skeleton of the dynamics \citep{biggar_attractor_2024}.

\subsection{Nomenclature} \label{sec:nomenclature}

The preference graph has gone by many other names, including the \emph{game graph} \citep{candogan_flows_2011}, the \emph{strategy profile graph}~\citep{goemans_sink_2005}, \emph{state graph}~\citep{christodoulou_convergence_2006}, the \emph{Nash dynamics}~\citep{fabrikant_complexity_2004,mirrokni_complexity_2009}, the \emph{better-response graph}~\citep{papadimitriou_game_2019,hakim2024swim} or simply the \emph{response graph} \citep{biggar_graph_2023}. Of these, the last two are probably the most common. While `better-response graph' is useful for contrasting the better- and best-response graphs, it is grammatically awkward and doesn't reflect the simple economic motivation behind the concept of `preference'. `Response graphs', on the other hand, suggests a dynamic concept representing a particular behaviour from players in a repeated game. While the preference graph is useful for analysing game dynamics, it is not \emph{itself} a dynamic---it expresses player's preferences, not the actual sequence of strategies they will play. Overall we think that `preference graphs' is a simple, intuitive name which reflects the concept we are aiming to model.

Sink equilibria, too, have gone by different names, including \emph{sink components}~\citep{biggar_graph_2023} and \emph{Markov-Conley chains}~\citep{papadimitriou_game_2019,omidshafiei_-rank_2019}. Even the original usage of `sink equilibria' \citep{goemans_sink_2005} applied to the \emph{best-response graph} rather than the preference graph. While not an `equilibrium' in the pointwise sense, we think that the name `sink equilibria' captures both their graphical and game-theoretic nature.

\subsection{Basic properties} \label{sec: properties}

In this section we outline some of the simple properties of preference graph. First, we establish which graphs are actually the preference graphs of some game.
\begin{lem} \label{lem: preference graphs as graphs}
    Let $S_1,\dots,S_n$ by finite sets, with $(s_1,\dots,s_n)$ an element of the product $S_1\times\dots\times S_n$. A graph is the preference graph of a game if and only if each node can be labelled by an element $S_1\times\dots\times S_n$ such that that fixing any $s_{-i}$ defines an induced subgraph which is a total order (equivalently, an acyclic orientation of a complete graph).
\end{lem}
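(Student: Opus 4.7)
The plan is to prove both directions directly from the definitions, exploiting the fact that two profiles are $i$-comparable precisely when they lie in a common fibre of the form $\{(s_i;s_{-i}) : s_i \in S_i\}$, and that these fibres partition the set of $i$-comparable pairs.

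\textbf{Forward direction.} Assume $G$ is the preference graph of some game with utility function $u$. The nodes are by definition labelled bijectively by strategy profiles in $\prod_i S_i$, so the required labelling is already in place. Fix any player $i$ and any $s_{-i}$. The profiles in the fibre $\{(s_i; s_{-i}) : s_i \in S_i\}$ are pairwise $i$-comparable, so by the definition of the preference graph the induced subgraph contains an arc between every pair of distinct profiles, oriented toward the profile of higher $u_i$-value. Since $u_i(\cdot\,;s_{-i}) : S_i \to \real$ induces a total order on $S_i$, this induced subgraph is a total order on the fibre; in the strict case it is an acyclic tournament.

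\textbf{Backward direction.} Suppose we are given a labelling with the stated property. We construct a game whose preference graph is $G$. For each player $i$ and each $s_{-i}$, the induced subgraph on the fibre is (by hypothesis) a total order on $S_i$; let $r_i(s_i;s_{-i}) \in \{1,\dots,|S_i|\}$ denote the rank of $(s_i;s_{-i})$ in this order, and set $u_i(s_i;s_{-i}) := r_i(s_i;s_{-i})$. Since for each fixed $i$ the fibres partition $\prod_j S_j$, this defines $u_i$ unambiguously on the entire profile space. To verify that the preference graph of $u$ equals $G$, note that arcs in either graph appear only between $i$-comparable pairs (for some $i$). For any such pair $(s_i;s_{-i}), (s'_i;s_{-i})$, the arc in $G$ is oriented in the direction prescribed by the total order on the fibre, which is exactly the direction prescribed by the ranking, i.e.\ by $u_i$.

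\textbf{Main obstacle.} The argument is essentially a translation exercise, so there is no deep obstacle; the only subtlety is confirming that the fibre-wise rank construction produces a globally well-defined utility, which follows because the $i$-fibres partition $\prod_j S_j$. Ties (bidirectional arcs between comparable profiles in non-strict games) require only a cosmetic adjustment: replace "total order" with "weak order" and "rank" with "rank allowing equalities." One also needs to observe, for completeness, that Theorem~\ref{reconstruction theorem} guarantees that when the graph is presented without labels, the labelling by $\prod_i S_i$ is itself recoverable (up to renaming), so the characterisation can be stated purely graph-theoretically.
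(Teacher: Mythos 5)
Your proof is correct and takes the natural route; the paper itself gives no argument for this lemma (it simply defers to the proof of Theorem~3.2 of \cite{biggar_graph_2023}), and your fibre-wise rank construction is exactly the standard way to realise such a labelled graph as a game. The only point worth flagging is in the backward direction: you assert that all arcs of $G$ lie between $i$-comparable pairs, which is not literally implied by the stated fibre condition (it constrains only the induced subgraphs on fibres); this needs to be read into the hypothesis, as the lemma's statement implicitly intends, since otherwise the constructed game's preference graph would only be a subgraph of $G$.
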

This follows from the proof of Theorem~3.2 of \cite{biggar_graph_2023}. In other words, preference graphs are any orientation of a product graph which is a total order along each `row'. This property is straightforward to prove---and to compute, see Theorem~\ref{reconstruction theorem}---but is quite useful, because it allows us to construct preference graphs directly (rather than through payoff matrices) with confidence that they correspond to a game. Because the utility function only influences the orientation of the arcs, we can establish the size and sparsity of the preference graph easily from the number of players and strategies.

\begin{lem} \label{lem: number of nodes and arcs}
    The preference graph of a $m_1\times m_2 \times \dots \times m_n$ game has $\prod_{i=1}^n m_i$ nodes and $(\sum_{i=1}^n (m_i - 1))\prod_{i=1}^n m_i$ arcs.
\end{lem}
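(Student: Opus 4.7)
The plan is to split the statement into two independent counts, both of which follow from the structural characterisation in Lemma~\ref{lem: preference graphs as graphs}.

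For the node count, I would appeal directly to Lemma~\ref{lem: preference graphs as graphs}, which labels the nodes bijectively by elements of $S_1\times\cdots\times S_n$ (uniquely up to renaming, by Theorem~\ref{reconstruction theorem}). This immediately gives $\prod_i m_i$ nodes.

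For the arcs, I would partition the arc set twice. First, every arc of the preference graph joins a pair of profiles that are $i$-comparable for a \emph{unique} player $i$, so the arcs split disjointly by the ``acting'' player. Second, for each fixed $i$ I would group the $i$-arcs by the common value $s_{-i}$ held by the other players, of which there are $\prod_{j\neq i} m_j$ choices. By Lemma~\ref{lem: preference graphs as graphs}, the $m_i$ profiles agreeing on a given $s_{-i}$ form an induced subgraph that is a total order on $m_i$ elements (equivalently, an orientation of the complete graph $K_{m_i}$); enumerating the ordered pairs of distinct profiles in such a ``slice'' contributes $m_i(m_i-1)$ arcs, consistent with the convention of Section~\ref{sec: graphs} under which indifferent comparable pairs carry a pair of opposite arcs.

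Multiplying and using $m_i\cdot\prod_{j\neq i}m_j=\prod_j m_j$, the contribution of player $i$ is $m_i(m_i-1)\prod_{j\neq i}m_j=(m_i-1)\prod_j m_j$ arcs. Summing over $i$ then yields
\[
\sum_{i=1}^n (m_i-1)\prod_{j=1}^n m_j = \Big(\sum_{i=1}^n (m_i-1)\Big)\prod_{i=1}^n m_i,
\]
as claimed. There is no genuine obstacle; the argument is a straightforward double-counting. The only step that deserves explicit care is the identification of each slice with a complete structure on $m_i$ strategies, which is exactly what Lemma~\ref{lem: preference graphs as graphs} delivers, and the bookkeeping that ensures arcs are attributed to a single player via the uniqueness of the ``acting'' coordinate $i$.
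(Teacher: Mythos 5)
Your overall strategy is exactly the one the paper has in mind (the paper itself offers no more than the remark that ``the utility function only influences the orientation of the arcs''): the node count is immediate from the labelling of profiles by $S_1\times\cdots\times S_n$, and the arc count follows from a double decomposition, first by the unique deviating player $i$ and then by the residual profile $s_{-i}$, with each slice being a tournament-like structure on $m_i$ strategies. That is the right skeleton, and the bookkeeping at the end is fine.

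The one step that does not hold up is the per-slice count. You obtain $m_i(m_i-1)$ arcs per slice by enumerating \emph{ordered} pairs of distinct profiles, and you justify this by appeal to the convention that indifferent comparable pairs carry a pair of opposite arcs. But that convention only doubles the \emph{tied} pairs. Under the paper's definition, a strict game has exactly one arc per unordered $i$-comparable pair, so a slice contributes $\binom{m_i}{2}=m_i(m_i-1)/2$ arcs and the total is $\tfrac12\bigl(\sum_i(m_i-1)\bigr)\prod_i m_i$ --- half the displayed formula. The paper's own figures confirm this: the $2\times2$ Matching Pennies graph is drawn with $4$ arcs, whereas the formula evaluates to $8$. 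What your count actually computes is the number of ordered comparable pairs (equivalently, the arc count of a game in which every comparable pair is tied), which happens to coincide with the formula as printed in the lemma. So your derivation reproduces the statement, but the appeal to the tie convention does not bridge the factor of two for generic (strict) games; you should either state explicitly that you are counting ordered comparable pairs (and note that the arc count of a strict game is half of this), or count unordered pairs per slice and flag the resulting discrepancy with the lemma as written.
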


This has an immediate computational consequence: the preference graph of a large game is quite sparse, especially if there are many players. 

\begin{lem} \label{lem computing sink equilibria}
    Sink equilibria can be computed in time linear in the number of arcs and nodes.
\end{lem}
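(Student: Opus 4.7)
The plan is to reduce the task directly to the standard problem of finding sink strongly connected components (SCCs) in a directed graph. By definition, sink equilibria are exactly the strongly connected components of the preference graph that have no outgoing arcs, so any algorithm that identifies such components will compute the sink equilibria. The main tool is Tarjan's SCC algorithm (or equivalently Kosaraju's), which partitions the node set of any directed graph $G = (N,A)$ into its SCCs in time $O(|N| + |A|)$.

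First, I would run Tarjan's algorithm on the preference graph to obtain the collection $\{C_1, C_2, \dots, C_k\}$ of strongly connected components, together with a map sending each node to the component containing it. This takes $O(|N| + |A|)$ time and space. Second, I would construct the condensation DAG in a single sweep over the arc list: for each arc $\arc{p}{q}$ of the preference graph with $p \in C_i$ and $q \in C_j$, record that $C_i$ has an outgoing arc whenever $i \neq j$. A single boolean flag per component suffices, so this sweep is also $O(|N| + |A|)$. Finally, I would output the components whose flag remained unset; these are precisely the SCCs with no outgoing arcs, i.e., the sink equilibria.

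Correctness is immediate from the definition of the sink equilibria and the correctness of Tarjan's algorithm. The total running time is the sum of two linear-time passes over the graph, hence $O(|N| + |A|)$, which by Lemma~\ref{lem: number of nodes and arcs} is polynomial in the representation of the preference graph. There is no real obstacle here — the only subtlety worth noting is that the bound is linear in the \emph{preference graph}, whose size is given explicitly in Lemma~\ref{lem: number of nodes and arcs}, and not linear in the size of the payoff matrix input; if one wished to sharpen the statement in practice, one could use the reduced preference graph (Definition~\ref{def: reduced preference}), which has the same reachability order and hence the same sink equilibria but with substantially fewer arcs.
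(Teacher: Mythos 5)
Your proposal is correct and matches the intended argument: the paper's own proof sketch simply notes that one scans the payoff matrices to orient the arcs and then (implicitly) extracts the sink strongly connected components by a standard linear-time SCC algorithm, exactly as you do with Tarjan's algorithm and the condensation sweep. Your closing remark about the reduced preference graph also mirrors the paper's own footnote on sharpening the bound.
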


The idea of this proof is very simple: simply scan the payoff matrices and compare all comparable pairs of payoffs for each player to obtain the orientation of each arc\footnote{In fact, we can improve on this somewhat using the reduced preference graph (Definition~\ref{def: reduced preference}), as in \cite{hakim2024swim}. For an $m^n$ game represented by $n$ tensors of size $m^n$, the preference graph has $n(m-1)m^n$ arcs, but the reduced preference graph has only $O(nm^n)$. By sorting each row of the tensors we can construct the reduced preference graph in $O(nm^n\log m)$ operations and then search that, giving an overall complexity of $O(m^nn\log m)$. Unfortunately, computing sink equilibria in \emph{succinct games} is hard \citep{fabrikant_complexity_2004,fabrikant2008complexity}.}.






\subsection{Better and best responses} \label{sec: better and best}

The preference graph captures `better-responses'---unilateral deviations which lead to improved payoff for the deviating player. This suggests a natural restriction: instead of any payoff-improving deviation, we add arcs between comparable profiles only when the associated deviation is a \emph{best} response. This results in a subgraph of the preference graph called the \emph{best-response graph}.

Like the preference graph, the best-response graph is an ordinal representation of the underlying structure of the game, and it possesses many of the same philosophical motivations. In fact, because the best-response correspondence is a well-studied object in game theory, the best-response graph appears as the more natural choice of ordinal model for games. Additionally, it is also proportionally better-studied \citep{mirrokni_convergence_2004,goemans_sink_2005,christodoulou_convergence_2006}. For instance, the original definition of sink equilibria referred to the sink connected components of the best-response graph \citep{goemans_sink_2005}. Similarly, the best-response graph was recently proposed as combinatorial skeleton for game dynamics \citep{pangallo_best_2019} because of its close empirical correspondence with dynamic behaviour in randomly-generated games.

It is our finding, however, that this intuition is not correct, and the preference graph is the better model. This is a subtle finding, because in many (small) games these two objects are extremely similar. For instance, in $2\times 2$ games (Figure~\ref{fig:2x2}), the two graphs are identical. Further, the best-response graph is a subgraph of the preference graph, and so its general structure is often fairly analogous. If the preference graph is a good predictor of dynamics in such cases, we would expect the best-response to often be likewise. However, these graphs do have differences, and where they differ we find that it is the preference graph which offers the greater insight into game structure and dynamics. In this section we will demonstrate some key examples where these two graphs have quite distinct properties, and give an overview of which results of the paper apply to the preference graph only.

It is worth noting, first, that because the best-response graph is an induced subgraph of the preference graph, it can always be derived from the preference graph, but not vice versa. So we do not lose the data of best-responses by studying the preference graph.

The most obvious structural difference between preference and best-responses is that preferences are \emph{closed under taking subgames}. In other words, the preference graph of a subgame is an induced subgraph of the preference graph of the game. The same is not true of the best-response graph, where induced subgraphs of subgames might not contain any arcs at all. Put differently, preferences are a \emph{local} property while best-responses are a \emph{global} property. Closure under subgames allows for inductive reasoning, leading to results like Theorems~\ref{dominance theorem} \citep{biggar_graph_2023} and \ref{QSM theorem} \citep{milgrom_rationalizability_1990} and the characterisations of ordinal potential and quasi-supermodular games.

\begin{figure}
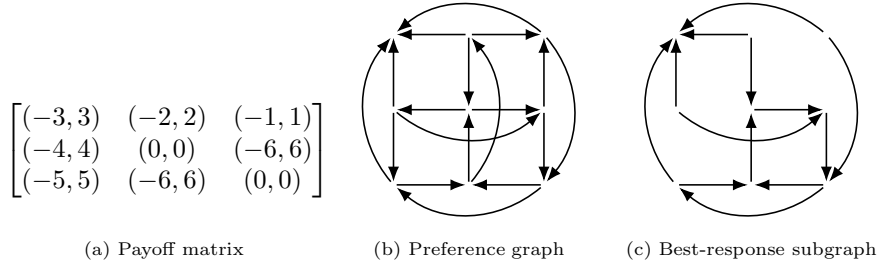

    \centering
    \begin{subfigure}{.35\textwidth}
        \centering
        \begin{equation*}
            \begin{bmatrix}
                (-3,3) & (-2,2) & (-1,1) \\
                (-4,4) & (0,0) & (-6,6) \\
                (-5,5) & (-6,6) & (0,0)
            \end{bmatrix}
        \end{equation*}
        \caption{Payoff matrix}
        \label{fig:ID generator}
    \end{subfigure}
    \begin{subfigure}{.3\textwidth}
        \centering
        \includestandalone{figs/inner_diamond_corner}
        \caption{Preference graph}
        \label{fig:ID}
    \end{subfigure}
    \begin{subfigure}{.3\textwidth}
        \centering
        \includestandalone{figs/inner_diamond_corner_BR}
        \caption{Best-response subgraph}
        \label{fig:ID best response}
    \end{subfigure}
    \caption{The Inner Diamond graph \citep{biggar_graph_2023}, its best-response subgraph and a zero-sum payoff matrix which generates it. The preference graph has a single sink equilibrium, which is a pure Nash equilibrium. The best-response subgraph has an additional sink equilibrium, which is a 4-cycle.}
    \label{fig:inner diamond and BR}
\end{figure}

(Singleton) sinks of the preference graph and best-response graph always coincide, and are exactly the (strict) pure Nash equilibria in both cases. This motivates the definition of sink equilibria in both graphs \citep{goemans_sink_2005}. However, the sink equilibria \emph{can} be different between these two structures when they contain more than one profile. More than that---the best-response graph can have \emph{more} sink equilibria than the preference graph\footnote{The best-response graph cannot have \emph{fewer} sink equilibria than the preference graph, because any sink equilibrium of the preference graph also cannot have arcs out of it in the best-response subgraph.}. In Figure~\ref{fig:inner diamond and BR} we show a preference graph \citep[the Inner Diamond graph,][]{biggar_graph_2023} and its best-response subgraph. This graph has a single sink equilibrium, which is a sink. Its best-response subgraph has an additional sink equilibrium, which is a 4-cycle. In a related work, \cite{amiet_when_2021} showed that in games with probabilistic payoffs, a random walk on the preference graph is highly likely to converge to a PNE when one exists, while a walk on the best-response graph is likely to be trapped in a cycle. Preference graphs can characterise important classes of games---for instance, a game is \emph{ordinal potential} \citep{monderer_fictitious_1996} if and only if its \emph{preference graph} is acyclic (Section~\ref{sec: potential}). Likewise, a key property of \emph{zero-sum games} is that the Nash equilibrium is (generically) unique. This result has a preference graph analogue: the sink equilibrium is unique in zero-sum games \citep{biggar_graph_2023}. The same is not true for the best-response graph: the Inner Diamond graph (Figure~\ref{fig:inner diamond and BR}) can be generated by a zero-sum game (Figure~\ref{fig:ID generator}), but its best-response graph has two sink equilibria.

\begin{figure}
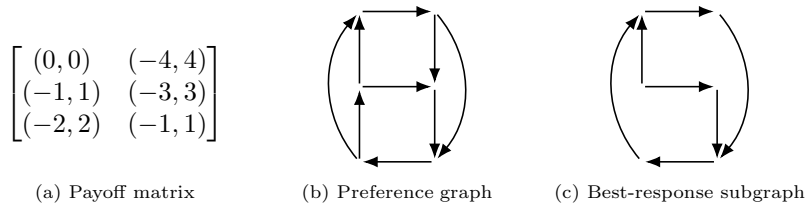

    \centering
    \begin{subfigure}{.3\textwidth}
        \centering
        \begin{equation*}
            \begin{bmatrix}
                (0,0) & (-4,4) \\
                (-1,1) & (-3,3) \\
                (-2,2) & (-1,1)
            \end{bmatrix}
        \end{equation*}
        \caption{Payoff matrix}
        \label{fig:2x3 MP generator}
    \end{subfigure}
    \begin{subfigure}{.3\textwidth}
        \centering
        \includestandalone{figs/2x3_MP}
        \caption{Preference graph}
        \label{fig:2x3MP comp}
    \end{subfigure}
    \begin{subfigure}{.3\textwidth}
        \centering
        \includestandalone{figs/2x3_MP_BR}
        \caption{Best-response subgraph}
        \label{fig:2x3MP best response}
    \end{subfigure}
    \caption{The $2\times 3$ MP graph \citep{biggar_graph_2023}, its best-response subgraph and a zero-sum payoff matrix which generates it. The preference graph is strongly connected, but the sink equilibrium of the best-response subgraph is a 4-cycle, which does not contain the support of the Nash equilibrium $(0,1/3,2/3), (2/3,1/3)$. See also Figure~\ref{fig:2x3 MP}.}
    \label{fig:2x3 MP and BR}
\end{figure}
Intuitively, computing best responses is more difficult than better responses---best responses require comparing to all other strategies, while better responses require only comparing pairs of strategies~\citep{sandholm2010population}. In one sense, the difference between and best responses is analogous to the difference between gradient descent and stochastic gradient descent. This difference is reflected in the behaviour of simple dynamics like \emph{fictitious play} (FP) (Section~\ref{sec: FP}). FP is known to converge to equilibria in zero-sum games, and so in the game of Figure~\ref{fig:ID generator} we will converge to the sink of the preference graph, and not the surplus sink equilibrium of its best-response subgraph. In fact, in zero-sum games, it is known that the support of the unique Nash equilibrium must be contained within the sink equilibrium of the preference graph \citep{biggar_attractor_2024}. A simpler example is shown in Figure~\ref{fig:2x3 MP and BR}, where we show a payoff matrix which generates the $2\times 3$ MP graph, the unique preference graph of a $2\times 3$ zero-sum game without dominated strategies \citep{biggar_graph_2023}. This graph is strongly connected, but its best-response subgraph is not. The sink equilibrium of the best-response graph does not contain the support of the unique Nash equilibrium, which is $(0,1/3,2/3), (2/3,1/3)$, and this is where FP will converge \citep{robinson_iterative_1951}. A similar phenomenon holds in ordinal potential games. Acyclicity of the preference graph is sufficient to guarantee convergence of FP in these games \citep{berger_browns_2007,berger_two_2007}. Acyclicity of the best-response graph, by contrast, is not sufficient: \cite{foster_nonconvergence_1998} demonstrate a best-response-acyclic game where FP robustly follows a cycle in the preference graph, where no step is a best response. In other dynamics, like the replicator dynamic, the sink equilibria define the structure of the attractors \citep{biggar_replicator_2023,biggar_attractor_2024}.

The results for FP are especially surprising, because FP is a `best-response' dynamic! At each step, players choose a best response to the empirical distribution of opponent play. However, as these examples show, the overall behaviour is dictated by preference structure rather than best-response structure.


\section{Game classes and structural properties} \label{sec: structure}

The preference graph is a skeleton of the game. That skeleton provides sufficient data to characterise two important game-theoretic objects: dominated strategies and pure Nash equilibria (PNEs). In this section we explore how the graph is often sufficient to prove results on these two concepts. PNEs and dominance are also motivators of many classes of games, including (ordinal) potential games, (quasi-)supermodular games and weakly acyclic games. It turns out that these well-studied game classes can be characterised by the structure of their preference graphs.

\subsection{Potential games} \label{sec: potential}

When they exist, pure Nash equilibria have more theoretically-appealing properties than mixed Nash equilibria. For instance, they are stable under many natural dynamics~\citep{vlatakis-gkaragkounis_no-regret_2020} and don't require players to mix strategies. Their significant drawback is that they do not exist in all games. This observation motivates the study of games where PNE are guaranteed to exist. \citeauthor{rosenthal_class_1973} introduced \emph{congestion games} as one such model. This was subsequently generalised by \citeauthor{monderer_potential_1996}'s \emph{potential games}, a class which is important and now well-established in game theory.

\begin{defn}[Potential games, \cite{monderer_potential_1996}]
    A game $u$ is a \emph{(weighted) potential game} if there exists a function $P: Z \to \real$ and a positive vector $w\in \real^N_+$ such that, for any profile $p\in Z$, player $i$ and strategies $s,t\in S_i$,
    \begin{equation}
        u_i(s; p_{-i}) - u_i(t;p_{-i}) = w_i(P(s; p_{-i}) - P(t;p_{-i}))
    \end{equation}
\end{defn}

\cite{monderer_potential_1996} motivate this class by two properties: 1) PNEs are guaranteed to exist, and 2) a natural learning rule, \emph{fictitious play}, converges to them \citep[proved in][]{monderer_fictitious_1996}. We will discuss fictitious play in the next section. While (weighted) potential games are a cardinal concept, the authors observe that an ordinal generalisation---\emph{ordinal potential games}---are actually sufficient to guarantee the existence of PNEs. 

\begin{defn}[Ordinal potential games, \cite{monderer_potential_1996}]
    A game $u$ is a \emph{ordinal potential game} if there exists a function $P: Z \to \real$ such that, for any profile $p\in Z$, player $i$ and strategies $s,t\in S_i$,
    \begin{equation}
        u_i(s; p_{-i}) - u_i(t;p_{-i}) > 0 \Leftrightarrow P(s; p_{-i}) - P(t;p_{-i}) > 0
    \end{equation}
\end{defn}
\cite{monderer_potential_1996} show that ordinal potential games satisfy the `finite improvement property', a concept which, in our terminology, means that \emph{every walk in the preference graph is finite}. The following characterisation follows in a straightforward way \citep{biggar_graph_2023}:
\begin{thm}
    A game is ordinal potential if and only if its preference graph is acyclic.
\end{thm}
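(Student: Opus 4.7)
The plan is to prove both directions by exploiting the fact that the arcs of the preference graph are exactly the pairs of $i$-comparable profiles on which some player's utility strictly increases, so an ordinal potential function is precisely a real-valued function that strictly increases along every arc. From this observation, one direction becomes ``potential functions cannot increase around a cycle'' and the other becomes ``every finite DAG admits a topological order.''

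For the forward direction ($\Rightarrow$), I would assume that $P : Z \to \real$ is an ordinal potential and that, for contradiction, the preference graph contains a cycle $v_1, v_2, \dots, v_n, v_1$. Each arc $\arc{v_k}{v_{k+1}}$ comes from some player $i_k$ for whom $v_k$ and $v_{k+1}$ are $i_k$-comparable and $u_{i_k}(v_{k+1}) > u_{i_k}(v_k)$. Applying the defining equivalence of an ordinal potential along the relevant fixed $p_{-i_k}$ gives $P(v_{k+1}) > P(v_k)$. Chaining these inequalities around the cycle yields $P(v_1) < P(v_1)$, which is the desired contradiction.

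For the backward direction ($\Leftarrow$), I would use the fact that a finite acyclic directed graph admits a topological linear extension of its reachability preorder. Let $\pi : Z \to \{1, \dots, |Z|\}$ be such a linear extension, and define $P(v) := \pi(v)$. By construction, $P(p) < P(q)$ whenever $\arc{p}{q}$ is an arc, because $q$ is reachable from $p$ and so comes strictly later in the topological order. To verify the ordinal potential property, take any $i$-comparable pair $(s; r_{-i})$ and $(t; r_{-i})$ and do a case split on the sign of $u_i(s; r_{-i}) - u_i(t; r_{-i})$: if positive, the arc $\arc{(t;r_{-i})}{(s;r_{-i})}$ exists and $P(s;r_{-i}) > P(t;r_{-i})$; if negative, the symmetric argument gives $P(s;r_{-i}) < P(t;r_{-i})$. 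The zero case cannot occur, since utility equality would produce arcs in both directions, hence a $2$-cycle, contradicting acyclicity---this handles the only slightly subtle point, namely that acyclicity of the preference graph automatically rules out indifference between comparable profiles.

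The proof has essentially no obstacle: the forward direction is a one-line telescoping argument, and the backward direction is a standard topological-sort construction, with the only wrinkle being the explicit observation that $2$-cycles are excluded by acyclicity, so the strict-inequality form of the ordinal potential condition is actually attainable by a single-valued $P$.
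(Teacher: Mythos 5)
Your proof is correct and follows essentially the same route the paper gestures at: the forward direction is precisely the telescoping argument underlying Monderer and Shapley's finite improvement property (which is all the paper invokes), and the backward direction is the standard topological-sort construction. The only caveat is that under this paper's convention indifference between $i$-comparable profiles produces a pair of opposite arcs, i.e.\ a $2$-cycle, so the forward direction as you state it (every arc is a strict improvement) implicitly assumes strictness---your own observation in the backward direction that acyclicity rules out ties is exactly the relevant point, and the paper likewise restricts attention to strict games.
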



We can now see why ordinal potential games have PNEs: every acyclic graph has a sink, and sinks of the preference graph are PNEs. In short, this class of games can be understood through the graph concept of acyclicity. Interestingly, \citeauthor{monderer_fictitious_1996} conjectured that ordinal potential games are also sufficient to guarantee FP convergence, a fact which was eventually proved by \cite{berger_two_2007,berger_browns_2007} (see Section~\ref{sec: FP}). Ultimately, both of the motivating properties of potential games derive from the preference graph.

\subsection{Supermodular games} \label{sec:supermodular}

A strangely similar story occurs with \emph{supermodular games} \citep{topkis1979equilibrium}, another class motivated by guaranteed existence of PNEs. Supermodular games are defined on games with ordered strategy sets, that is, each player $i$'s strategies have an associated ordering $\leq_i$, Supermodular games are those where these orderings are \emph{complementary}, in that the marginal payoffs to player $i$ for choosing a larger (under $\leq_i$) strategy increases as the competitors strategies increase (under their orderings $\leq_j$). For simplicity of presentation, we will define the two-player case (following \cite{berger_two_2007}). A general definition is provided in \cite{milgrom_rationalizability_1990}.
\begin{defn}[(Two-player) supermodular games \citep{berger_two_2007}]
    Let $u$ be a two-player game with \emph{ordered strategy sets}, that is, both $S_1$ and $S_2$ possess orderings $\leq_1$ and $\leq_2$. $u$ is \emph{supermodular} if for any $s \leq_1 s'$ in $S_1$ and $t \leq_2 t'$ in $S_2$ we have both
    \begin{align*}
        u_1(s',t') - u_1(s,t') & > u_1(s',t) - u_1(s,t) \\
        u_2(s',t') - u_2(s',t) &> u_2(s,t') - u_2(s,t)
    \end{align*}
\end{defn}

Like potential games, this class turns out to have important economic applications, including modelling macroeconomic activity, oligopoly and bank runs, to name a few~\citep{vives1990nash,milgrom_rationalizability_1990}. \citeauthor{milgrom_rationalizability_1990} proved that this class of games possess PNEs---more specifically, they possess a largest and smallest PNE (under the product order on $\leq_i$) which together span the space of non-dominated strategies (Theorem~\ref{QSM theorem}). In the same paper, the authors demonstrate how this causes convergence of `Bayesian learning' to Nash equilibria. In a final parallel to potential games, the authors focus on an ordinal version of supermodularity (sometimes called \emph{quasi-supermodularity} \citep{berger_two_2007}) and demonstrate it is sufficient to guarantee existence of PNEs.

\begin{defn}[(Two-player) quasi-supermodular games \citep{berger_two_2007}] \label{def: quasi supermodular}
    Let $u$ be a two-player game with ordered strategy sets. $u$ is \emph{quasi-supermodular} if for any $s \leq_1 s'$ in $S_1$ and $t \leq_2 t'$ in $S_2$ we have both
    \begin{align*}
        u_1(s',t) > u_1(s,t) &\implies u_1(s',t') > u_1(s,t') \\
        u_2(s,t') > u_2(s,t) &\implies u_2(s',t') > u_2(s',t)
    \end{align*}
\end{defn}

\begin{thm}[\cite{milgrom_rationalizability_1990}] \label{QSM theorem}
    Let $\Gamma$ be a quasi-supermodular game. For each player $p$, there exist largest and smallest iteratively undominated strategies, $\underbar{s}_p$ and $\bar{s}_p$. Moreover the strategy profiles $(\underbar{s}_p)_p$ and $(\bar{s}_p)_p$ are PNEs.
\end{thm}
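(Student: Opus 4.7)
The plan is to prove the theorem in two stages: first establish existence of extremal PNEs by a lattice fixed-point argument on the best-reply correspondence, then identify those PNEs with the extremal iteratively undominated strategies by induction on the rounds of strict dominance elimination.

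For stage one, define the \emph{maximal} best-reply $\bar{B}_p(t_{-p})$ as the largest element under $\leq_p$ of $\arg\max_{s \in S_p} u_p(s; t_{-p})$. The technical heart is the \emph{single-crossing} property: Definition~\ref{def: quasi supermodular} implies $\bar{B}_p$ is non-decreasing in $t_{-p}$ under the product order on opponent strategies. In preference-graph language, an arc between two $p$-comparable profiles $(s; t_{-p})$ and $(s'; t_{-p})$ with $s <_p s'$ can only ``flip upward'' toward the larger strategy as $t_{-p}$ grows, and taking the maximal best-reply locks in this upward flip. Given this monotonicity, the joint map $(\bar{B}_p)_p$ is a non-decreasing self-map of the finite product lattice $\prod_p S_p$, and iterating from the top element yields a largest fixed point $(\bar{s}_p)_p$ (a discrete Tarski argument). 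The analogous minimal best-reply produces a smallest fixed point $(\underbar{s}_p)_p$. Fixed points of the joint best-reply map are exactly the PNEs, so both profiles are PNEs.

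For stage two, I would argue by induction on the round $k$ of iterated strict dominance that (i) the surviving set $S^{(k)}_p$ is a sublattice of $S_p$ containing the interval $[\underbar{s}_p, \bar{s}_p]$; (ii) the game restricted to $\prod_p S^{(k)}_p$ remains quasi-supermodular; and (iii) no strategy strictly greater than $\bar{s}_p$ (or strictly less than $\underbar{s}_p$) survives all rounds. Claim (ii) is immediate, since the implication in Definition~\ref{def: quasi supermodular} only refers to pairs of strategies and is inherited by any subgame. For claims (i) and (iii), the key observation is that at round $k$ the largest surviving opponent profile $t^{(k)}_{-p}$ still satisfies $t^{(k)}_{-p} \leq \bar{s}_{-p}$; single-crossing then propagates the preference for $\bar{s}_p$ over any $s >_p \bar{s}_p$, established by the fixed-point equation at $\bar{s}_{-p}$, downward to every opponent profile in $S^{(k)}_{-p}$. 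Hence $s$ is strictly dominated by $\bar{s}_p$ in the current subgame and is eliminated, while $\bar{s}_p$ itself remains a best-reply against the surviving $\bar{s}_{-p}$ and so is never dominated.

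The main obstacle is the single-crossing lemma underlying stage one: although quasi-supermodularity is stated only for single pairs of strategies, monotonicity of $\bar{B}_p$ requires chaining these pairwise implications across the entire strategy set, and crucially relies on the maximality tiebreak---selecting an arbitrary best-reply fails. A secondary subtlety in stage two is that strict (not weak) dominance is needed to eliminate $s >_p \bar{s}_p$; one must verify that $\bar{s}_p$ is strictly preferred to $s$ against at least one surviving opponent profile, not merely weakly preferred at all of them, which uses the strict form of the fixed-point equation together with the strict inequalities built into Definition~\ref{def: quasi supermodular}. The preference-graph viewpoint keeps this bookkeeping transparent, since quasi-supermodularity is a monotonicity condition on arc orientations that is visibly preserved when passing to induced subgraphs on surviving strategies.
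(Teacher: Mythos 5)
Your route is genuinely different from the paper's. The paper (Figure~\ref{fig:milgrom and roberts}) argues directly on the fully reduced subgame: in the row of the smallest surviving opponent strategy, an arc toward a larger strategy would propagate by quasi-supermodularity to every other row and yield a dominance relation, a contradiction; hence the extremal corners of the surviving subgame are sinks, i.e.\ PNEs. You instead follow the classical Topkis/Tarski route: monotone maximal best replies, iteration from the top and bottom of the lattice, then iterated elimination. Both hinge on the same single-crossing propagation, but yours is constructive where the paper's is a one-shot contradiction on the final subgame.

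However, stage two as written has a genuine directional error. The surviving sets shrink toward $[\underbar{s}_p,\bar{s}_p]$ from the \emph{outside}, so the largest surviving opponent profile at round $k$ satisfies $t^{(k)}_{-p}\geq \bar{s}_{-p}$, not $\leq$ as you state. This matters: the strict preference $u_p(\bar{s}_p;\bar{s}_{-p})>u_p(s;\bar{s}_{-p})$ for $s>_p\bar{s}_p$ propagates, via the contrapositive of Definition~\ref{def: quasi supermodular}, only to opponent profiles \emph{below} $\bar{s}_{-p}$, and says nothing about surviving profiles strictly above $\bar{s}_{-p}$, which are still present in early rounds. Indeed $s$ may be the unique best reply to the top profile and hence undominated at round~$0$, so $\bar{s}_p$ does not strictly dominate it there. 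The standard repair is to anchor each round at the iterate $\bar{z}^k=\bar{B}^k(\top)$ rather than at the limit: show inductively that survivors lie below $\bar{z}^k_{-p}$, that anything above $\bar{z}^{k+1}_p=\bar{B}_p(\bar{z}^k_{-p})$ is strictly dominated by $\bar{z}^{k+1}_p$ against all such opponents, and conclude only in the (finitely attained) limit that nothing above $\bar{s}_p$ survives. Relatedly, your inductive claim (i) that $S^{(k)}_p$ \emph{contains} the interval $[\underbar{s}_p,\bar{s}_p]$ is false---interior strategies of that interval can be strictly dominated---and together with (iii) it would wrongly force the surviving set to equal the interval; what you need is $S^{(k)}_p\subseteq[\underbar{z}^k_p,\bar{z}^k_p]$ together with $\underbar{s}_p,\bar{s}_p\in S^{(k)}_p$. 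Finally, monotonicity of $\bar{B}_p$ from the strict-implication form of Definition~\ref{def: quasi supermodular} requires the game to be strict: with payoff ties the maximal best reply can decrease. This is harmless under the paper's standing strictness convention but should be stated.
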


Quasi-supermodular games, like ordinal potential games, turn out to be a preference property. For simplicity, we will explain the two-player case.
\begin{defn}
    An ordered two-player game is quasi-supermodular if and only if, for any strategies $s_i ,s_j\in S_1$, $s_i <_1 s_j$, $\arc{(s_i,t_i)}{(s_j, t_i)}$ implies $\arc{(s_i,t_j)}{(s_j, t_j)}$ for $t_i <_2 t_j$ in $S_2$, and vice versa for player 2.
\end{defn}
This is a direct graph-theoretic phrasing of Definition~\ref{def: quasi supermodular}. We can use this perspective to understand the proof of \cite{milgrom_rationalizability_1990} in the two-player case in an intuitive combinatorial way (Figure~\ref{fig:milgrom and roberts}).
\begin{figure}
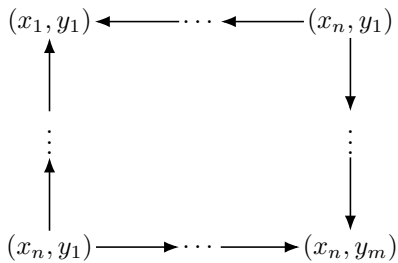

    \centering
    \includestandalone{figs/milgrom_proof}
    \caption{Assume all iteratively strictly dominated strategies have been removed. Consider the $y_1$ row. If $\arc{(x_i, y_1)}{(x_j,y_1)}$ for $i<j$, then $\arc{(x_i, y_k)}{(x_j,y_k)}$ for every $y_k$, which is a contradiction as implies $x_j$ dominates $x_i$. Consequently, the $x$ strategies in order in the $y_1$ row. The same argument, the $x$ strategies are in reverse order in the $y_n$ row. The columns follow the same reasoning.}
    \label{fig:milgrom and roberts}
\end{figure}

\cite{milgrom_rationalizability_1990} conjectured that \emph{quasi}-supermodularity should be sufficient to guarantee convergence of natural dynamics, like fictitious play. Using graphical ideas, \cite{berger_two_2007} proved a restricted form of this conjecture by relating quasi-supermodular games to ordinal potential games: every two-player quasi-supermodular game where one player has at most three strategies, or both players have at most four, is ordinal potential and hence converges under fictitious play. Unfortunately, Berger demonstrated that this argument does not extend further, because $4\times 5$ quasi-supermodular games may not have acyclic preference graphs\footnote{\cite{berger_two_2007} also proved the conjecture in all two-player quasi-supermodular games with \emph{diminishing returns}. To our knowledge, the conjecture remains unresolved in general quasi-supermodular games.} (Figure~\ref{fig:berger counter}).

\begin{figure}
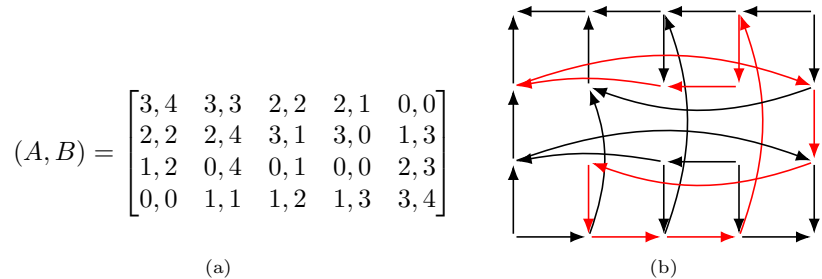

    \centering
    \begin{subfigure}{.45\textwidth}
        \centering
        \begin{equation*}
            (A,B) = \begin{bmatrix}
                3,4 & 3,3 & 2,2 & 2,1 & 0,0 \\
                2,2 & 2,4 & 3,1 & 3,0 & 1,3 \\
                1,2 & 0,4 & 0,1 & 0,0 & 2,3 \\
                0,0 & 1,1 & 1,2 & 1,3 & 3,4
            \end{bmatrix}
        \end{equation*}
        \caption{}
        \label{fig:berger matrixl}
    \end{subfigure}
    \quad
    \begin{subfigure}{.45\textwidth}
        \centering
        \includestandalone{figs/berger_counterexample}
        \caption{}
        \label{fig:berger graph}
    \end{subfigure}
    \caption{The $4\times 5$ quasi-supermodular game from \cite[pg. 12]{berger_two_2007} (\ref{fig:berger matrixl}) and its (reduced) preference graph (\ref{fig:berger graph}) (Section~\ref{sec: graphs}). Berger demonstrates that such games can have cycles in their preference graphs (highlighted in red), and hence are not ordinal potential.}
    \label{fig:berger counter}
\end{figure}

\subsection{Weakly acyclic games} \label{sec: weakly acyclic}


A third class of games possessing PNEs is the \emph{weakly acyclic games} \citep{young_evolution_1993}. Like ordinal potential games, weakly acyclic games are defined by paths in the preference graph.

\begin{defn}[Weakly acyclic games \citep{young_evolution_1993}]
    A game $u$ is \emph{weakly acyclic}\footnote{\cite{young_evolution_1993} analogously defines \emph{acyclic games} as those whose best-response graph is acyclic.} (WA) if there is a path from every node in the best-response graph to a sink.
\end{defn}

It is straightforward to see that a game is weakly acyclic if and only if every sink equilibrium of the best-response graph is a sink. Notably, unlike the case for potential and supermodular games, Young makes explicit use of graph-theoretic ideas in defining these games. However, like the previous two classes, Young motivates weak acyclicity by the convergence of \emph{adaptive dynamics}, a stochastic procedure modelling best-response play from players with finite memory (see Section~\ref{sec: other dynamics}). Weakly acyclic games are a best-response graph property; analogously, we could define \emph{better-response weakly acyclic} (BRWA) games as those where every sink equilibrium of the \emph{preference graph} is a sink. These games are a strict subset of the weakly acyclic games. As an example, we showed in Section~\ref{sec: better and best} the Inner Diamond game (Figure~\ref{fig:inner diamond and BR}) is BRWA but not WA.

\subsection{Zero-sum games} \label{sec: zero-sum}
Note that, while not characterised solely by their graph properties, \emph{zero-sum games} \citep{von2007theory} also possess non-trivial graph structure which is sufficient to understand convergence of the replicator dynamic \citep{biggar_attractor_2024}. A useful fact for analysing zero-sum games is that they are `reflections' of potential games \citep{candogan_flows_2011,hwang_strategic_2020}. That is, if we negate the payoffs to one player in a zero-sum game, the result is an identical-interest game, and a game is potential if and only if it is strategically equivalent to an identical interest game~\citep{monderer_fictitious_1996,monderer_potential_1996}. The result is that a preference graph is generated by some zero-sum game if and only if it is acyclic after \emph{reflection} (reversing the preference of one player).
\begin{thm}[\cite{biggar_graph_2023}] \label{ZS characterisation}
    A two-player game $(u,v)$ is preference-equivalent to a zero-sum game if and only if $(u,-v)$ has an acyclic preference graph, \emph{i.e.}, $(u,-v)$ is ordinal potential.
\end{thm}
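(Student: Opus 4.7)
The plan is to reduce the statement to two small facts that have already appeared in the paper and in the surrounding theory: (i) a game is ordinal potential if and only if it is preference-equivalent to an identical-interest game (an identical-interest game has its common payoff as its potential, so it is ordinal potential; conversely any ordinal potential $P$ gives a preference-equivalent identical-interest game $(P,P)$); and (ii) ``reflecting'' player 2 (replacing $v$ by $-v$) reverses exactly the player-2 arcs of the preference graph, which is precisely the operation that turns an identical-interest game into a zero-sum game and vice-versa. The theorem then follows by composing these two bijective operations on preference graphs.

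For the forward direction I would start from the hypothesis that $(u,v)$ is preference-equivalent to some zero-sum game $(u',-u')$. Unpacking the definition of preference equivalence player-by-player gives that the preference graph of $(u,v)$ coincides with that of $(u',-u')$. Reversing all player-2 arcs on both sides of this equality yields an equality of the preference graphs of $(u,-v)$ and $(u',u')$. Since $(u',u')$ is identical-interest, the common utility $u'$ is a potential function, so its preference graph is acyclic by the ordinal-potential characterisation already stated, and hence so is that of $(u,-v)$.

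For the converse I would reverse the argument. If $(u,-v)$ has an acyclic preference graph then by the ordinal-potential characterisation there exists $P:Z\to\mathbb R$ with $P(s;p_{-i})-P(t;p_{-i})$ agreeing in sign with the corresponding utility difference for both players; in other words $(u,-v)$ is preference-equivalent to the identical-interest game $(P,P)$. Reflecting player 2 on both sides (which, as observed above, preserves preference equivalence because it reverses exactly the same arcs on each side) shows that $(u,v)$ is preference-equivalent to $(P,-P)$, which is a zero-sum game.

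I do not expect a real obstacle here: the whole content of the proof is bookkeeping about how preference equivalence behaves under per-player sign changes, combined with the already-established equivalence between ordinal potential games and acyclic preference graphs. The only point that deserves care is the explicit verification that negating one player's payoffs commutes with preference equivalence, i.e.\ that ``reflect player $i$'' is well-defined on preference-equivalence classes; this is immediate from inspecting the definition one player at a time, since player $j\neq i$'s comparisons are untouched and player $i$'s comparisons are uniformly flipped.
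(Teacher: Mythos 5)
Your proposal is correct and follows essentially the same route the paper sketches just before the theorem: reflect one player's payoffs to pass between zero-sum and identical-interest games, note that reflection is well-defined on preference-equivalence classes, and invoke the equivalence between ordinal potential games (equivalently, games preference-equivalent to an identical-interest game) and acyclic preference graphs. The only caveat, inherited from the paper's own statement rather than introduced by you, is that the acyclicity characterisation is really a statement about strict games, since ties produce bidirected edges (2-cycles) in the preference graph.
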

\begin{figure}
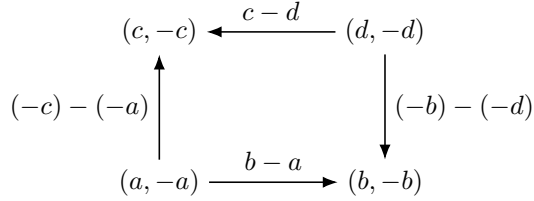

    \centering
    \includestandalone{figs/impossible_CO}
    \caption{Theorem~\ref{ZS characterisation} can be visualised by MP (Figure~\ref{fig:CO}), whose reflection is CO. This diagram from \cite{biggar_graph_2023} shows that zero-sum games cannot have the CO structure, because it induces the impossible strict cycle $a>c>d>b>a$.}
    \label{fig:no CO}
\end{figure}
This can be used to prove that the preference graph of a zero-sum game always has a unique sink equilibrium. The proof is graph-theoretic. The idea is that the existence of multiple sink equilibria imply the existence a $2\times 2$ subgame whose structure is CO (Figure~\ref{fig:CO}). But the reflection of CO is MP, which is a cycle! See Figure~\ref{fig:no CO}. A corollary is that if a zero-sum game has a PNE it must be unique \citep{biggar_graph_2023}.



\subsection{Dominance-solvability and the preference graph} \label{sec:dominance}

Dominated strategies are a fundamental game theory concept, and are a preference property. Unsurprisingly, the preference graph can help us to analyse dominance. In this section we explore some other sufficient conditions for dominance-solvability, and discover how the graph leads to neat proofs.

\cite{shapley_topics_1964} studied the question of when a zero-sum game was dominance-solvable. He gave a simple sufficient condition: if every $2\times 2$ subgame has a PNE, then the game is dominance-solvable. The proof is a clever application of the von Neumann-Morgenstern value of a zero-sum game \citep{von2007theory}.

Recalling the preference graphs of the $2\times 2$ games, we can rephrase Shapley's statement into a graph-theoretic one: if every $2\times 2$ subgraph of a zero-sum game is either CO (Fig.~\ref{fig:CO}), SD (Fig.~\ref{fig:SD}) or DD (Fig.~\ref{fig:DD}) then the game is dominance-solvable. In fact, because we know that CO subgames cannot exist in zero-sum games (Figure~\ref{fig:no CO}), this result is equivalent to the statement that if every $2\times 2$ subgame of a zero-sum game is dominance-solvable (SD or DD), then the game is dominance-solvable.

This intuition leads to a more general result, which holds in all two-player games (not only zero-sum). We reproduce here the proof of \cite{biggar_graph_2023}, to demonstrate the use of the preference graph.

\begin{thm}[\cite{biggar_graph_2023}] \label{dominance theorem}
   If every $2\times 2$ subgame of a two-player game is dominance-solvable, then the game is dominance-solvable.
\end{thm}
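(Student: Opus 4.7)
The plan is to proceed by induction on the total number of strategies $|S_1|+|S_2|$. The base case $|S_1|=|S_2|=2$ holds immediately from the hypothesis. For the inductive step, I would first establish the auxiliary claim that any two-player game satisfying the hypothesis contains a (strictly) dominated strategy. Once this is in hand, deleting the dominated strategy yields a smaller two-player game whose preference graph is an induced subgraph of the original preference graph (by Lemma~\ref{lem: preference graphs as graphs}), so every $2\times 2$ subgame of the smaller game is already a $2\times 2$ subgame of the original and is dominance-solvable. The inductive hypothesis then applies to the smaller game, which, together with the deletion step, gives dominance-solvability of the original.

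The core content is thus the auxiliary claim. The starting observation is that the four $2\times 2$ preference graphs in Figure~\ref{fig:2x2} split into two dominance-solvable types (SD and DD) and two that are not (MP and CO), and the dominance-solvable ones are precisely those in which at least one player has a consistent preference on its two strategies across the two choices of its opponent. Thus the hypothesis is equivalent to a \emph{no-flip condition}: for every pair $\{s,s'\}\subseteq S_1$ and every pair $\{t,t'\}\subseteq S_2$, at least one of the two players has identical preference on its pair when the opponent plays $t$ as when the opponent plays $t'$.

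I would first dispatch the case $|S_1|=2$ directly: if neither row dominates the other, then player 1's preference between the two rows flips across some pair of columns $\{t,t'\}$, and the no-flip condition applied to that $2\times 2$ block forces player 2 to have a dominated strategy there; because the two rows are all of $S_1$, this block-dominance \emph{is} full-game dominance of one column by the other. For the general $m\times n$ case, the plan is to argue by contradiction, assuming that no row and no column is dominated in the full game. This assumption produces a player-$1$ flip for every pair of rows and a player-$2$ flip for every pair of columns, and the no-flip condition pairs each such flip with the \emph{consistency} of the other player on the corresponding $2\times 2$ block. I would organise these pairings as a combinatorial structure on the product graph of $S_1\times S_2$ and trace the implications until they close up to contradict the hypothesis. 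The main obstacle I anticipate is this final combinatorial step — converting the local pairwise flip/no-flip constraints on $2\times 2$ faces into a global impossibility — and I expect the cleanest execution uses the rigid product-graph-orientation structure of the preference graph (Lemma~\ref{lem: preference graphs as graphs}) to chase an alternating chain of row flips and column flips that must terminate by finiteness.
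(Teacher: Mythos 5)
Your overall architecture is sound and matches what the paper leaves implicit: induct on the number of strategies, use closure of the hypothesis under subgames (the preference graph of a subgame is an induced subgraph), and reduce everything to the auxiliary claim that a game satisfying the hypothesis must contain a dominated strategy. Your reformulation of the hypothesis as a ``no-flip'' condition is also correct --- it is exactly the statement that no $2\times 2$ subgame is MP or CO --- and your $|S_1|=2$ case goes through. But the proof has a genuine gap precisely where you say you anticipate ``the main obstacle'': the auxiliary claim for general $m\times n$ games is never actually established. ``Organise these pairings as a combinatorial structure \ldots and trace the implications until they close up'' and ``chase an alternating chain of row flips and column flips that must terminate by finiteness'' is a plan, not an argument; nothing in the proposal identifies which strategy ends up dominated or why the chain of constraints must close up into a contradiction rather than simply terminating consistently.

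For comparison, the paper closes this step with a short, concrete chase that you could adapt. Assume no dominated strategies. Fix a column $h\in S_2$ and totally order the rows $s_1,\dots,s_n$ by player 1's preference given $h$. Since $s_n$ does not dominate $s_1$, there is a column $k$ where player 1's preference between $s_1$ and $s_n$ flips; suppose player 2 prefers $h$ to $k$ in row $s_1$ (the other case is symmetric with MP and CO exchanged). Two propagation steps then cover every row: for any $s_i$ that player 1 ranks below $s_1$ in column $k$, the block $\{s_1,s_i\}\times\{h,k\}$ cannot be MP, forcing player 2 to prefer $h$ to $k$ in row $s_i$ as well (in particular in row $s_n$); and for any $s_j$ that player 1 ranks above $s_n$ in column $k$, the block $\{s_j,s_n\}\times\{h,k\}$ cannot be CO, forcing the same conclusion. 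Every row falls into one of these two cases, so $h$ dominates $k$ --- contradiction. Note that the single auxiliary ordering of the rows \emph{within one column}, plus the witness column $k$ for the flip between the two extreme rows, is what turns your diffuse family of local constraints into a two-step global argument; without some such anchoring device, the ``alternating chain'' you describe has no reason to produce a contradiction.
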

\begin{proof}
    For simplicity, we will assume the game is strict\footnote{This assumption changes very little, and a full proof exists in \cite{biggar_graph_2023}.}. If each $2\times 2$ subgame is dominance-solvable, then each is either SD or DD (Figures~\ref{fig:SD} and \ref{fig:DD}), or equivalently none are MP or CO (Figures~\ref{fig:MP} and \ref{fig:CO}). First, assume for contradiction that no dominated strategies exist. Now, fix a strategy $h \in S_2$ for player 2, and let $s_1,s_2,\dots,s_n \in S_1$ be the strategies of player 1, in order of preference given $h$. Because $s_n$ does not dominate $s_1$ (by assumption), there must be another strategy $k\in S_2$ where player 1 prefers $s_1$ over $s_n$ given $k$.
    \begin{figure}
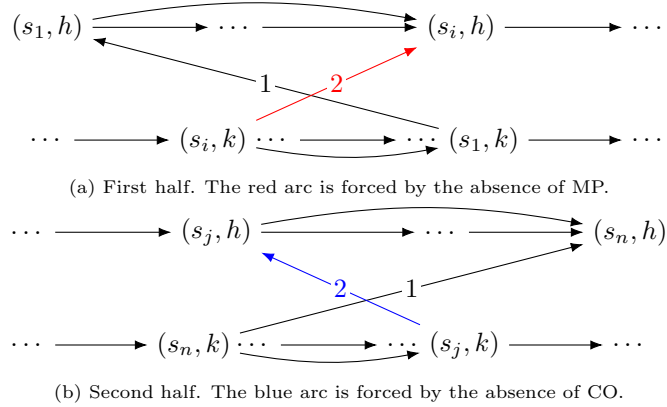

    \centering
        \begin{subfigure}{\textwidth}
            \centering
            \includestandalone{figs/dominance_proof2}
            \caption{First half. The red arc is forced by the absence of MP.}
            \label{fig:dominance 2}
        \end{subfigure}
        \begin{subfigure}{\textwidth}
            \centering
            \includestandalone{figs/dominance_proof3}
            \caption{Second half. The blue arc is forced by the absence of CO.}
            \label{fig:dominance 3}
        \end{subfigure}
        \caption{The graphical structure of the proof of Theorem~\ref{dominance theorem}.}
        \label{fig:dominance proof}
    \end{figure}
    
    We will assume there is an arc $\arc{(s_1,k)}{(s_1,h)}$\footnote{The opposite case $\arc{(s_1,h)}{(s_1,k)}$ is symmetric, with the roles of MP and CO reversed.}. The proof has two parts. First, we consider any strategy $s_i$ where $\arc{(s_i,k)}{(s_1,k)}$. The subgame $\{s_i,s_1\}\times \{h,k\}$ cannot be MP, so we conclude there is an arc $\arc{(s_i,k)}{(s_i,h)}$ (Figure~\ref{fig:dominance 2}). In particular, since $s_n$ has an arc $\arc{(s_n,k)}{(s_1,k)}$, we conclude there is an arc $\arc{(s_n,k)}{(s_n,h)}$. This leads us to the second part, where we consider strategies $s_j$ where $\arc{(s_n,k)}{(s_j,k)}$. The subgame $\{s_j,s_n\}\times \{h,k\}$ cannot be CO, so we conclude there is an arc $\arc{(s_j,k)}{(s_j,h)}$ (Figure~\ref{fig:dominance 3}). We find that $h$ dominates $k$, which is a contradiction.
\end{proof}

There are several interesting variants of this question which have been explored, and often have a graph-theoretic flavour. A broad generalisation of Shapley's theorem is the work of \cite{boros2016sufficient}. In it the authors provide a \emph{complete description} of when the structure of the $2\times 2$ subgames of a two-player game is sufficient to guarantee the existence of a PNE. They do this by examining all possible (non-strict) $2\times 2$ preference graphs, and show how their combinations force structure on the overall preference graph. \cite{takahashi2002pure} proved that existence of a PNE in \emph{every} subgame of a two-player game implies the game is weakly acyclic. \cite{fabrikant_structure_2010} generalised this to $n$-player games, proving that the existence of a \emph{unique} PNE in all subgames implies the game is weakly acyclic.

Zero-sum games do not contain CO; Theorem~\ref{dominance theorem} implies that every non-iteratively-dominated strategy must participate in at least one MP. Consequently, the preference graphs of zero-sum games are often highly connected. An interesting structural question is whether the converse is true: high connectedness implies 4-cycles.

\begin{conj}
    Every strongly connected two-player preference graph contains a 4-cycle.
\end{conj}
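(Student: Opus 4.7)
The plan is to prove the contrapositive: if the two-player preference graph $G$ contains no 4-cycle, then $G$ is not strongly connected. The total orders on rows and columns imply that every directed 4-cycle in a preference graph must alternate between player-1 and player-2 arcs (any non-alternating 4-cycle would require two consecutive arcs of the same player to cancel within a single row or column), so 4-cycles coincide exactly with MP $2\times 2$ subgames. It therefore suffices to show that every strongly connected two-player preference graph admits an MP subgame. For clarity I assume strictness; the non-strict case can be recovered by a standard tie-breaking perturbation.

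The first step is a shortest-cycle analysis. Row and column total orders rule out 2-cycles and 3-cycles, so any cycle in $G$ has length at least 4; if the shortest cycle has length 4 we are done. Otherwise let $C$ be shortest with $|C| \geq 5$, having $a$ player-1 arcs and $b$ player-2 arcs, $a+b=|C|$ and $a,b\geq 2$. Partition $C$ into $a$ ``H-runs'' of maximal consecutive nodes sharing an $x$-coordinate (delimited by player-1 arcs). Any H-run of length $\geq 3$ contains a non-consecutive pair sharing an $x$-coordinate, hence comparable and joined by an arc (strictness); this chord yields a strictly shorter cycle, contradicting minimality. A symmetric argument applies to V-runs. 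Counting on the run-length partitions then forces $a=b=|C|/2$ with every H-run and V-run of length exactly 2, and moreover all $a$ H-run $x$-values (and all $a$ V-run $y$-values) pairwise distinct (otherwise two non-adjacent runs share a coordinate and again give a chord). This reduces the remaining case to ``pure zigzag'' cycles of even length $2a$ with $a\geq 3$, the minimal instance being Shapley's 6-cycle (Figure~\ref{fig:shapley-cycle}).

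The main obstacle is to rule out such pure zigzag cycles in a strongly connected $G$. Consider the $a\times a$ induced subgame spanned by the strategies appearing in $C$; by Lemma~\ref{lem: preference graphs as graphs} its preference graph is an induced subgraph of $G$ containing $C$ together with $a(a-2)$ off-cycle nodes. Iterating the chord analysis inside this subgame, I would show that the no-MP hypothesis forces each off-cycle node to be a \emph{source} of the $a\times a$ subgame, exactly as in Shapley's game: any in-subgame incoming arc into an off-cycle node, combined with the tight preference constraints forced by the zigzag cycle arcs, yields a double-flipped $2\times 2$ subgame in MP orientation. Since $G$ is strongly connected, each such source $o$ must nevertheless receive an incoming arc in $G$, and this arc must come from a node using a player-1 strategy outside $\{x_1,\ldots,x_a\}$ or a player-2 strategy outside $\{y_1,\ldots,y_a\}$. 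Combining the rigid row- and column-orderings forced by $o$'s source property with the orientations of this new outside strategy and the cycle arcs, one should locate a $2\times 2$ subgame whose preferences double-flip in MP rather than CO orientation, contradicting no-MP.

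I expect closing this last step to be the delicate part, since the forced orderings leave many configurations to check; an induction on $a$ or on $|S_1|+|S_2|$ is likely required, exploiting that removing any strategy preserves the no-MP hypothesis and so (by induction) destroys strong connectivity in a controlled way. As a backup route, one could try to invoke the complete characterization of \cite{boros2016sufficient} of which $2\times 2$-subgame exclusion conditions force PNE existence in two-player games: if excluding MP alone suffices to force weak acyclicity of the preference graph, then in a no-MP graph every reachability path terminates in a sink SCC strictly smaller than $V$, making strong connectivity and the no-MP hypothesis immediately incompatible.
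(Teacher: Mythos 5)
This statement is posed in the paper as an open \emph{conjecture}; the paper offers no proof, so your attempt has to stand entirely on its own. Your preliminary reductions are sound and essentially reproduce the machinery of Lemma~\ref{shapley uniqueness}: in a strict two-player graph, consecutive same-player arcs in a cycle can be shortcut by row/column transitivity, so 2- and 3-cycles are impossible, every 4-cycle alternates players and is exactly an MP subgame, and a minimal cycle of length $2a\geq 6$ is a ``pure zigzag'' on $a$ distinct strategies per player. But the step you yourself flag as delicate is the entire content of the conjecture, and as written it is not a proof: the claim that the no-MP hypothesis forces every off-cycle node of the $a\times a$ subgame to be a source is only verified (in the paper) for $a=3$, and even there the argument uses the additional hypothesis that the off-cycle nodes are not sinks; for $a\geq 4$ the $a(a-2)$ off-cycle nodes admit far more configurations, and the final appeal to strong connectivity (``one should locate a $2\times 2$ subgame \ldots'') is a statement of intent, not an argument. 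Restricting attention to a single minimal cycle and the subgame it spans may also discard exactly the structure that strong connectivity of the full graph provides.

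Your backup route is concretely broken. Shapley's graph itself is a $3\times 3$ preference graph with no 4-cycle (Lemma~\ref{shapley uniqueness}) and no sink, so excluding MP from all $2\times 2$ subgames does \emph{not} force the existence of a PNE, and a fortiori does not force weak acyclicity; the conditional you hope to extract from \cite{boros2016sufficient} is false. (Shapley's graph is of course consistent with the conjecture, since its three sources prevent strong connectivity, but it kills the proposed shortcut.) In summary: the reductions are correct and useful, but the conjecture remains open after your argument, and any resolution will have to engage directly with how strong connectivity interacts with the rigid source/zigzag structure you have identified.
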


While dominance of strategies is captured by the preference graph, it is not always reflected in the \emph{connectivity} of the graph. This is somewhat counter-intuitive, and means that sink equilibria can sometimes contain dominated strategies. In Figure \ref{fig:connected dominance} we show a $2\times 3$ game which is strongly connected, so all profiles are contained in the sink equilibrium, yet there is a dominated strategy. Common dynamics like the replicator and fictitious play cause dominated strategies to vanish in the long run, so this shows that the empirical behaviour can be a subset of the sink equilibrium in some cases. See Section~\ref{sec: conclusions}.

\begin{figure}
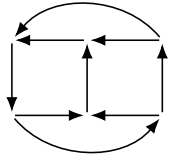

    \centering
    \includestandalone{figs/2x3_dominance}
    \caption{A $2\times 3$ game that is strongly connected, yet contains a dominated strategy; the second column dominates the third.}
    \label{fig:connected dominance}
\end{figure}






\section{Game dynamics} \label{sec: dynamics}

In this section we explore how the preference graph relates to a variety of different well-studied game dynamics.

\subsection{Fictitious play} \label{sec: FP}


\emph{Fictitious play} (FP) is the oldest and best-studied learning algorithm in games. It is based on a simple rule: at each round, play a best response to the opponent's empirical distribution of play over all previous rounds. Formally
\begin{defn}[Two-player discrete-time fictitious play]
    A (two-player) \emph{fictitious play sequence} is a sequence of profiles $p_n = (s_1^n, s_2^n) \in S_1\times S_2$ where at each round $n$, $s_1^n \in \arg\max \sum_{i=1}^{n-1} u_1(p_i)$ and $s_2^n \in \arg\max\sum_{i=1}^{n-1} u_2(p_i)$, respectively.
\end{defn}
While the set $\arg\max \sum_{i=1}^{n-1} u_1(p_i)$ generically consists of a single strategy, it is possible for multiple strategies to be best-responses at some time points. By this definition, any choice from these strategies constitutes an FP sequence. Other definitions prescribe particular tie-breaking rules \cite{miyasawa1961convergence,krishna1992learning}, or more generally allow any mixture of best-responses to be played at tie points. The latter gives rise to the best-response dynamics \citep{hofbauer_evolutionary_2003}, and includes more possible sequences than those we analyse here. With some more setup, FP can be expressed in continuous-time rather than discrete rounds, a model which is sometimes easier to work with \citep{krishna_convergence_1998,berger_fictitious_2005,berger_two_2007}.

This algorithm, first presented by~\cite{brown1949some}, was an early hope for explaining Nash equilibrium~\citep{nash_non-cooperative_1951} behaviour. \citeauthor{robinson_iterative_1951} proved that the empirical distributions over strategies played in FP sequences in zero-sum games converges to the Nash equilibrium. This set off a line of research in game dynamics which continues to the modern day. For fictitious play, proofs of convergence in various classes of games were steadily achieved: $2\times 2$ games~\citep{miyasawa1961convergence}, $2\times 3$ games \citep{monderer_fictitious_1997}, potential games \citep{monderer_potential_1996}, dominance-solvable games~\cite{milgrom_adaptive_1991}, quasi-supermodular games \citep{milgrom_rationalizability_1990} and $2\times N$ games \citep{berger_fictitious_2005}.

Despite these successes, the initial hope that FP would justify Nash equilibrium play in all games was quickly extinguished: \cite{shapley_topics_1964} demonstrated a robust example of a $3\times 3$ game where FP did not converge. It is here that the preference graph enters the story. While often later presented as a particular payoff matrix~\citep{krishna_convergence_1998,jordan_three_1993,gaunersdorfer_fictitious_1995}, Shapley's actual discovery was ``a whole class of order-equivalent games, [which did] not depend on numerical quirks in the payoff matrices." In our terminology, Shapley's example defined a particular preference graph (which we call the \emph{Shapley graph}, shown in Figure~\ref{fig:shapley}), and his proof demonstrated that \emph{any} game with the Shapley graph (which we call a \emph{Shapley game}) cannot converge under FP. In other words, the obstacle to FP convergence is the graph structure.

While nowadays typically defined with player's updating their beliefs simultaneously (as above), \citeauthor{brown1949some}'s original FP described the players as updating their preferences alternately \citep{berger_browns_2007}. As each player always selects a strategy that improves on the current one, this results in every FP sequence being a walk on the preference graph\footnote{But is not necessarily a walk on the best-response graph---recall Section~\ref{sec: better and best}.}. In ordinal potential games, this gives us a simple graph-theoretic proof of convergence: every walk must terminate after finitely many steps at a PNE \citep{berger_browns_2007}. \cite{monderer_fictitious_1997} and \cite{berger_two_2007, berger_browns_2007} studied `improvement principles' for FP, which explain more broadly how the preference graph constrains FP sequences. Under simultaneous FP, it is possible for both players to update their beliefs simultaneously, leading to a transition that is \emph{not} an arc in the preference graph, but \cite{berger_two_2007} proved that such transitions take place on a subgame with the graph structure of DD (Figure~\ref{fig:DD}), ensuring that reachability in the preference graph is consistent with the profiles reachable under FP sequences.

Ordinal potential games are not an isolated example. Surprisingly many of the milestones in the study of FP can be explained by graph structure. 
Dominance-solvable games, for instance, converge under FP and are characterised by their graph structure. As we saw in the previous section, quasi-supermodular games \citep{milgrom_rationalizability_1990}, like potential games, are defined by graph structure, and this can be used to establish when FP converges \citep{berger_two_2007}. Convergence in $2\times 2$ games can be proved by examining the four possible preference graphs: two are dominance-solvable (SD and DD), one is potential (CO), and the remaining graph is the 4-cycle (MP), which is always strategically zero-sum. A similar procedure can be carried out on the $2\times 3$ games, where only three examples exist without dominated strategies (see Figure~\ref{fig: 2x3}). A case-by-case analysis of these graphs is actually the approach used in \citeauthor{monderer_fictitious_1997}'s original proof of FP convergence in $2\times 3$ games.

\begin{figure}
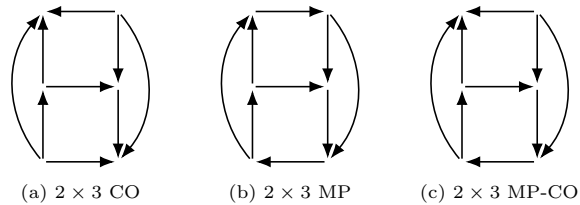

    \centering
    \begin{subfigure}{0.22\textwidth}
    \centering
    \includestandalone{figs/2x3_CO}
    \caption{$2\times 3$ CO}
    \label{fig:2x3 CO}
    \end{subfigure}
    \begin{subfigure}{0.22\textwidth}
    \centering
    \includestandalone{figs/2x3_MP}
    \caption{$2\times 3$ MP}
    \label{fig:2x3 MP}
    \end{subfigure}
    \begin{subfigure}{0.22\textwidth}
    \centering
    \includestandalone{figs/2x3_MPCO}
    \caption{$2\times 3$ MP-CO}
    \label{fig:2x3 MPCO}
    \end{subfigure}
    \caption{The three preference graphs of strict $2\times 3$ games without dominated strategies \citep{biggar_graph_2023}}
    \label{fig: 2x3}
\end{figure}


The class which, at first, resists a graph-theoretic explanation of convergence is zero-sum games. However, even this class has non-trivial connections to the preference graph. Consider the $2\times 2$ MP game, whose preference graph is the 4-cycle. Any game with this graph possesses a unique interior equilibrium, and FP always converges to it. It turns out that the 4-cycle is unique in this regard---if FP robustly follows any cycle larger than 4, the empirical distributions of strategies do not converge \citep{krishna_convergence_1998}! 
Revisiting the Shapley graph, we observe that it possesses 3 sources with arcs into a single sink equilibrium, which is a cycle of length 6. Any FP path reaches and follows this cycle \citep{monderer_fictitious_1997,berger_two_2007}, but the empirical distributions of strategies do not converge because the cycle is larger than four. This property actually characterises Shapley's graph: it is the unique smallest preference graph which has neither a sink nor a 4-cycle.

\begin{lem} \label{shapley uniqueness}
    Shapley's graph is the unique smallest two-player game with neither a sink nor 4-cycle.
\end{lem}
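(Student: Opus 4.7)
The plan is a two-part argument: first rule out every smaller preference graph, then establish uniqueness at the $3\times 3$ level.

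Part one (no smaller graph qualifies): A two-player preference graph strictly smaller than Shapley's must be $2\times n$ for some $n\geq 2$, so it suffices to show that every sink-free $2\times n$ preference graph contains a 4-cycle. For each row $i\in\{1,2\}$ let $t_2(i)$ denote player 2's top column in row $i$, and let $C_i = \{j : t_1(j) = i\}$ where $t_1(j)$ is player 1's top row in column $j$. Absence of a sink means $(i,t_2(i))$ is not a PNE for either $i$, forcing $t_2(1)\in C_2$ and $t_2(2)\in C_1$. I would then check directly that the $2\times 2$ subgame on columns $\{t_2(1),t_2(2)\}$ has the MP structure: player 1's preferences oppose in the two columns by construction, and player 2's preferences oppose in the two rows because each $t_2(i)$ is top in row $i$. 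This gives the required 4-cycle, and the $2\times 2$ case is the specialisation $n=2$ (alternatively, by inspection of Figure~\ref{fig:2x2}).

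Part two (uniqueness at $3\times 3$): Let $G$ be a $3\times 3$ preference graph with no sink and no 4-cycle. Since $G$ contains no $2\times 2$ MP subgame (which would itself be a 4-cycle), every $2\times 2$ subgame of $G$ has a PNE; hence each is SD, DD, or CO. I first claim $t_1, t_2 : [3]\to [3]$ are both permutations. Suppose $t_1(j_1) = t_1(j_2) = r$. Then row $r$ dominates the other rows in columns $j_1,j_2$, so the subgame on $\{j_1,j_2\}$ has a PNE in row $r$; no global PNE then forces $t_2(r) = j_3$, the third column. A short case analysis on the value of $t_1(j_3)$, using that each of the subgames $\{r,r'\}\times\{j_1,j_3\}$ and $\{r,r'\}\times\{j_2,j_3\}$ must avoid the MP structure while being consistent with $t_2(r)=j_3$, yields a contradiction in every sub-case. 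Symmetrically $t_2$ is a permutation.

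Since $t_1\circ t_2$ has no fixed point by the no-PNE condition, it is a derangement of $[3]$, hence one of the two 3-cycles in $S_3$. Using the freedom to relabel rows and columns I may normalise to $t_1 = \mathrm{id}$; the two possible values of $t_2$ are then interchanged by transposing the game, so I may further assume $t_2 = (1\,2\,3)$. This gives exactly the top-preference pattern of Shapley's game. What remains is to determine the second-ranked row in each column and the second-ranked column in each row. I would use the no-4-cycle condition on the nine $2\times 2$ subgames: any ``anti-Shapley'' choice of a second preference creates a subgame of MP form elsewhere in $G$, as a direct tabulation shows. This forces the remaining data to agree with Shapley's graph.

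The main obstacle is the case analysis proving that $t_1$ (and $t_2$) are permutations, since this requires chasing the no-4-cycle condition through several overlapping $2\times 2$ subgames. Once the top preferences are pinned down, eliminating the remaining $2^6$ choices for second preferences is routine bookkeeping.
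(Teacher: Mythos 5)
Your argument is correct, but it reaches the result by a genuinely different route from the paper's. For the lower bound, the paper uses a parity argument: a minimal cycle cannot contain two consecutive deviations by the same player, so in a two-player game it alternates between players and has even length, hence length at least six once 4-cycles are excluded, which forces at least three strategies per player; that argument is deliberately player-count-agnostic and is reused verbatim for Jordan's graph (Lemma~\ref{jordan uniquenesss}). Your Part one instead proves the stronger standalone fact that every sink-free $2\times n$ game contains an MP subgame on the columns $\{t_2(1),t_2(2)\}$ --- a pleasant observation in the spirit of Shapley's $2\times 2$-subgame theorem (Section~\ref{sec:dominance}), though it does not transfer to the three-player setting. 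For uniqueness at $3\times 3$, the paper anchors the case analysis on the 6-cycle itself: the three off-cycle profiles are pairwise non-comparable, each needs an out-arc into the cycle, and transitivity along rows and columns plus the absence of 4-cycles forces every arc at those profiles to point into the cycle (Figure~\ref{fig:proof shapley}). You anchor instead on the best-response maps, showing $t_1$ and $t_2$ are permutations with $t_1\circ t_2$ a derangement --- which is exactly the dual statement that best responses traverse a 6-cycle --- and then finish by tabulation; your permutation step checks out, since avoiding MP in $\{r,r'\}\times\{j_1,j_3\}$ and $\{r,r'\}\times\{j_2,j_3\}$ forces $t_2(r')=j_3$ and hence the PNE $(r',j_3)$. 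One simplification for your final step: you need not examine the $2^6$ completions jointly, because each single anti-Shapley second preference, combined only with the already-fixed top preferences, already closes a 4-cycle in some $2\times 2$ subgame, and by the cyclic symmetry of the top pattern the six binary choices are therefore eliminated independently; the bookkeeping collapses to essentially one check, putting your proof at about the same length as the paper's.
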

\begin{proof}
    Acyclic graphs have sinks, so a graph without a sink must have a cycle. Observe first that if a cycle contains two consecutive deviations $\arc{(a,y)}{(b,y)}$ $\arc{(b,y)}{(c,y)}$ for the same player, then a shorter cycle exists with these arcs replaced by $\arc{(a,y)}{(c,y)}$. In particular, in two players, a minimal cycle must alternate between players and thus have even length. Without a 4-cycle, a minimal cycle has length at least six. 
    A 6-cycle requires at least 3 strategies per player, so the smallest possible size for such a two-player game is $3\times 3$. Now suppose a $3\times 3$ game has a 6-cycle and no sinks. The three non-cycle profiles cannot be comparable, and because each is not a sink it must have at least one arc into another profile, necessarily on the 6-cycle. This gives us, without loss of generality, the black arcs in Figure~\ref{fig:proof shapley}. 

    \begin{figure}
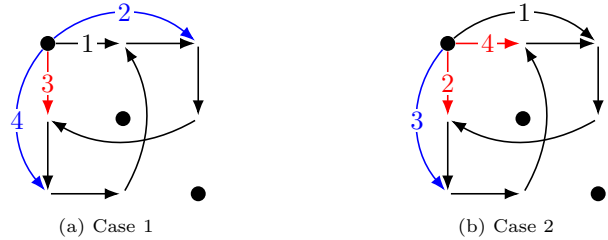

        \centering
        \begin{subfigure}{.4\textwidth}
            \centering
            \includestandalone{figs/shapley_proof1}
            \caption{Case 1}
            \label{fig:s1}
        \end{subfigure}
        \quad
        \begin{subfigure}{.4\textwidth}
            \centering
            \includestandalone{figs/shapley_proof2}
            \caption{Case 2}
            \label{fig:s2}
        \end{subfigure}
        \caption{The two cases in the proof of Lemma~\ref{shapley uniqueness}, determined by which arc is selected first (labelled 1). Black arcs are by assumption (without loss of generality), blue arcs are determined by transitivity along rows and columns and red arcs are determined by the absence of 4-cycles. The numbers are the order in which we consider each arc.}
        \label{fig:proof shapley}
    \end{figure}

    The proof splits into two very similar cases depending on what arc is chosen first (without loss of generality). We examine the arcs in the order given by the labels in Figures~\ref{fig:s1} and \ref{fig:s2}, and determine in each case that the constraints of transitivity along rows and columns and the absence of 4-cycles forces the arcs to go from the node in the cycle. Repeating this argument for each of the three non-cycle nodes gives us the Shapley graph. As a result, it is the unique $3\times 3$ game with this property, and hence the unique smallest game as any such two-player game needs to be at least $3\times 3$.
\end{proof}

\cite{jordan_three_1993} introduced a similar game to study the non-convergence of FP and related dynamics. This game, sometimes called `mismatching pennies'~\citep{sandholm2010population,kleinberg_beyond_2011}, is a $2\times 2\times 2$ game where players sit in a circle, and each aims to \emph{mis}match the choice of the player to their left. The resultant preference graph is shown in Figure~\ref{fig:jordan}. Like Shapley's game, Jordan's has a unique interior equilibrium, but its three-dimensional strategy space is easier to visualise (it is isomorphic to the unit cube)~\citep[see][Figure 9.6]{sandholm2010population}. Comparing Figures~\ref{fig:shapley} and \ref{fig:jordan}, we see that Jordan's game and Shapley's have a very similar structure---in fact, Jordan's game is the unique smallest game (in terms of strategy profiles, with any number of players) to possess neither a 4-cycle nor a sink.

\begin{figure}
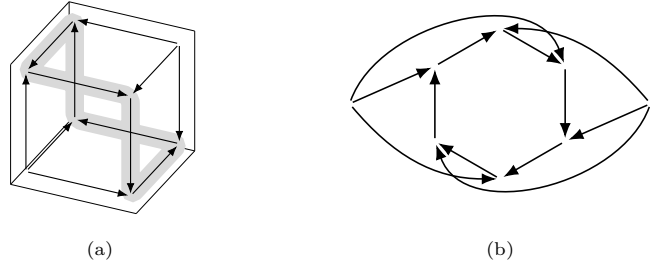

    \centering
    \begin{subfigure}{.4\textwidth}
        \centering
        \includestandalone{figs/jordan_square}
        \caption{}
        \label{fig:jordan-square}
    \end{subfigure}
    \quad
    \begin{subfigure}{.4\textwidth}
        \centering
        \includestandalone{figs/jordan}
        \caption{}
        \label{fig:jordan-cycle}
    \end{subfigure}
    \caption{Two drawings of Jordan's game `Mismatching Pennies' \citep{jordan_three_1993}. Figure~\ref{fig:jordan-square} reflects the payoff matrix structure, while Figure~\ref{fig:jordan} more clearly shows the graph structure. The graph has two sources with arcs into a single sink equilibrium, which is a 6-cycle, highlighted in grey.}
    \label{fig:jordan}
\end{figure}

\begin{lem} \label{jordan uniquenesss}
    Jordan's graph is the unique game with fewest profiles with neither a sink nor 4-cycle.
\end{lem}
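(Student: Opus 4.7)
The plan is to extend Lemma~\ref{shapley uniqueness} beyond two players, by bounding below the profile count of any candidate game and then classifying the minimal case. Write $n$ for the number of players and $m_1,\dots,m_n$ for the strategy-set sizes; after discarding trivial players we may assume $m_i\geq 2$, so the profile count is at least $2^n$. I would rule out smaller game sizes in turn: a $1$-player game has a total-order preference graph and hence a sink; any $2$-player game with at most $8$ profiles has a sink or $4$-cycle by Lemma~\ref{shapley uniqueness} (whose minimum is $9$ profiles), which in particular handles the $2\times 4$ case; any $n\geq 4$ game has at least $2^4=16>8$ profiles; and for $n=3$ the minimum $2^3=8$ is achieved exactly by $2\times 2\times 2$ games. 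This reduces the problem to showing that Jordan's graph is, up to isomorphism, the unique $2\times 2\times 2$ preference graph with neither a sink nor a $4$-cycle.

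The preference graph of a $2\times 2\times 2$ game is an orientation of the $3$-cube $Q_3$ ($8$ vertices, $12$ edges). Since $Q_3$ has girth $4$ and its only $4$-cycles are its six faces, the hypothesis becomes: no vertex has in-degree $3$, and no face is a directed $4$-cycle. Letting $a,b,c$ denote the number of vertices of in-degree $0$, $1$, and $2$, the relations $a+b+c=8$ and $b+2c=12$ give $c=a+4$, $b=4-2a$, and hence $a\in\{0,1,2\}$. The clean case is $a=2$. Two sources cannot be adjacent (they would otherwise share an edge while both having in-degree $0$). If they lie at distance $2$, they share exactly two common neighbours, each receiving in-degree $2$ from the sources; the remaining four vertices—each needing one further incoming arc—span a face of $Q_3$ whose orientation is forced to be a directed $4$-cycle, a contradiction. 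So the sources must be antipodal, and the six non-source vertices span a $6$-cycle of $Q_3$ with each vertex receiving one arc from a source; the in-degree-$2$ requirement then forces the remaining six edges (which form precisely this $6$-cycle) to be oriented cyclically. Up to isomorphism, this is Jordan's graph.

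For $a\in\{0,1\}$ I would show both are impossible. A sink-free orientation of $Q_3$ contains a directed cycle, and by bipartiteness and absence of $4$-cycles this cycle has length $6$ or $8$. In the length-$6$ case the two omitted vertices are antipodal and adjacent to alternate cycle vertices; combined with the in-degree profile $(b,c)=(4-2a,a+4)$, one of the four faces meeting the $6$-cycle is forced to be a directed $4$-cycle. The Hamiltonian (length-$8$) case is handled similarly: orienting the four off-cycle edges consistently with the no-directed-face condition on the six faces forces one of them to close into a directed $4$-cycle.

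The main obstacle is the substep for $a\in\{0,1\}$. Unlike the rigid two-source case, these admit many locally consistent configurations, and ruling each out requires care. I would normalise a directed $6$- or $8$-cycle using the automorphism group of $Q_3$ (order $48$) to reduce the number of sub-cases, and then dispatch each by constraint propagation analogous to the distance-$2$ source argument. A slicker alternative, if it can be made to work, is to show directly that having fewer than two sources forces one of the six faces of $Q_3$ to be a directed $4$-cycle via the in-degree constraints on its shared edges, avoiding explicit enumeration.
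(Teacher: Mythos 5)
Your reduction to the $2\times 2\times 2$ case is sound (and slightly more explicit than the paper's, which simply notes that any game with fewer than $9$ profiles must have at least three players), and your treatment of the two-source case $a=2$ is correct and complete: the distance-$2$ contradiction and the forced directed equatorial $6$-cycle for antipodal sources are exactly right. The problem is the cases $a\in\{0,1\}$, which you yourself flag as ``the main obstacle'' and then only sketch. The assertions that ``one of the four faces meeting the $6$-cycle is forced to be a directed $4$-cycle'' and that ``the Hamiltonian case is handled similarly'' are not arguments; as written, the proof is incomplete precisely where the work is.

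The paper avoids this entirely by a different decomposition: instead of casing on the number of sources, it takes a \emph{minimal directed cycle} (one exists since the orientation is sink-free, hence not acyclic). All three players must participate (a cycle using only two players lives in a $2\times 2$ face and is a $4$-cycle), so the length is even and at least $6$; a directed $8$-cycle cannot be minimal because any of the four chords of a Hamiltonian cycle of $Q_3$ closes a shorter directed cycle, which by parity is a forbidden $4$-cycle. So there is a directed $6$-cycle, and minimality forces its two omitted vertices to be antipodal (a $6$-cycle of $Q_3$ omitting two adjacent vertices has a chord, which again closes a $4$-cycle). Each omitted vertex has some arc into the cycle (it is not a sink), and the absence of $4$-cycles then propagates to force \emph{all} its arcs outward. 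This makes both omitted vertices sources, so your cases $a\in\{0,1\}$ are vacuous --- but establishing that they are vacuous is essentially this argument, not a separate enumeration. I would recommend replacing your source-count casework with the minimal-cycle argument, or at least using it to prove $a=2$ before running your (correct) antipodal-source analysis.
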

\begin{proof}
    We know that Shapley's game is the minimal two-player game with this property, and it has 9 profiles. Any smaller game can have at most 8, and must have at least three players---thus this game must be $2\times 2\times 2$. All three players must participate in the cycle, so it must have even length. Note that an 8-cycle cannot be minimal in this game, as some arcs would `shortcut' the cycle. Thus we seek a $2\times 2\times 2$ game containing a 6-cycle and no sinks. There are two non-cycle profiles, which must be diagonally opposite each other. Each is not a sink, so must have at least one arc into another profile, necessarily on the 6-cycle. This gives us the black arcs in Figure~\ref{fig:proof jordan}. 

    \begin{figure}
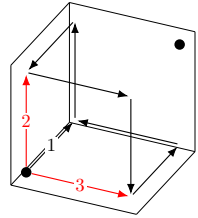

        \centering
        \includestandalone{figs/jordan_proof}
        \caption{Once one arc (black) is fixed, the remaining arcs (red) are forced by the absence of 4-cycles. The numbers are the order in which we consider each arc.}
        \label{fig:proof jordan}
    \end{figure}
    Again, we find that the absence of 4-cycles forces the remaining arcs to go out of this node into the 6-cycle. Repeating this argument for both non-cycle nodes gives us the Jordan graph.
\end{proof}

\citeauthor{jordan_three_1993} proved convergence to the sink equilibrium (the 6-cycle) for a particular choice of payoffs in this game\footnote{Specifically, each player received a payoff of 1 for matching and 0 for mismatching.}. This was generalised by \cite{gaunersdorfer_fictitious_1995} who showed that FP converges to a 6-cycle for much more general choices of parameters, demonstrating that the limit behaviour is determined by the graph rather than the payoffs.

The absence of 4-cycles appears to explain non-convergence of FP in Shapley's and Jordan's games. This has a surprising echoes from the case of zero-sum games, where fictitious play does converge---there, every non-dominance-solvable game always has a 4-cycle. More strongly, \emph{every strategy} must participate in at least one 4-cycle. This follows from Shapley's theorem \citep{shapley_topics_1964} (see Section~\ref{sec: zero-sum}, and Theorem~\ref{dominance theorem} in particular).


Note that cycle structure appears to be important even beyond the sink equilibria. For instance, the sink equilibria of ordinal potential and weakly acyclic games are the same (they are all PNEs) yet the convergence behaviour can differ because weakly acyclic games have cycles elsewhere. \cite{foster_nonconvergence_1998} demonstrate a very interesting example of a $9\times 9$ game which is weakly acyclic, yet from certain initial conditions the game converges to an 18-cycle in the preference graph. This also shows that the sink equilibria do not capture all long-run outcomes of FP in large games, because this cycle is not a sink.

A complete understanding of the convergence of FP in general games remains unknown. However, a theme clearly emerges from this story: with increasing understanding of FP comes an increasing understanding of the influence of the preference graph. In particular, graph properties like connectivity and cycles lie at the heart of what is known about this dynamic.

\subsection{Replicator Dynamic} \label{sec: replicator}

In evolutionary game theory, the flagship dynamic is the \emph{replicator dynamic}, a conceptually simple and well-motivated model of co-evolution \citep{hofbauer_evolutionary_2003,sandholm2010population}. This model was introduced in \cite{smith1973logic}, the landmark study which created the field of evolutionary game theory. It was formalised and named by \cite{taylor_evolutionary_1978}, and has since retained and strengthened its central role in the field of game dynamics.
\begin{defn}[Replicator dynamic \citep{taylor_evolutionary_1978}]
    The replicator dynamic is the solution of the following ordinary differential equation:
    \begin{equation}
\dot x_s^p = x_s^p\big (\exptu_p(s; x_{-p}) - \sum_{t\in S_p} x_t^p \exptu_p(t; x_{-p}) \big )
\end{equation}
\end{defn}



Unlike fictitious play, the replicator was motivated by evolutionary biology, but it turns out to also have an intimate connection with learning: the replicator is the continuous-time limit of the \emph{multiplicative weight update method}, which lies at the heart of modern online learning techniques \citep{arora_multiplicative_2012}. Compared to FP, the replicator can be viewed as an application of the ML technique of \emph{regularisation}---instead of selecting the best strategy at each step, the players select a mixed strategy which is a weighted approximation of the best strategy. Specifically, it is the \emph{logit} \citep{luce1959individual} of the best response, also called the `\emph{softmax}'~\citep{boyd2004convex}. Like FP, the study of the replicator has a long history of convergence results. PNEs are attractors of the replicator \citep{sandholm2010population}, and we converge to them in many classes of games where they are guaranteed to exist, including potential \citep{monderer_potential_1996}, supermodular \citep{milgrom_rationalizability_1990} and dominance-solvable games~\citep{sandholm2010population}. More generally, whenever a subgame is closed under `weakly better responses' it is asymptotically stable under the replicator \citep{gaunersdorfer_fictitious_1995}; in preference graph terminology, this occurs exactly when a sink equilibrium is a subgame, generalising the PNE case. Actually, this is the only case: a subgame is asymptotically stable under the replicator if and only if it is closed in the preference graph~\citep{biggar_replicator_2023}. Unlike FP, the replicator doesn't converge to equilibria in zero-sum games. However, the trajectories of FP in \emph{belief space} are implicitly averaged over time. When we take the \emph{time-average} of the replicator, it closely approximates FP \citep{gaunersdorfer_fictitious_1995,hofbauer_time_2009,viossat_no-regret_2013}. Consequently, the time-average of the replicator dynamic also converges to equilibrium in zero-sum games. This generalises to an important theorems in online learning: any no-regret learning algorithm converges to equilibrium in a zero-sum game~\citep{cesa2006prediction}.

Like FP, a connection between the replicator and the preference graph does not seem obvious. In fact, at first glance, the replicator appears even farther from the preference graph than FP, because the replicator evolves in mixed strategy space, and its trajectories are clearly highly dependent on the cardinal values of the payoffs. However, contrary to this intuition, the preference graph turns out to also have a strong influence on the long-run behaviour of the replicator. For instance, PNEs are attractors of the replicator, and these are also sink equilibria, which demonstrates converges in the classes of games which PNEs (e.g. (ordinal) potential, (quasi-)supermodular and dominance-solvable games). When sink equilibria and Nash equilibria do not coincide, it is the sink equilibrium which appears to predict convergence of the replicator. For instance, in Shapley's and Jordan's game, we converge to the sink equilibrium rather than the Nash equilibrium \citep{kleinberg_beyond_2011,viossat_no-regret_2013}. \cite{biggar_replicator_2023} proved that every attractor of the replicator contains a sink equilibrium. As Shapley's and Jordan's games demonstrate, the same statement doesn't apply to Nash equilibria. Even in zero-sum games, an inherently cardinal class, the sink equilibria were found to totally characterise the unique attractor of the replicator \citep{biggar_attractor_2024}. The uniqueness has an easy graph-theoretic proof: all zero-sum games have a unique sink equilibrium \citep{biggar_graph_2023}, and each attractor contains a sink equilibrium. The message, it seems, is that the sink equilibria---as opposed to the Nash equilibria---appear to govern the long-run behaviour of the replicator dynamic.

\subsection{Markov game dynamics and the Price of Sinking} \label{sec: markov}

Convergence alone is not always sufficient---for some applications, we would like to additionally know which possible solution concepts are reached, and with what frequency. In other words, we wish to be able to perform \emph{equilibrium selection}~\citep{harsanyi1988general} and \emph{strategic ranking}~\citep{balduzzi_re-evaluating_2018,omidshafiei_-rank_2019}. These are especially relevant in algorithmic game theory applications, such as computing the \emph{Price of Anarchy} of a game \citep{roughgarden2010algorithmic}. Unfortunately, these can be difficult to compute~\citep{goldberg_complexity_2013}. Luckily, however, a good approximation is often sufficient. This is one of the original motivations for the study of the preference graph. 

The Price of Anarchy (PoA) is the ratio of the optimal social welfare to that achieved in the worst Nash equilibrium~\citep{koutsoupias1999worst}. The concept is typically studied on congestion games. Because congestion games are potential games \citep{monderer_potential_1996}, PNEs exist and coincide with sink equilibria. However, in non-congestion games computing the PoA is difficult~\citep{daskalakis_complexity_2009}. The sink equilibria were presented as an alternative~\citep{goemans_sink_2005,mirrokni_convergence_2004}, as a generalisation of PNEs which is easy to compute and captures convergence of simple dynamics. The resultant value---called the `Price of Sinking' \citep{goemans_sink_2005}---can be significantly different from the Price of Anarchy. \cite{kleinberg_beyond_2011} demonstrated that, on Jordan's game, the replicator dynamic converges to the sink equilibrium and the social welfare on that 6-cycle can be arbitrarily higher than that of the unique Nash equilibrium, giving a dramatically different value for the Price of Anarchy.

A related algorithmic game theory problem concerns \emph{strategic ranking}: assigning a `strength' to different strategies in a game. Aside from its game-theoretic motivation, this is an increasingly important problem in multi-agent learning \citep{balduzzi_re-evaluating_2018}, where we need to assign reward to different agents. Because the preference graph is a simple approximation of `general' dynamics on a game, it has also been applied for this purpose. An important practical example is the \emph{$\alpha$-rank} method \citep{omidshafiei_-rank_2019}.


However, in order to use sink equilibria to compute either the PoA or a strategic ranking, we must assign a numerical value to each profile, representing the `proportion' it is selected. A simple approach, often rediscovered \citep{papadimitriou_game_2019,goemans_sink_2005,young_evolution_1993,fabrikant2008complexity,omidshafiei_-rank_2019,hakim2024swim}, is to use a Markov chain on the strategy profiles to approximate dynamics on the game. In fact, this idea goes back as far as \cite{cournot1838recherches}\footnote{Cournot described a process for reaching an equilibrium where each player plays a best-response to their opponent's strategy at the last time step~\citep{moulin_dominance_1984}. This process is sometimes called \emph{Cournot tatonnement}---it is easy to see it is exactly a random walk on the best-response graph.}. Typically, a specific form of Markov chain is used, where transitions only occur between comparable profiles which increase the payoff for the deviating player. We use the term \emph{Markov game dynamic} to refer to any game dynamic defined by such a Markov chain.
\begin{defn}[Markov game dynamics]
    Let $u$ be the utility function of an $N$-player game, with $Z = \prod_{i=1}^N S_i$ the set of profiles. Let $f$ be a mapping from such a game $u$ to a Markov transition matrix $M_u$ on the profiles in $Z$. Thus $f$ defines a dynamical system on a game through the associated Markov chain of $M_u$. We say that $f$ is a \emph{Markov game dynamic} if for any $u$ and $p,q\in Z$, $(M_u)_{p,q} > 0$ implies that $p$ and $q$ are $i$-comparable and $u_i(p) \leq u_i(q)$.
\end{defn}
That is, a Markov game dynamic is one which given any game $u$, the transition graph of the Markov chain defined by $f(u)$ is a subgraph of the preference graph\footnote{Called a \emph{Markov-Conley chain} in \cite{papadimitriou_game_2019}.}. The following result is immediate:
\begin{lem} \label{lem: stationary distributions}
    If $f$ is a Markov game dynamic, then for any game $u$ the stationary distribution of $f(u)$ is contained in some sink equilibrium of $u$.
\end{lem}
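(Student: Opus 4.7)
My plan is to combine the classical classification of stationary distributions of a finite Markov chain with the defining property of a Markov game dynamic. Concretely, I would invoke the standard fact that the support of any stationary distribution of a finite Markov chain is contained in the union of its recurrent classes (closed communicating classes), each of which is a closed strongly connected component of the chain's transition graph. A particular stationary distribution concentrates its mass on a union of such recurrent classes, so by a convexity/linearity argument it suffices to show that each recurrent class $C$ of $M_u$ sits inside some sink equilibrium of $u$.

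By the definition of a Markov game dynamic, every positive-probability transition of $M_u$ corresponds to an arc of the preference graph of $u$, so the transition graph of $M_u$ is a subgraph of the preference graph. Consequently $C$ is strongly connected in the preference graph too, and lies inside some SCC $H$ of the preference graph. The main step---and the expected obstacle---is to show $H$ is in fact a sink SCC. Under the natural reading that the dynamic puts positive probability on every payoff-improving deviation available from profiles in $C$, any outgoing preference arc from $H$ would induce a positive-probability Markov transition leaving $C$, contradicting the closedness of $C$. Hence $H$ has no outgoing preference arcs, i.e., $H$ is a sink equilibrium, and $C \subseteq H$ as required. If one reads the definition more literally, so that the Markov chain may omit preference arcs, then some additional assumption (e.g.\ that at every profile some payoff-improving deviation has positive probability whenever one exists) is needed for this last step; the argument otherwise goes through unchanged.
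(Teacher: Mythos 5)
The paper gives no argument here beyond calling the result ``immediate,'' and the intended argument is surely the one you sketch: stationary distributions of a finite chain decompose over recurrent classes, each recurrent class is strongly connected in the preference graph because the transition graph of $M_u$ is a subgraph of it, and one then wants the class to be closed under preference arcs. Your caveat about this last step is not a technicality---it is the crux, and the lemma as literally stated is false. The definition only requires the transition graph of $M_u$ to be a \emph{subgraph} of the preference graph, and best-response-Markov dynamics satisfy this; but on the Inner Diamond game (Figure~\ref{fig:inner diamond and BR}) the best-response subgraph has a sink component which is a $4$-cycle and is \emph{not} a sink equilibrium of the preference graph, so by Lemma~\ref{markov dynamic lemma} such a dynamic has a stationary distribution supported exactly on that $4$-cycle, contained in no sink equilibrium of $u$. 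So an extra hypothesis of the kind you flag must be added (or the conclusion weakened to reachability-closed strongly connected sets); the statement really is only ``immediate'' for preference-Markov dynamics.

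One correction to your proposed repair: the parenthetical condition ``at every profile some payoff-improving deviation has positive probability whenever one exists'' is still not enough---the best-response-Markov counterexample above satisfies it, since every non-equilibrium profile has a positive-probability best-response exit, yet its recurrent $4$-cycle escapes every sink equilibrium. The condition you actually use in the body of your argument is the right one: \emph{every} improving deviation out of a profile of the recurrent class $C$ must receive positive probability. Then closedness of $C$ under $M_u$ forces $C$ to be closed under preference arcs, and a nonempty strongly connected closed set is precisely a sink equilibrium, giving $C=H$. (Note that your phrasing in terms of arcs leaving $H$ needs this small rearrangement, since the hypothesis only controls transitions out of profiles of $C$, not of all of $H$; establishing $C=H$ first resolves this.) With that hypothesis your proof is complete, and more careful than the paper's.
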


There are two important subclasses of Markov game dynamics: \emph{preference-Markov game dynamics}, where $(M_u)_{p,q} > 0$ \emph{if and only if} that $p$ and $q$ are $i$-comparable and $u_i(p) \leq u_i(q)$; and \emph{best-response-Markov game dynamics} where $(M_u)_{p,q} > 0$ \emph{if and only if} that $p$ and $q$ are $i$-comparable and $q_i$ is a best-response to $p_{-i}$. That is, a Markov game dynamic is preference-Markov if the transition graph of $f(u)$ is exactly the preference graph of $u$, and is best-response-Markov if the transition graph of $f(u)$ is the best-response graph of $u$. Examples of preference-Markov game dynamics include those from \citep{papadimitriou_game_2019,omidshafiei_-rank_2019,hakim2024swim}, while some example of best-response-Markov game dynamics include \citep{cournot1838recherches,goemans_sink_2005,fabrikant2008complexity}.

In preference-Markov and best-response-Markov game dynamics, Lemma~\ref{lem: stationary distributions} can be significantly strengthened:
\begin{lem} \label{markov dynamic lemma}
If $f$ is a preference-Markov (resp. best-response-Markov) game dynamic, then for any $u$, (1) each sink equilibrium (resp. best-response sink equilibrium) has a unique stationary distribution whose support is that equilibrium and (2) the set of stationary distributions is exactly the set of mixtures of these distributions.
\end{lem}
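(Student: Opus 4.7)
The plan is to reduce the claim to standard finite Markov chain theory, via two observations: (i) in a preference-Markov (resp.\ best-response-Markov) game dynamic the transition graph of $M_u$ is exactly the preference graph (resp.\ best-response graph) of $u$, and (ii) the sink equilibria of a graph are precisely the closed communicating classes of any finite Markov chain having that graph as its transition graph. Once (i) and (ii) are in hand, classical results on finite Markov chains deliver the conclusion.

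I would prove the preference-Markov case first; the best-response-Markov case is identical, with the preference graph replaced everywhere by the best-response graph. Fix a sink equilibrium $H$. By definition $H$ is a strongly connected component of the preference graph with no outgoing arcs, so by (i) the set $H$ is closed under $M_u$ (no transitions leave $H$) and the restriction $M_u\vert_H$ is irreducible: for any $p,q \in H$ strong connectedness gives a directed path $p = v_0, v_1, \ldots, v_n = q$ in the preference graph, and each arc on this path corresponds to a positive transition probability, so $(M_u^n)_{p,q} > 0$. The Perron--Frobenius theorem for finite irreducible stochastic matrices then yields a unique stationary distribution $\pi_H$ of $M_u\vert_H$, with $\mathrm{supp}(\pi_H) = H$. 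Extending $\pi_H$ by zero outside $H$ yields a stationary distribution of $M_u$ with support exactly $H$, proving part (1); uniqueness of any stationary distribution of $M_u$ supported on $H$ follows from uniqueness on the restricted chain.

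For part (2), I would invoke the standard decomposition theorem: in a finite Markov chain, every stationary distribution is a convex combination of the unique stationary distributions $\pi_H$ on the closed communicating classes $H$. Concretely, any stationary distribution $\pi$ places zero mass on transient states (these are exactly the profiles outside all sink equilibria, since any such profile has a directed path in the preference graph into some sink SCC, by finiteness), and its restriction to each closed class $H$ must be a nonnegative scalar multiple of $\pi_H$, with the scalars summing to one. Combining with part (1) shows the set of stationary distributions is exactly the simplex of mixtures of $\{\pi_H : H \text{ a sink equilibrium}\}$.

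The main obstacle is conceptual rather than technical: one must be careful that the correspondence between sink equilibria of the graph and closed communicating classes of the Markov chain is genuinely an equivalence. This follows immediately from (i), since communicating classes are the strongly connected components of the transition graph and ``closed'' means ``no outgoing arcs,'' matching the definition of sink equilibrium verbatim. Everything else is the standard theory of finite Markov chains applied to a transition graph we have already identified as the preference (or best-response) graph.
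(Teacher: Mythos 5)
Your proof is correct and follows exactly the route the paper intends: the paper itself gives no detailed argument, stating only that the lemma ``follows from standard Markov chain properties'' with a citation to Kemeny--Snell, and your identification of sink equilibria with closed communicating classes plus the standard decomposition of stationary distributions is precisely that standard argument spelled out. No gaps worth noting beyond the paper's own implicit convention that self-loops on sink profiles are permitted so that $M_u$ remains stochastic.
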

In other words, the sink equilibria characterise the qualitative properties of the stationary distributions of these dynamics. Lemma~\ref{markov dynamic lemma} follows from standard Markov chain properties \citep{kemeny1969finite}. Once we have selected a Markov game dynamic, we can easily find a distribution over sink equilibria from any given prior distribution over the profiles. 
However, while the \emph{qualitative} properties of the stationary distributions (\emph{i.e.} the support) of preference-Markov game dynamics depend only on the preference graph, the numerical values of the distribution also depend on the specific transition probabilities of the Markov chain, and the prior distribution over strategies. That is while the qualitative behaviour is the same over all preference-Markov game dynamics, quantitative values depend on the particular dynamic chosen. This leads to a problem---there is not necessarily a canonical way of assigning numerical transition probabilities to arcs in the preference graph. Note the similarity of this problem to the general problem of modelling numerical payoffs, which was one of the motivating ideas for the preference graph in the first place.

One plausible choice of preference-Markov game dynamic \citep{papadimitriou_game_2019,hakim2024swim} is to assign the transition probability on an arc to be the (normalised) weight of that arc in the weighted preference graph (Definition~\ref{def: weighted preference})\footnote{If arcs can have zero weight, then computing this long-run distribution of a Markov chain is highly complex. However, \cite{hakim2024swim} recently showed that we can compute the sink equilibrium distribution efficiently even in non-strict preference graphs.}. However, more study is needed to determine if this is truly the `canonical' preference-Markov game dynamic, which well-approximates the qualitative behaviour of more-complicated dynamics.

By contrast, for two-player games, all best-response-Markov game dynamics (such as Cournot tatonnement) have the same stationary distributions. This means we can obtain these distributions without needing to find a `canonical' choice of dynamic. The reason for this is that every sink equilibrium of the two-player best-response graph is a cycle, and there is a unique choice of Markov chain whose transition graph is a cycle. However, the distribution over the sink equilibria may still depend on the transition probabilities on non-sink profiles. In addition, we know from Section~\ref{sec: better and best} that the best-response graph is somewhat limited in its ability to approximate general dynamics. See Section~\ref{sec: conclusions}.




\subsection{General dynamics}\label{sec: other dynamics}

Fictitious play, the replicator and Markov game dynamics are three interesting classes of game dynamics which have suggestive connections with the preference graph. However, they are not all dynamics. Should we expect these connections to generalise?

We believe so, with the following intuition. Very broadly speaking, the preference graph is a discrete representation of the \emph{gradient} of the utility function in the neighbourhood of every pure profile. The arcs in and out of a pure profile in the preference graph represent directions of increasing or decreasing payoff in the strategy space. Traversing the preference graph is therefore a representation of applying \emph{gradient descent} to a game, possibly with some regularisation, and many game dynamics are directly or indirectly built on this principle~\citep{mertikopoulos2016learning}. The preference graph is a `skeleton' \citep{biggar_graph_2023,biggar_replicator_2023,pangallo_best_2019}, which approximates the dynamics in regions where all-but-one players play a nearly-pure strategy.

For instance, fictitious play and the replicator are both examples of a class of learning-inspired dynamics called `Follow-The-Regularised-Leader' (FTRL) \citep{shalev2012online}. These dynamics select, at each step, the best strategy for each player, subject to some kind of regularisation. \cite{vlatakis-gkaragkounis_no-regret_2020} proved that mixed Nash equilibria cannot be (last-iterate) stable under FTRL, and that instead any attracting set must contain a pure profile. This is extended by the theorem of \cite{biggar_replicator_2023}, from which it follows that for any \emph{subgame-invariant}\footnote{The behaviour on a subgame is independent of the rest of the game; the replicator is one example.} FTRL algorithm, every asymptotically stable set contains a sink equilibrium. 

A different class of dynamics are motivated by computationally limited players. At the extreme end of this spectrum lie Markov game dynamics (Section~\ref{sec: markov}). An intermediate approach is taken by \cite{young_evolution_1993}, who defined a dynamic called `adaptive play' where players sample from a bounded number of previous rounds of play before choosing a best-response. Young proved that adaptive play converges to PNEs in all weakly acyclic games.

The ultimate goal of game dynamics is to understand the behaviour of real-world decision-makers. For these agents, it is essentially impossible to know if a particular dynamic is `actually' employed. Further, the payoffs in games cannot be precisely known, and even simple dynamics like the replicator exhibit chaotic behaviour in small games \citep{sato_chaos_2002}\footnote{Even zero-sum games, where the time-average convergence is well-established, the day-to-day behaviour can be essentially unpredictable \citep{cheung2019vortices,papadimitriou_nash_2018}. In response to the challenge of chaotic behaviour, \citeauthor{papadimitriou_game_2019} proposed the \emph{sink chain components}---a robust dynamical systems concept similar to an attractor---as the outcome of game dynamics. While chain components might be complex objects, they suggested that we use the sink equilibria of the preference graph to approximate them. It turns out that every sink chain component of the replicator contains a sink equilibrium \citep{biggar_replicator_2023}.}.
The preference graph and its sink equilibria do not immediately solve these problems, but they present a much more positive picture than the Nash equilibria~\citep{milgrom_adaptive_1991,papadimitriou_game_2019}. The preference graph is robust to payoff changes, explains convergence and non-convergence in many games, and its sink equilibria are an easily-computable approximation of the limiting behaviour of some dynamics. The extent to which the preference graph explains the behaviour of general dynamics is a fundamental question for future work.

\section{Conclusions and Future Work} \label{sec: conclusions}

In this paper we examined the preference graph, and surveyed its influence and potential for the field of game theory. We found that the preference graph can explain many game-theoretic results, both classical and modern, in a payoff-robust way. We believe that understanding the preference graph could lead to a significantly different theory of games, especially for game dynamics.

Unsurprisingly, for such a simple yet understudied object, there are far more questions to investigate. Here we will present an outline of some broad research questions raised by the preference graph.

\begin{itemize}
    \item (General convergence of dynamics) Fundamentally, we wish to understand the behaviour of real players in real games. We believe that an answer to this question must necessarily involve dynamics/learning~\citep{papadimitriou_game_2019}. The preference graph can explain many established game dynamics results \citep{shapley_topics_1964,jordan_three_1993,berger_two_2007} and has provided paths to new solutions as well~\citep{biggar_replicator_2023,biggar_attractor_2024}. How far can this take us? Is the sink equilibrium (or some variant thereof, see below) a new `truly predictive' solution concept for game theory?
    \item (Distributions over outcomes) Like Nash equilibria, many sink equilibria exist. This means that the problem of equilibrium selection carries over from the Nash equilibrium case \citep{harsanyi1988general}. As we discussed in Section~\ref{sec: markov}, dynamics presents a solution to equilibrium selection by defining a distribution over sink equilibria, given some prior over initial beliefs~\citep{papadimitriou_game_2019}. However, we do not know how to compute this for general dynamics, or even a canonical choice of approximation by a Markov game dynamic. This is important to make predictions from normal-form games.
    \item (Other equilibrium concepts) What is the general relationship between the preference graph/sink equilibria and equilibrium concepts such as Nash equilibria, correlated equilibria and others? Can this offer insight into stability and computability of these equilibrium notions?
    \item (Refining the sink equilibrium) The sink equilibrium is a natural and promising candidate solution concept. However, we know of several instances where its prediction is weaker than expected. The sink equilibrium can contain dominated strategies (Section~\ref{sec:dominance}); under FP some other cycles in the preference graph can be stable (\cite{foster_nonconvergence_1998}, Section~\ref{sec: FP}); and in zero-sum games, the support of the Nash equilibrium can be a strict subset of the sink equilibrium~\citep{biggar_attractor_2024}. However, these limitations seem possibly fixable---particularly the first, since dominated strategies are also present in the preference graph. Overall, it seems like some refinement is needed if the sink equilibrium is to provide a general solution to game dynamics.
    \item (Partially-ordered preferences) The preference graph allows us to apply game theory to agents who possess only partially-ordered preferences. It would be interesting to understand what new concepts and questions arise from this model.
\end{itemize}

\bibliographystyle{elsarticle-harv}
\bibliography{refs}

\begin{thebibliography}{85}
\expandafter\ifx\csname natexlab\endcsname\relax\def\natexlab#1{#1}\fi
\providecommand{\url}[1]{\texttt{#1}}
\providecommand{\href}[2]{#2}
\providecommand{\path}[1]{#1}
\providecommand{\DOIprefix}{doi:}
\providecommand{\ArXivprefix}{arXiv:}
\providecommand{\URLprefix}{URL: }
\providecommand{\Pubmedprefix}{pmid:}
\providecommand{\doi}[1]{\href{http://dx.doi.org/#1}{\path{#1}}}
\providecommand{\Pubmed}[1]{\href{pmid:#1}{\path{#1}}}
\providecommand{\bibinfo}[2]{#2}
\ifx\xfnm\relax \def\xfnm[#1]{\unskip,\space#1}\fi
\bibitem[{Amiet et~al.(2021)Amiet, Collevecchio and Hamza}]{amiet_when_2021}
\bibinfo{author}{Amiet, B.}, \bibinfo{author}{Collevecchio, A.},
  \bibinfo{author}{Hamza, K.}, \bibinfo{year}{2021}.
\newblock \bibinfo{title}{When “{Better}” is better than “{Best}”}.
\newblock \bibinfo{journal}{Operations Research Letters} \bibinfo{volume}{49},
  \bibinfo{pages}{260--264}.
\newblock \bibinfo{note}{Publisher: Elsevier}.
\bibitem[{Arora et~al.(2012)Arora, Hazan and Kale}]{arora_multiplicative_2012}
\bibinfo{author}{Arora, S.}, \bibinfo{author}{Hazan, E.},
  \bibinfo{author}{Kale, S.}, \bibinfo{year}{2012}.
\newblock \bibinfo{title}{The multiplicative weights update method: a
  meta-algorithm and applications}.
\newblock \bibinfo{journal}{Theory of computing} \bibinfo{volume}{8},
  \bibinfo{pages}{121--164}.
\newblock \bibinfo{note}{Publisher: Theory of Computing Exchange}.
\bibitem[{Aumann(1987)}]{aumann1987correlated}
\bibinfo{author}{Aumann, R.J.}, \bibinfo{year}{1987}.
\newblock \bibinfo{title}{Correlated equilibrium as an expression of bayesian
  rationality}.
\newblock \bibinfo{journal}{Econometrica: Journal of the Econometric Society} ,
  \bibinfo{pages}{1--18}.
\bibitem[{Balduzzi et~al.(2018)Balduzzi, Tuyls, Perolat and
  Graepel}]{balduzzi_re-evaluating_2018}
\bibinfo{author}{Balduzzi, D.}, \bibinfo{author}{Tuyls, K.},
  \bibinfo{author}{Perolat, J.}, \bibinfo{author}{Graepel, T.},
  \bibinfo{year}{2018}.
\newblock \bibinfo{title}{Re-evaluating evaluation}.
\newblock \bibinfo{journal}{Advances in Neural Information Processing Systems}
  \bibinfo{volume}{31}.
\bibitem[{Basu and Weibull(1991)}]{basu_strategy_1991}
\bibinfo{author}{Basu, K.}, \bibinfo{author}{Weibull, J.W.},
  \bibinfo{year}{1991}.
\newblock \bibinfo{title}{Strategy subsets closed under rational behavior}.
\newblock \bibinfo{journal}{Economics Letters} \bibinfo{volume}{36},
  \bibinfo{pages}{141--146}.
\newblock \URLprefix
  \url{https://www.sciencedirect.com/science/article/pii/016517659190179O},
  \DOIprefix\doi{10.1016/0165-1765(91)90179-O}.
\bibitem[{Berger(2005)}]{berger_fictitious_2005}
\bibinfo{author}{Berger, U.}, \bibinfo{year}{2005}.
\newblock \bibinfo{title}{Fictitious play in 2×n games}.
\newblock \bibinfo{journal}{Journal of Economic Theory} \bibinfo{volume}{120},
  \bibinfo{pages}{139--154}.
\newblock \URLprefix
  \url{https://www.sciencedirect.com/science/article/pii/S0022053104000626},
  \DOIprefix\doi{10.1016/j.jet.2004.02.003}.
\bibitem[{Berger(2007a)}]{berger_browns_2007}
\bibinfo{author}{Berger, U.}, \bibinfo{year}{2007}a.
\newblock \bibinfo{title}{Brown's original fictitious play}.
\newblock \bibinfo{journal}{Journal of Economic Theory} \bibinfo{volume}{135},
  \bibinfo{pages}{572--578}.
\newblock \URLprefix
  \url{https://www.sciencedirect.com/science/article/pii/S0022053106000299},
  \DOIprefix\doi{10.1016/j.jet.2005.12.010}.
\bibitem[{Berger(2007b)}]{berger_two_2007}
\bibinfo{author}{Berger, U.}, \bibinfo{year}{2007}b.
\newblock \bibinfo{title}{Two more classes of games with the continuous-time
  fictitious play property}.
\newblock \bibinfo{journal}{Games and Economic Behavior} \bibinfo{volume}{60},
  \bibinfo{pages}{247--261}.
\newblock \URLprefix
  \url{https://www.sciencedirect.com/science/article/pii/S0899825606001734},
  \DOIprefix\doi{10.1016/j.geb.2006.10.008}.
\bibitem[{Biggar(2022)}]{biggarthesis}
\bibinfo{author}{Biggar, O.}, \bibinfo{year}{2022}.
\newblock \bibinfo{title}{Preference games and sink equilibria}.
\newblock \bibinfo{type}{{B.Sc. Thesis}}. Australian National University.
\bibitem[{Biggar and Shames(2023a)}]{biggar_graph_2023}
\bibinfo{author}{Biggar, O.}, \bibinfo{author}{Shames, I.},
  \bibinfo{year}{2023}a.
\newblock \bibinfo{title}{The graph structure of two-player games}.
\newblock \bibinfo{journal}{Scientific Reports} \bibinfo{volume}{13},
  \bibinfo{pages}{1833}.
\newblock \URLprefix \url{https://www.nature.com/articles/s41598-023-28627-8},
  \DOIprefix\doi{10.1038/s41598-023-28627-8}. \bibinfo{note}{publisher: Nature
  Publishing Group}.
\bibitem[{Biggar and Shames(2023b)}]{biggar_replicator_2023}
\bibinfo{author}{Biggar, O.}, \bibinfo{author}{Shames, I.},
  \bibinfo{year}{2023}b.
\newblock \bibinfo{title}{The {Replicator} {Dynamic}, {Chain} {Components} and
  the {Response} {Graph}}, in: \bibinfo{booktitle}{Proceedings of {The} 34th
  {International} {Conference} on {Algorithmic} {Learning} {Theory}},
  \bibinfo{publisher}{PMLR}. pp. \bibinfo{pages}{237--258}.
\newblock \URLprefix \url{https://proceedings.mlr.press/v201/biggar23a.html}.
  \bibinfo{note}{iSSN: 2640-3498}.
\bibitem[{Biggar and Shames(2024)}]{biggar_attractor_2024}
\bibinfo{author}{Biggar, O.}, \bibinfo{author}{Shames, I.},
  \bibinfo{year}{2024}.
\newblock \bibinfo{title}{The {Attractor} of the {Replicator} {Dynamic} in
  {Zero}-{Sum} {Games}}, in: \bibinfo{editor}{Vernade, C.},
  \bibinfo{editor}{Hsu, D.} (Eds.), \bibinfo{booktitle}{Proceedings of The 35th
  International Conference on Algorithmic Learning Theory},
  \bibinfo{publisher}{PMLR}. pp. \bibinfo{pages}{161--178}.
\newblock \URLprefix \url{https://proceedings.mlr.press/v237/biggar24a.html}.
\bibitem[{Boros et~al.(2016)Boros, Elbassioni, Gurvich, Makino and
  Oudalov}]{boros2016sufficient}
\bibinfo{author}{Boros, E.}, \bibinfo{author}{Elbassioni, K.},
  \bibinfo{author}{Gurvich, V.}, \bibinfo{author}{Makino, K.},
  \bibinfo{author}{Oudalov, V.}, \bibinfo{year}{2016}.
\newblock \bibinfo{title}{Sufficient conditions for the existence of nash
  equilibria in bimatrix games in terms of forbidden $2\times 2$ subgames}.
\newblock \bibinfo{journal}{International journal of game theory}
  \bibinfo{volume}{45}, \bibinfo{pages}{1111--1131}.
\bibitem[{Boyd and Vandenberghe(2004)}]{boyd2004convex}
\bibinfo{author}{Boyd, S.}, \bibinfo{author}{Vandenberghe, L.},
  \bibinfo{year}{2004}.
\newblock \bibinfo{title}{Convex optimization}.
\newblock \bibinfo{publisher}{Cambridge university press}.
\bibitem[{Brown(1949)}]{brown1949some}
\bibinfo{author}{Brown, G.W.}, \bibinfo{year}{1949}.
\newblock \bibinfo{title}{Some notes on computation of games solutions}.
\newblock \bibinfo{publisher}{Rand Corporation}.
\bibitem[{Candogan et~al.(2011)Candogan, Menache, Ozdaglar and
  Parrilo}]{candogan_flows_2011}
\bibinfo{author}{Candogan, O.}, \bibinfo{author}{Menache, I.},
  \bibinfo{author}{Ozdaglar, A.}, \bibinfo{author}{Parrilo, P.A.},
  \bibinfo{year}{2011}.
\newblock \bibinfo{title}{Flows and {Decompositions} of {Games}: {Harmonic} and
  {Potential} {Games}}.
\newblock \bibinfo{journal}{Mathematics of Operations Research}
  \bibinfo{volume}{36}, \bibinfo{pages}{474--503}.
\newblock \URLprefix
  \url{https://pubsonline.informs.org/doi/abs/10.1287/moor.1110.0500},
  \DOIprefix\doi{10.1287/moor.1110.0500}. \bibinfo{note}{publisher: INFORMS}.
\bibitem[{Cesa-Bianchi and Lugosi(2006)}]{cesa2006prediction}
\bibinfo{author}{Cesa-Bianchi, N.}, \bibinfo{author}{Lugosi, G.},
  \bibinfo{year}{2006}.
\newblock \bibinfo{title}{Prediction, learning, and games}.
\newblock \bibinfo{publisher}{Cambridge university press}.
\bibitem[{Cheung and Piliouras(2019)}]{cheung2019vortices}
\bibinfo{author}{Cheung, Y.K.}, \bibinfo{author}{Piliouras, G.},
  \bibinfo{year}{2019}.
\newblock \bibinfo{title}{Vortices instead of equilibria in minmax
  optimization: Chaos and butterfly effects of online learning in zero-sum
  games}, in: \bibinfo{booktitle}{Conference on Learning Theory},
  \bibinfo{organization}{PMLR}. pp. \bibinfo{pages}{807--834}.
\bibitem[{Christodoulou et~al.(2006)Christodoulou, Mirrokni and
  Sidiropoulos}]{christodoulou_convergence_2006}
\bibinfo{author}{Christodoulou, G.}, \bibinfo{author}{Mirrokni, V.S.},
  \bibinfo{author}{Sidiropoulos, A.}, \bibinfo{year}{2006}.
\newblock \bibinfo{title}{Convergence and {Approximation} in {Potential}
  {Games}}, in: \bibinfo{editor}{Durand, B.}, \bibinfo{editor}{Thomas, W.}
  (Eds.), \bibinfo{booktitle}{{STACS} 2006}, \bibinfo{publisher}{Springer},
  \bibinfo{address}{Berlin, Heidelberg}. pp. \bibinfo{pages}{349--360}.
\newblock \DOIprefix\doi{10.1007/11672142\_28}.
\bibitem[{Conley(1978)}]{conley_isolated_1978}
\bibinfo{author}{Conley, C.C.}, \bibinfo{year}{1978}.
\newblock \bibinfo{title}{Isolated invariant sets and the {Morse} index}.
\newblock \bibinfo{publisher}{American Mathematical Soc.}
\newblock \bibinfo{note}{Issue: 38}.
\bibitem[{Cournot(1838)}]{cournot1838recherches}
\bibinfo{author}{Cournot, A.A.}, \bibinfo{year}{1838}.
\newblock \bibinfo{title}{Recherches sur les principes math{\'e}matiques de la
  th{\'e}orie des richesses}. volume~\bibinfo{volume}{48}.
\newblock \bibinfo{publisher}{L. Hachette}.
\bibitem[{Cruz and Simaan(2000)}]{cruz_ordinal_2000}
\bibinfo{author}{Cruz, J.B.}, \bibinfo{author}{Simaan, M.A.},
  \bibinfo{year}{2000}.
\newblock \bibinfo{title}{Ordinal {Games} and {Generalized} {Nash} and
  {Stackelberg} {Solutions}}.
\newblock \bibinfo{journal}{Journal of Optimization Theory and Applications}
  \bibinfo{volume}{107}, \bibinfo{pages}{205--222}.
\newblock \URLprefix \url{https://doi.org/10.1023/A:1026476425031},
  \DOIprefix\doi{10.1023/A:1026476425031}.
\bibitem[{Daskalakis et~al.(2009)Daskalakis, Goldberg and
  Papadimitriou}]{daskalakis_complexity_2009}
\bibinfo{author}{Daskalakis, C.}, \bibinfo{author}{Goldberg, P.W.},
  \bibinfo{author}{Papadimitriou, C.H.}, \bibinfo{year}{2009}.
\newblock \bibinfo{title}{The complexity of computing a {Nash} equilibrium}.
\newblock \bibinfo{journal}{Communications of the ACM} \bibinfo{volume}{52},
  \bibinfo{pages}{89--97}.
\newblock \URLprefix \url{https://doi.org/10.1145/1461928.1461951},
  \DOIprefix\doi{10.1145/1461928.1461951}.
\bibitem[{Durieu et~al.(2008)Durieu, Haller, Qu{\'e}rou and
  Solal}]{durieu2008ordinal}
\bibinfo{author}{Durieu, J.}, \bibinfo{author}{Haller, H.},
  \bibinfo{author}{Qu{\'e}rou, N.}, \bibinfo{author}{Solal, P.},
  \bibinfo{year}{2008}.
\newblock \bibinfo{title}{Ordinal games}.
\newblock \bibinfo{journal}{International Game Theory Review}
  \bibinfo{volume}{10}, \bibinfo{pages}{177--194}.
\bibitem[{Fabrikant et~al.(2010)Fabrikant, Jaggard and
  Schapira}]{fabrikant_structure_2010}
\bibinfo{author}{Fabrikant, A.}, \bibinfo{author}{Jaggard, A.D.},
  \bibinfo{author}{Schapira, M.}, \bibinfo{year}{2010}.
\newblock \bibinfo{title}{On the {Structure} of {Weakly} {Acyclic} {Games}},
  in: \bibinfo{editor}{Kontogiannis, S.}, \bibinfo{editor}{Koutsoupias, E.},
  \bibinfo{editor}{Spirakis, P.G.} (Eds.), \bibinfo{booktitle}{Algorithmic
  {Game} {Theory}}, \bibinfo{publisher}{Springer}, \bibinfo{address}{Berlin,
  Heidelberg}. pp. \bibinfo{pages}{126--137}.
\newblock \DOIprefix\doi{10.1007/978-3-642-16170-4\_12}.
\bibitem[{Fabrikant et~al.(2004)Fabrikant, Papadimitriou and
  Talwar}]{fabrikant_complexity_2004}
\bibinfo{author}{Fabrikant, A.}, \bibinfo{author}{Papadimitriou, C.},
  \bibinfo{author}{Talwar, K.}, \bibinfo{year}{2004}.
\newblock \bibinfo{title}{The complexity of pure {Nash} equilibria}, in:
  \bibinfo{booktitle}{Proceedings of the thirty-sixth annual {ACM} symposium on
  {Theory} of computing}, \bibinfo{publisher}{Association for Computing
  Machinery}, \bibinfo{address}{New York, NY, USA}. pp.
  \bibinfo{pages}{604--612}.
\newblock \URLprefix \url{https://doi.org/10.1145/1007352.1007445},
  \DOIprefix\doi{10.1145/1007352.1007445}.
\bibitem[{Fabrikant and Papadimitriou(2008)}]{fabrikant2008complexity}
\bibinfo{author}{Fabrikant, A.}, \bibinfo{author}{Papadimitriou, C.H.},
  \bibinfo{year}{2008}.
\newblock \bibinfo{title}{The complexity of game dynamics: Bgp oscillations,
  sink equilibria, and beyond.}, in: \bibinfo{booktitle}{SODA},
  \bibinfo{organization}{Citeseer}. pp. \bibinfo{pages}{844--853}.
\bibitem[{Foster and Young(1998)}]{foster_nonconvergence_1998}
\bibinfo{author}{Foster, D.P.}, \bibinfo{author}{Young, H.P.},
  \bibinfo{year}{1998}.
\newblock \bibinfo{title}{On the {Nonconvergence} of {Fictitious} {Play} in
  {Coordination} {Games}}.
\newblock \bibinfo{journal}{Games and Economic Behavior} \bibinfo{volume}{25},
  \bibinfo{pages}{79--96}.
\newblock \URLprefix
  \url{https://www.sciencedirect.com/science/article/pii/S0899825697906266},
  \DOIprefix\doi{10.1006/game.1997.0626}.
\bibitem[{Fudenberg and Tirole(1991)}]{fudenberg1991game}
\bibinfo{author}{Fudenberg, D.}, \bibinfo{author}{Tirole, J.},
  \bibinfo{year}{1991}.
\newblock \bibinfo{title}{Game Theory}.
\newblock \bibinfo{publisher}{MIT press}.
\bibitem[{Galla and Farmer(2013)}]{galla_complex_2013}
\bibinfo{author}{Galla, T.}, \bibinfo{author}{Farmer, J.D.},
  \bibinfo{year}{2013}.
\newblock \bibinfo{title}{Complex dynamics in learning complicated games}.
\newblock \bibinfo{journal}{Proceedings of the National Academy of Sciences}
  \bibinfo{volume}{110}, \bibinfo{pages}{1232--1236}.
\newblock \URLprefix
  \url{https://www.pnas.org/doi/abs/10.1073/pnas.1109672110},
  \DOIprefix\doi{10.1073/pnas.1109672110}. \bibinfo{note}{publisher:
  Proceedings of the National Academy of Sciences}.
\bibitem[{Gaunersdorfer and Hofbauer(1995)}]{gaunersdorfer_fictitious_1995}
\bibinfo{author}{Gaunersdorfer, A.}, \bibinfo{author}{Hofbauer, J.},
  \bibinfo{year}{1995}.
\newblock \bibinfo{title}{Fictitious {Play}, {Shapley} {Polygons}, and the
  {Replicator} {Equation}}.
\newblock \bibinfo{journal}{Games and Economic Behavior} \bibinfo{volume}{11},
  \bibinfo{pages}{279--303}.
\newblock \URLprefix
  \url{https://www.sciencedirect.com/science/article/pii/S0899825685710524},
  \DOIprefix\doi{10.1006/game.1995.1052}.
\bibitem[{Goemans et~al.(2005)Goemans, Mirrokni and Vetta}]{goemans_sink_2005}
\bibinfo{author}{Goemans, M.}, \bibinfo{author}{Mirrokni, V.},
  \bibinfo{author}{Vetta, A.}, \bibinfo{year}{2005}.
\newblock \bibinfo{title}{Sink equilibria and convergence}, in:
  \bibinfo{booktitle}{46th {Annual} {IEEE} {Symposium} on {Foundations} of
  {Computer} {Science} ({FOCS}'05)}, pp. \bibinfo{pages}{142--151}.
\newblock \DOIprefix\doi{10.1109/SFCS.2005.68}. \bibinfo{note}{iSSN:
  0272-5428}.
\bibitem[{Goldberg et~al.(2013)Goldberg, Papadimitriou and
  Savani}]{goldberg_complexity_2013}
\bibinfo{author}{Goldberg, P.W.}, \bibinfo{author}{Papadimitriou, C.H.},
  \bibinfo{author}{Savani, R.}, \bibinfo{year}{2013}.
\newblock \bibinfo{title}{The complexity of the homotopy method, equilibrium
  selection, and {Lemke}-{Howson} solutions}.
\newblock \bibinfo{journal}{ACM Transactions on Economics and Computation
  (TEAC)} \bibinfo{volume}{1}, \bibinfo{pages}{1--25}.
\newblock \bibinfo{note}{Publisher: ACM New York, NY, USA}.
\bibitem[{Hakim et~al.(2024)Hakim, Milionis, Papadimitriou and
  Piliouras}]{hakim2024swim}
\bibinfo{author}{Hakim, R.}, \bibinfo{author}{Milionis, J.},
  \bibinfo{author}{Papadimitriou, C.}, \bibinfo{author}{Piliouras, G.},
  \bibinfo{year}{2024}.
\newblock \bibinfo{title}{Swim till you sink: Computing the limit of a game},
  in: \bibinfo{booktitle}{International Symposium on Algorithmic Game Theory},
  \bibinfo{organization}{Springer}. pp. \bibinfo{pages}{205--222}.
\bibitem[{Hannan(1957)}]{hannan1957approximation}
\bibinfo{author}{Hannan, J.}, \bibinfo{year}{1957}.
\newblock \bibinfo{title}{Approximation to bayes risk in repeated play}.
\newblock \bibinfo{journal}{Contributions to the Theory of Games}
  \bibinfo{volume}{3}, \bibinfo{pages}{97--139}.
\bibitem[{Harsanyi et~al.(1988)Harsanyi, Selten et~al.}]{harsanyi1988general}
\bibinfo{author}{Harsanyi, J.C.}, \bibinfo{author}{Selten, R.}, et~al.,
  \bibinfo{year}{1988}.
\newblock \bibinfo{title}{A general theory of equilibrium selection in games}.
\newblock \bibinfo{journal}{MIT Press Books} \bibinfo{volume}{1}.
\bibitem[{Hart and Mas-Colell(2003)}]{hart_uncoupled_2003}
\bibinfo{author}{Hart, S.}, \bibinfo{author}{Mas-Colell, A.},
  \bibinfo{year}{2003}.
\newblock \bibinfo{title}{Uncoupled {Dynamics} {Do} {Not} {Lead} to {Nash}
  {Equilibrium}}.
\newblock \bibinfo{journal}{American Economic Review} \bibinfo{volume}{93},
  \bibinfo{pages}{1830--1836}.
\newblock \URLprefix
  \url{https://www.aeaweb.org/articles?id=10.1257/000282803322655581},
  \DOIprefix\doi{10.1257/000282803322655581}.
\bibitem[{Hofbauer and Sandholm(2011)}]{hofbauer2011survival}
\bibinfo{author}{Hofbauer, J.}, \bibinfo{author}{Sandholm, W.H.},
  \bibinfo{year}{2011}.
\newblock \bibinfo{title}{Survival of dominated strategies under evolutionary
  dynamics}.
\newblock \bibinfo{journal}{Theoretical Economics} \bibinfo{volume}{6},
  \bibinfo{pages}{341--377}.
\bibitem[{Hofbauer and Sigmund(2003)}]{hofbauer_evolutionary_2003}
\bibinfo{author}{Hofbauer, J.}, \bibinfo{author}{Sigmund, K.},
  \bibinfo{year}{2003}.
\newblock \bibinfo{title}{Evolutionary game dynamics}.
\newblock \bibinfo{journal}{Bulletin of the American Mathematical Society}
  \bibinfo{volume}{40}, \bibinfo{pages}{479--519}.
\newblock \URLprefix
  \url{https://www.ams.org/bull/2003-40-04/S0273-0979-03-00988-1/},
  \DOIprefix\doi{10.1090/S0273-0979-03-00988-1}.
\bibitem[{Hofbauer et~al.(2009)Hofbauer, Sorin and
  Viossat}]{hofbauer_time_2009}
\bibinfo{author}{Hofbauer, J.}, \bibinfo{author}{Sorin, S.},
  \bibinfo{author}{Viossat, Y.}, \bibinfo{year}{2009}.
\newblock \bibinfo{title}{Time {Average} {Replicator} and {Best}-{Reply}
  {Dynamics}}.
\newblock \bibinfo{journal}{Mathematics of Operations Research}
  \bibinfo{volume}{34}, \bibinfo{pages}{263--269}.
\newblock \URLprefix
  \url{https://pubsonline.informs.org/doi/abs/10.1287/moor.1080.0359},
  \DOIprefix\doi{10.1287/moor.1080.0359}. \bibinfo{note}{publisher: INFORMS}.
\bibitem[{Hwang and Rey-Bellet(2020)}]{hwang_strategic_2020}
\bibinfo{author}{Hwang, S.H.}, \bibinfo{author}{Rey-Bellet, L.},
  \bibinfo{year}{2020}.
\newblock \bibinfo{title}{Strategic decompositions of normal form games:
  {Zero}-sum games and potential games}.
\newblock \bibinfo{journal}{Games and Economic Behavior} \bibinfo{volume}{122},
  \bibinfo{pages}{370--390}.
\newblock \URLprefix
  \url{https://www.sciencedirect.com/science/article/pii/S0899825620300725},
  \DOIprefix\doi{10.1016/j.geb.2020.05.003}.
\bibitem[{Jordan(1993)}]{jordan_three_1993}
\bibinfo{author}{Jordan, J.S.}, \bibinfo{year}{1993}.
\newblock \bibinfo{title}{Three {Problems} in {Learning} {Mixed}-{Strategy}
  {Nash} {Equilibria}}.
\newblock \bibinfo{journal}{Games and Economic Behavior} \bibinfo{volume}{5},
  \bibinfo{pages}{368--386}.
\newblock \URLprefix
  \url{https://www.sciencedirect.com/science/article/pii/S0899825683710225},
  \DOIprefix\doi{10.1006/game.1993.1022}.
\bibitem[{Kemeny et~al.(1969)Kemeny, Snell et~al.}]{kemeny1969finite}
\bibinfo{author}{Kemeny, J.G.}, \bibinfo{author}{Snell, J.L.}, et~al.,
  \bibinfo{year}{1969}.
\newblock \bibinfo{title}{Finite markov chains}. volume~\bibinfo{volume}{26}.
\newblock \bibinfo{publisher}{van Nostrand Princeton, NJ}.
\bibitem[{Kleinberg et~al.(2011)Kleinberg, Ligett, Piliouras and
  Tardos}]{kleinberg_beyond_2011}
\bibinfo{author}{Kleinberg, R.D.}, \bibinfo{author}{Ligett, K.},
  \bibinfo{author}{Piliouras, G.}, \bibinfo{author}{Tardos, E.},
  \bibinfo{year}{2011}.
\newblock \bibinfo{title}{Beyond the {Nash} {Equilibrium} {Barrier}.}, in:
  \bibinfo{booktitle}{ICS}, pp. \bibinfo{pages}{125--140}.
\bibitem[{Koutsoupias and Papadimitriou(1999)}]{koutsoupias1999worst}
\bibinfo{author}{Koutsoupias, E.}, \bibinfo{author}{Papadimitriou, C.},
  \bibinfo{year}{1999}.
\newblock \bibinfo{title}{Worst-case equilibria}, in:
  \bibinfo{booktitle}{Annual symposium on theoretical aspects of computer
  science}, \bibinfo{organization}{Springer}. pp. \bibinfo{pages}{404--413}.
\bibitem[{Krishna(1992)}]{krishna1992learning}
\bibinfo{author}{Krishna, V.}, \bibinfo{year}{1992}.
\newblock \bibinfo{title}{Learning in games with strategic complementarities}.
\newblock \bibinfo{publisher}{Harvard Business School}.
\bibitem[{Krishna and Sjöström(1998)}]{krishna_convergence_1998}
\bibinfo{author}{Krishna, V.}, \bibinfo{author}{Sjöström, T.},
  \bibinfo{year}{1998}.
\newblock \bibinfo{title}{On the {Convergence} of {Fictitious} {Play}}.
\newblock \bibinfo{journal}{Mathematics of Operations Research}
  \bibinfo{volume}{23}, \bibinfo{pages}{479--511}.
\newblock \URLprefix \url{https://www.jstor.org/stable/3690523}.
  \bibinfo{note}{publisher: INFORMS}.
\bibitem[{Luce(1959)}]{luce1959individual}
\bibinfo{author}{Luce, R.D.}, \bibinfo{year}{1959}.
\newblock \bibinfo{title}{Individual choice behavior}.
  volume~\bibinfo{volume}{4}.
\newblock \bibinfo{publisher}{Wiley New York}.
\bibitem[{Mertikopoulos and Sandholm(2016)}]{mertikopoulos2016learning}
\bibinfo{author}{Mertikopoulos, P.}, \bibinfo{author}{Sandholm, W.H.},
  \bibinfo{year}{2016}.
\newblock \bibinfo{title}{Learning in games via reinforcement and
  regularization}.
\newblock \bibinfo{journal}{Mathematics of Operations Research}
  \bibinfo{volume}{41}, \bibinfo{pages}{1297--1324}.
\bibitem[{Milgrom and Roberts(1990)}]{milgrom_rationalizability_1990}
\bibinfo{author}{Milgrom, P.}, \bibinfo{author}{Roberts, J.},
  \bibinfo{year}{1990}.
\newblock \bibinfo{title}{Rationalizability, {Learning}, and {Equilibrium} in
  {Games} with {Strategic} {Complementarities}}.
\newblock \bibinfo{journal}{Econometrica} \bibinfo{volume}{58},
  \bibinfo{pages}{1255--1277}.
\newblock \URLprefix \url{https://www.jstor.org/stable/2938316},
  \DOIprefix\doi{10.2307/2938316}. \bibinfo{note}{publisher: [Wiley,
  Econometric Society]}.
\bibitem[{Milgrom and Roberts(1991)}]{milgrom_adaptive_1991}
\bibinfo{author}{Milgrom, P.}, \bibinfo{author}{Roberts, J.},
  \bibinfo{year}{1991}.
\newblock \bibinfo{title}{Adaptive and sophisticated learning in normal form
  games}.
\newblock \bibinfo{journal}{Games and Economic Behavior} \bibinfo{volume}{3},
  \bibinfo{pages}{82--100}.
\newblock \URLprefix
  \url{https://www.sciencedirect.com/science/article/pii/089982569190006Z},
  \DOIprefix\doi{10.1016/0899-8256(91)90006-Z}.
\bibitem[{Milionis et~al.(2023)Milionis, Papadimitriou, Piliouras and
  Spendlove}]{milionis_impossibility_2023}
\bibinfo{author}{Milionis, J.}, \bibinfo{author}{Papadimitriou, C.},
  \bibinfo{author}{Piliouras, G.}, \bibinfo{author}{Spendlove, K.},
  \bibinfo{year}{2023}.
\newblock \bibinfo{title}{An impossibility theorem in game dynamics}.
\newblock \bibinfo{journal}{Proceedings of the National Academy of Sciences}
  \bibinfo{volume}{120}, \bibinfo{pages}{e2305349120}.
\newblock \URLprefix
  \url{https://www.pnas.org/doi/abs/10.1073/pnas.2305349120},
  \DOIprefix\doi{10.1073/pnas.2305349120}. \bibinfo{note}{publisher:
  Proceedings of the National Academy of Sciences}.
\bibitem[{Mirrokni and Skopalik(2009)}]{mirrokni_complexity_2009}
\bibinfo{author}{Mirrokni, V.S.}, \bibinfo{author}{Skopalik, A.},
  \bibinfo{year}{2009}.
\newblock \bibinfo{title}{On the complexity of nash dynamics and sink
  equilibria}, in: \bibinfo{booktitle}{Proceedings of the 10th {ACM} conference
  on {Electronic} commerce}, \bibinfo{publisher}{Association for Computing
  Machinery}, \bibinfo{address}{New York, NY, USA}. pp. \bibinfo{pages}{1--10}.
\newblock \URLprefix \url{https://doi.org/10.1145/1566374.1566376},
  \DOIprefix\doi{10.1145/1566374.1566376}.
\bibitem[{Mirrokni and Vetta(2004)}]{mirrokni_convergence_2004}
\bibinfo{author}{Mirrokni, V.S.}, \bibinfo{author}{Vetta, A.},
  \bibinfo{year}{2004}.
\newblock \bibinfo{title}{Convergence {Issues} in {Competitive} {Games}}, in:
  \bibinfo{editor}{Jansen, K.}, \bibinfo{editor}{Khanna, S.},
  \bibinfo{editor}{Rolim, J.D.P.}, \bibinfo{editor}{Ron, D.} (Eds.),
  \bibinfo{booktitle}{Approximation, {Randomization}, and {Combinatorial}
  {Optimization}. {Algorithms} and {Techniques}},
  \bibinfo{publisher}{Springer}, \bibinfo{address}{Berlin, Heidelberg}. pp.
  \bibinfo{pages}{183--194}.
\newblock \DOIprefix\doi{10.1007/978-3-540-27821-4\_17}.
\bibitem[{Miyasawa(1961)}]{miyasawa1961convergence}
\bibinfo{author}{Miyasawa, K.}, \bibinfo{year}{1961}.
\newblock \bibinfo{title}{On the convergence of the learning process in a 2 x 2
  non-zero-sum two-person game}.
\newblock \bibinfo{publisher}{Princeton University Princeton}.
\bibitem[{Monderer and Sela(1997)}]{monderer_fictitious_1997}
\bibinfo{author}{Monderer, D.}, \bibinfo{author}{Sela, A.},
  \bibinfo{year}{1997}.
\newblock \bibinfo{title}{Fictitious play and no-cycling conditions}.
\newblock \bibinfo{journal}{None} .
\bibitem[{Monderer and Shapley(1996a)}]{monderer_fictitious_1996}
\bibinfo{author}{Monderer, D.}, \bibinfo{author}{Shapley, L.S.},
  \bibinfo{year}{1996}a.
\newblock \bibinfo{title}{Fictitious {Play} {Property} for {Games} with
  {Identical} {Interests}}.
\newblock \bibinfo{journal}{Journal of Economic Theory} \bibinfo{volume}{68},
  \bibinfo{pages}{258--265}.
\newblock \URLprefix
  \url{https://www.sciencedirect.com/science/article/pii/S0022053196900149},
  \DOIprefix\doi{10.1006/jeth.1996.0014}.
\bibitem[{Monderer and Shapley(1996b)}]{monderer_potential_1996}
\bibinfo{author}{Monderer, D.}, \bibinfo{author}{Shapley, L.S.},
  \bibinfo{year}{1996}b.
\newblock \bibinfo{title}{Potential {Games}}.
\newblock \bibinfo{journal}{Games and Economic Behavior} \bibinfo{volume}{14},
  \bibinfo{pages}{124--143}.
\newblock \URLprefix
  \url{https://www.sciencedirect.com/science/article/pii/S0899825696900445},
  \DOIprefix\doi{10.1006/game.1996.0044}.
\bibitem[{Moulin(1984)}]{moulin_dominance_1984}
\bibinfo{author}{Moulin, H.}, \bibinfo{year}{1984}.
\newblock \bibinfo{title}{Dominance solvability and cournot stability}.
\newblock \bibinfo{journal}{Mathematical Social Sciences} \bibinfo{volume}{7},
  \bibinfo{pages}{83--102}.
\newblock \URLprefix
  \url{https://www.sciencedirect.com/science/article/pii/0165489684900908},
  \DOIprefix\doi{10.1016/0165-4896(84)90090-8}.
\bibitem[{Moulin and Vial(1978)}]{moulin1978strategically}
\bibinfo{author}{Moulin, H.}, \bibinfo{author}{Vial, J.P.},
  \bibinfo{year}{1978}.
\newblock \bibinfo{title}{Strategically zero-sum games: the class of games
  whose completely mixed equilibria cannot be improved upon}.
\newblock \bibinfo{journal}{International Journal of Game Theory}
  \bibinfo{volume}{7}, \bibinfo{pages}{201--221}.
\bibitem[{Myerson(1997)}]{myerson1997game}
\bibinfo{author}{Myerson, R.B.}, \bibinfo{year}{1997}.
\newblock \bibinfo{title}{Game theory: analysis of conflict}.
\newblock \bibinfo{publisher}{Harvard university press}.
\bibitem[{Nash(1951)}]{nash_non-cooperative_1951}
\bibinfo{author}{Nash, J.}, \bibinfo{year}{1951}.
\newblock \bibinfo{title}{Non-{Cooperative} {Games}}.
\newblock \bibinfo{journal}{Annals of Mathematics} \bibinfo{volume}{54},
  \bibinfo{pages}{286--295}.
\newblock \URLprefix \url{https://www.jstor.org/stable/1969529},
  \DOIprefix\doi{10.2307/1969529}. \bibinfo{note}{publisher: Annals of
  Mathematics}.
\bibitem[{Omidshafiei et~al.(2019)Omidshafiei, Papadimitriou, Piliouras, Tuyls,
  Rowland, Lespiau, Czarnecki, Lanctot, Perolat and
  Munos}]{omidshafiei_-rank_2019}
\bibinfo{author}{Omidshafiei, S.}, \bibinfo{author}{Papadimitriou, C.},
  \bibinfo{author}{Piliouras, G.}, \bibinfo{author}{Tuyls, K.},
  \bibinfo{author}{Rowland, M.}, \bibinfo{author}{Lespiau, J.B.},
  \bibinfo{author}{Czarnecki, W.M.}, \bibinfo{author}{Lanctot, M.},
  \bibinfo{author}{Perolat, J.}, \bibinfo{author}{Munos, R.},
  \bibinfo{year}{2019}.
\newblock \bibinfo{title}{$\alpha$-{Rank}: {Multi}-{Agent} {Evaluation} by
  {Evolution}}.
\newblock \bibinfo{journal}{Scientific Reports} \bibinfo{volume}{9},
  \bibinfo{pages}{9937}.
\newblock \URLprefix \url{https://www.nature.com/articles/s41598-019-45619-9},
  \DOIprefix\doi{10.1038/s41598-019-45619-9}.
\bibitem[{Omidshafiei et~al.(2020)Omidshafiei, Tuyls, Czarnecki, Santos,
  Rowland, Connor, Hennes, Muller, Pérolat, Vylder, Gruslys and
  Munos}]{omidshafiei_navigating_2020}
\bibinfo{author}{Omidshafiei, S.}, \bibinfo{author}{Tuyls, K.},
  \bibinfo{author}{Czarnecki, W.M.}, \bibinfo{author}{Santos, F.C.},
  \bibinfo{author}{Rowland, M.}, \bibinfo{author}{Connor, J.},
  \bibinfo{author}{Hennes, D.}, \bibinfo{author}{Muller, P.},
  \bibinfo{author}{Pérolat, J.}, \bibinfo{author}{Vylder, B.D.},
  \bibinfo{author}{Gruslys, A.}, \bibinfo{author}{Munos, R.},
  \bibinfo{year}{2020}.
\newblock \bibinfo{title}{Navigating the landscape of multiplayer games}.
\newblock \bibinfo{journal}{Nature Communications} \bibinfo{volume}{11},
  \bibinfo{pages}{5603}.
\newblock \URLprefix \url{https://www.nature.com/articles/s41467-020-19244-4},
  \DOIprefix\doi{10.1038/s41467-020-19244-4}. \bibinfo{note}{number: 1
  Publisher: Nature Publishing Group}.
\bibitem[{Pangallo et~al.(2019)Pangallo, Heinrich and
  Doyne~Farmer}]{pangallo_best_2019}
\bibinfo{author}{Pangallo, M.}, \bibinfo{author}{Heinrich, T.},
  \bibinfo{author}{Doyne~Farmer, J.}, \bibinfo{year}{2019}.
\newblock \bibinfo{title}{Best reply structure and equilibrium convergence in
  generic games}.
\newblock \bibinfo{journal}{Science Advances} \bibinfo{volume}{5},
  \bibinfo{pages}{eaat1328}.
\newblock \URLprefix
  \url{https://www.science.org/doi/full/10.1126/sciadv.aat1328},
  \DOIprefix\doi{10.1126/sciadv.aat1328}. \bibinfo{note}{publisher: American
  Association for the Advancement of Science}.
\bibitem[{Papadimitriou and Piliouras(2016)}]{papadimitriou2016nash}
\bibinfo{author}{Papadimitriou, C.}, \bibinfo{author}{Piliouras, G.},
  \bibinfo{year}{2016}.
\newblock \bibinfo{title}{From nash equilibria to chain recurrent sets:
  Solution concepts and topology}, in: \bibinfo{booktitle}{Proceedings of the
  2016 ACM Conference on Innovations in Theoretical Computer Science}, pp.
  \bibinfo{pages}{227--235}.
\bibitem[{Papadimitriou and Piliouras(2018)}]{papadimitriou_nash_2018}
\bibinfo{author}{Papadimitriou, C.}, \bibinfo{author}{Piliouras, G.},
  \bibinfo{year}{2018}.
\newblock \bibinfo{title}{From nash equilibria to chain recurrent sets: {An}
  algorithmic solution concept for game theory}.
\newblock \bibinfo{journal}{Entropy} \bibinfo{volume}{20},
  \bibinfo{pages}{782}.
\newblock \bibinfo{note}{Publisher: MDPI}.
\bibitem[{Papadimitriou and Piliouras(2019)}]{papadimitriou_game_2019}
\bibinfo{author}{Papadimitriou, C.}, \bibinfo{author}{Piliouras, G.},
  \bibinfo{year}{2019}.
\newblock \bibinfo{title}{Game dynamics as the meaning of a game}.
\newblock \bibinfo{journal}{ACM SIGecom Exchanges} \bibinfo{volume}{16},
  \bibinfo{pages}{53--63}.
\newblock \bibinfo{note}{Publisher: ACM New York, NY, USA}.
\bibitem[{Papadimitriou and Roughgarden(2008)}]{papadimitriou_computing_2008}
\bibinfo{author}{Papadimitriou, C.H.}, \bibinfo{author}{Roughgarden, T.},
  \bibinfo{year}{2008}.
\newblock \bibinfo{title}{Computing correlated equilibria in multi-player
  games}.
\newblock \bibinfo{journal}{Journal of the ACM} \bibinfo{volume}{55},
  \bibinfo{pages}{14:1--14:29}.
\newblock \URLprefix \url{https://doi.org/10.1145/1379759.1379762},
  \DOIprefix\doi{10.1145/1379759.1379762}.
\bibitem[{Robinson(1951)}]{robinson_iterative_1951}
\bibinfo{author}{Robinson, J.}, \bibinfo{year}{1951}.
\newblock \bibinfo{title}{An {Iterative} {Method} of {Solving} a {Game}}.
\newblock \bibinfo{journal}{Annals of Mathematics} \bibinfo{volume}{54},
  \bibinfo{pages}{296--301}.
\newblock \URLprefix \url{https://www.jstor.org/stable/1969530},
  \DOIprefix\doi{10.2307/1969530}. \bibinfo{note}{publisher: Annals of
  Mathematics}.
\bibitem[{Rosenthal(1973)}]{rosenthal_class_1973}
\bibinfo{author}{Rosenthal, R.W.}, \bibinfo{year}{1973}.
\newblock \bibinfo{title}{A class of games possessing pure-strategy {Nash}
  equilibria}.
\newblock \bibinfo{journal}{International Journal of Game Theory}
  \bibinfo{volume}{2}, \bibinfo{pages}{65--67}.
\newblock \bibinfo{note}{Publisher: Physica-Verlag}.
\bibitem[{Roughgarden(2010)}]{roughgarden2010algorithmic}
\bibinfo{author}{Roughgarden, T.}, \bibinfo{year}{2010}.
\newblock \bibinfo{title}{Algorithmic game theory}.
\newblock \bibinfo{journal}{Communications of the ACM} \bibinfo{volume}{53},
  \bibinfo{pages}{78--86}.
\bibitem[{Sandholm(2010)}]{sandholm2010population}
\bibinfo{author}{Sandholm, W.H.}, \bibinfo{year}{2010}.
\newblock \bibinfo{title}{Population games and evolutionary dynamics}.
\newblock \bibinfo{publisher}{MIT press}.
\bibitem[{Sato et~al.(2002)Sato, Akiyama and Farmer}]{sato_chaos_2002}
\bibinfo{author}{Sato, Y.}, \bibinfo{author}{Akiyama, E.},
  \bibinfo{author}{Farmer, J.D.}, \bibinfo{year}{2002}.
\newblock \bibinfo{title}{Chaos in learning a simple two-person game}.
\newblock \bibinfo{journal}{Proceedings of the National Academy of Sciences}
  \bibinfo{volume}{99}, \bibinfo{pages}{4748--4751}.
\newblock \URLprefix \url{https://pnas.org/doi/full/10.1073/pnas.032086299},
  \DOIprefix\doi{10.1073/pnas.032086299}.
\bibitem[{Shalev-Shwartz et~al.(2012)}]{shalev2012online}
\bibinfo{author}{Shalev-Shwartz, S.}, et~al., \bibinfo{year}{2012}.
\newblock \bibinfo{title}{Online learning and online convex optimization}.
\newblock \bibinfo{journal}{Foundations and Trends{\textregistered} in Machine
  Learning} \bibinfo{volume}{4}, \bibinfo{pages}{107--194}.
\bibitem[{Shapley(1964)}]{shapley_topics_1964}
\bibinfo{author}{Shapley, L.}, \bibinfo{year}{1964}.
\newblock \bibinfo{title}{Some topics in two-person games}.
\newblock \bibinfo{journal}{Advances in game theory} \bibinfo{volume}{52},
  \bibinfo{pages}{1--29}.
\bibitem[{Smith and Price(1973)}]{smith1973logic}
\bibinfo{author}{Smith, J.M.}, \bibinfo{author}{Price, G.R.},
  \bibinfo{year}{1973}.
\newblock \bibinfo{title}{The logic of animal conflict}.
\newblock \bibinfo{journal}{Nature} \bibinfo{volume}{246},
  \bibinfo{pages}{15--18}.
\bibitem[{Takahashi and Yamamori(2002)}]{takahashi2002pure}
\bibinfo{author}{Takahashi, S.}, \bibinfo{author}{Yamamori, T.},
  \bibinfo{year}{2002}.
\newblock \bibinfo{title}{The pure nash equilibrium property and the
  quasi-acyclic condition}.
\newblock \bibinfo{journal}{Economics bulletin} \bibinfo{volume}{3},
  \bibinfo{pages}{1--6}.
\bibitem[{Taylor and Jonker(1978)}]{taylor_evolutionary_1978}
\bibinfo{author}{Taylor, P.D.}, \bibinfo{author}{Jonker, L.B.},
  \bibinfo{year}{1978}.
\newblock \bibinfo{title}{Evolutionary stable strategies and game dynamics}.
\newblock \bibinfo{journal}{Mathematical Biosciences} \bibinfo{volume}{40},
  \bibinfo{pages}{145--156}.
\newblock \URLprefix
  \url{https://www.sciencedirect.com/science/article/pii/0025556478900779},
  \DOIprefix\doi{10.1016/0025-5564(78)90077-9}.
\bibitem[{Topkis(1979)}]{topkis1979equilibrium}
\bibinfo{author}{Topkis, D.M.}, \bibinfo{year}{1979}.
\newblock \bibinfo{title}{Equilibrium points in nonzero-sum n-person submodular
  games}.
\newblock \bibinfo{journal}{Siam Journal on control and optimization}
  \bibinfo{volume}{17}, \bibinfo{pages}{773--787}.
\bibitem[{Viossat and Zapechelnyuk(2013)}]{viossat_no-regret_2013}
\bibinfo{author}{Viossat, Y.}, \bibinfo{author}{Zapechelnyuk, A.},
  \bibinfo{year}{2013}.
\newblock \bibinfo{title}{No-regret dynamics and fictitious play}.
\newblock \bibinfo{journal}{Journal of Economic Theory} \bibinfo{volume}{148},
  \bibinfo{pages}{825--842}.
\newblock \URLprefix
  \url{https://www.sciencedirect.com/science/article/pii/S0022053113000112},
  \DOIprefix\doi{10.1016/j.jet.2012.07.003}.
\bibitem[{Vives(1990)}]{vives1990nash}
\bibinfo{author}{Vives, X.}, \bibinfo{year}{1990}.
\newblock \bibinfo{title}{Nash equilibrium with strategic complementarities}.
\newblock \bibinfo{journal}{Journal of Mathematical Economics}
  \bibinfo{volume}{19}, \bibinfo{pages}{305--321}.
\bibitem[{Vlatakis-Gkaragkounis et~al.(2020)Vlatakis-Gkaragkounis, Flokas,
  Lianeas, Mertikopoulos and Piliouras}]{vlatakis-gkaragkounis_no-regret_2020}
\bibinfo{author}{Vlatakis-Gkaragkounis, E.V.}, \bibinfo{author}{Flokas, L.},
  \bibinfo{author}{Lianeas, T.}, \bibinfo{author}{Mertikopoulos, P.},
  \bibinfo{author}{Piliouras, G.}, \bibinfo{year}{2020}.
\newblock \bibinfo{title}{No-{Regret} {Learning} and {Mixed} {Nash}
  {Equilibria}: {They} {Do} {Not} {Mix}}, in: \bibinfo{booktitle}{Advances in
  {Neural} {Information} {Processing} {Systems}}, \bibinfo{publisher}{Curran
  Associates, Inc.}. pp. \bibinfo{pages}{1380--1391}.
\newblock \URLprefix
  \url{https://proceedings.neurips.cc/paper/2020/hash/0ed9422357395a0d4879191c66f4faa2-Abstract.html}.
\bibitem[{Von~Neumann and Morgenstern(1944)}]{von2007theory}
\bibinfo{author}{Von~Neumann, J.}, \bibinfo{author}{Morgenstern, O.},
  \bibinfo{year}{1944}.
\newblock \bibinfo{title}{Theory of games and economic behavior}.
\newblock \bibinfo{publisher}{Princeton university press}.
\bibitem[{Young(1993)}]{young_evolution_1993}
\bibinfo{author}{Young, H.P.}, \bibinfo{year}{1993}.
\newblock \bibinfo{title}{The {Evolution} of {Conventions}}.
\newblock \bibinfo{journal}{Econometrica} \bibinfo{volume}{61},
  \bibinfo{pages}{57--84}.
\newblock \URLprefix \url{https://www.jstor.org/stable/2951778},
  \DOIprefix\doi{10.2307/2951778}. \bibinfo{note}{publisher: [Wiley,
  Econometric Society]}.

\end{thebibliography}

\end{document}